\def\newremark#1{\@ifnextchar[{\@orem{#1}}{\@nrem{#1}}}
\def\@nrem#1#2{%
\@ifnextchar[{\@xnrem{#1}{#2}}{\@ynrem{#1}{#2}}}
\def\@xnrem#1#2[#3]{\expandafter\@ifdefinable\csname #1\endcsname
{\@definecounter{#1}\@addtoreset{#1}{#3}%
\expandafter\xdef\csname the#1\endcsname{\expandafter\noexpand
    \csname the#3\endcsname \@remcountersep \@remcounter{#1}}%
\global\@namedef{#1}{\@rem{#1}{#2}}\global\@namedef{end#1}{\@endremark}}}
\def\@ynrem#1#2{\expandafter\@ifdefinable\csname #1\endcsname
{\@definecounter{#1}%
\expandafter\xdef\csname the#1\endcsname{\@remcounter{#1}}%
\global\@namedef{#1}{\@rem{#1}{#2}}\global\@namedef{end#1}{\@endremark}}}
\def\@orem#1[#2]#3{\expandafter\@ifdefinable\csname #1\endcsname
    {\global\@namedef{the#1}{\@nameuse{the#2}}%
\global\@namedef{#1}{\@rem{#2}{#3}}%
\global\@namedef{end#1}{\@endremark}}}
\def\@rem#1#2{\refstepcounter
      {#1}\@ifnextchar[{\@yrem{#1}{#2}}{\@xrem{#1}{#2}}}
\def\@xrem#1#2{\@beginremark{#2}{\csname the#1\endcsname}\ignorespaces}
\def\@yrem#1#2[#3]{\@opargbeginremark{#2}{\csname
         the#1\endcsname}{#3}\ignorespaces}
\def\@remcounter#1{\noexpand\arabic{#1}}
\def\@remcountersep{.}
\def\@beginremark#1#2{\rm \trivlist \item[\hskip \labelsep{\bf #1\ #2}]}
\def\@opargbeginremark#1#2#3{\rm \trivlist
        \item[\hskip \labelsep{\bf #1\ #2\ (#3)}]}
\def\@endremark{\endtrivlist}
\newcommand{\biindice}[3]%
{

\begin{array}[t]{c}
#1\\
{\scriptstyle #2}\\
{\scriptstyle #3}
\end{array}

}
\def\a{\alpha}
\def\b{\beta}
\def\t{\theta}
\def\d{\delta}
\def\D{\Delta}
\def\g{\gamma}
\def\s{\sigma}
\def\t{\theta}
\def\l{\lambda}
\def\p{\partial}
\def\O{\Omega}
\def\e{\varepsilon}
\def\v{\varphi}
\def\G{\Gamma}
\def\k{\kappa}
\def\o{\omega}
\def\mc{\mathcal}
\def\mf{\mathfrak}
\newcommand{\R}{\mathbb R}
\newcommand{\N}{\mathbb N}
\newcommand{\Z}{\mathbb Z}
\def\div{\mbox{\rm div}}
\def\supp{\mbox{\rm supp\,}}
\newcommand\sap{\mathrel{\Bumpeq\!\!\!\!\!\!\!\longrightarrow}}
\numberwithin{equation}{section}
\theoremstyle{definition}
\newtheorem{definition}{Definition}[section]
\theoremstyle{plain}
\newtheorem{theorem}{Theorem}[section]
\newtheorem{proposition}{Proposition}[section]
\newtheorem{lemma}{Lemma}[section]
\newtheorem{remark}{Remark}[section]
\title{\vskip-2.5cm
{Subharmonic solutions for a class of predator-prey models with degenerate weights in periodic environments}
\thanks{This paper has been written under the auspices of the Ministry of Science, Technology and Universities of Spain, under Research Grant PID2021-123343NB-I00, and of the IMI of Complutense University. The second author, ORCID 0000-0003-1184-6231, has been also supported by contract CT42/18-CT43/18 of Complutense University of Madrid.}
}
\author{
\sc Juli\'an L\'opez-G\' omez
\\
\small Universidad Complutense de Madrid
\\
\small Instituto de Matem\'{a}tica Interdisciplinar (IMI)
\\
\small Departamento de An\'alisis Matem\'atico y Matem\'atica
Aplicada
\\
\small  Plaza de las Ciencias 3, 28040   Madrid, Spain
\\
\small E-mail: {\tt   julian@mat.ucm.es}
\medskip
\\
\sc Eduardo Mu\~{n}oz-Hern\'andez
\\
\small Universidad Complutense de Madrid
\\
\small Instituto de Matem\'{a}tica Interdisciplinar (IMI)
\\
\small Departamento de An\'alisis Matem\'atico y Matem\'atica
Aplicada
\\
\small  Plaza de las Ciencias 3, 28040   Madrid, Spain
\\
\small E-mail: {\tt eduardmu@ucm.es }
\medskip
\\
\sc Fabio Zanolin
\\
\small Università degli Studi di Udine
\\
\small Dipartimento di Scienze Matematiche, Informatiche e Fisiche
\\
\small  Via delle Scienze 2016, 33100 Udine, Italy
\\
\small E-mail: {\tt fabio.zanolin@uniud.it } }
\date{\today}
\numberwithin{equation}{section}
\begin{document}

\maketitle

{
\begin{abstract}
This paper deals with the existence, multiplicity, minimal complexity and global structure of
the subharmonic solutions to a class
of planar Hamiltonian systems with periodic coefficients, being the classical
predator-prey model of V. Volterra its most paradigmatic example. By means of a topological approach based on techniques from global bifurcation theory,   the first part of the paper ascertains their nature, multiplicity and minimal complexity, as well as their global minimal structure,
in terms of the configuration of the function coefficients in the setting of the model.
The second part of the paper introduces a dynamical system approach based on the theory
of topological horseshoes that permits to detect, besides subharmonic solutions, ``chaotic-type''
solutions. As a byproduct of our analysis, the simplest predator-prey prototype models in periodic environments can provoke chaotic dynamics. This cannot occur in cooperative and quasi-cooperative dynamics, as a consequence of the ordering imposed by the maximum principle.
        \\
        \\
        {\it 2010 Mathematics Subject Classification:} 34C25, 37B55,
        37E40, 37J12.
        \\
        \\
        {\it Keywords and Phrases}. Periodic predator-prey Volterra model. Subharmonic coexistence states. Global structure. Minimal complexity.  Chaotic dynamics.
    \end{abstract}
}

%\markright{\today}

\section{Introduction}\label{sec1}
The analysis of subharmonic solutions to differential systems with periodic coefficients is a classical
research topic which has been widely investigated also with respect to its
relevant significance in several applications, including the study of differential equations models
arising from Celestial Mechanics and Engineering.
Generally speaking, given a first-order differential system
\begin{equation}
    \label{i.1}
    z'= F(t,z),
\end{equation}
for $z=(z_1,\dots,z_d)\in \Omega$, where $\O$ is  an open domain of ${\mathbb
    R}^d$ and $F: {\mathbb R}\times\Omega\to {\mathbb R}^d$ is a
sufficiently regular vector field which is $T$-periodic in the
$t$-variable, by a \textit{subharmonic solution of order} $n\geq
2$ to system \eqref{i.1} we mean a $nT$-periodic solution of
the system which is not $kT$-periodic for all integers
$k\in\{1,\dots,n-1\}.$ As pointed out by P. H. Rabinowitz in
\cite{Ra-1980}:
\par
\bigskip
\begin{small}
    ``This latter quest is complicated by the fact
    that any $T$-periodic solution is a fortiori $kT$-periodic. Thus
    an additional argument is required to show that any subharmonics
    are indeed distinct.''
\end{small}
\bigskip
\par
\noindent In particular, it will also be important to
check whether a subharmonic solution of order $n$ has indeed $nT$
as its minimal period. This is a difficult task that nevertheless
can be overcome in some circumstances thanks to the special structure of the
vector field $F$.  Some sufficient conditions have
been already proposed in the literature (see, e.g., Michalek and Tarantello \cite{MiTa-1988}).

\par

The general aim of this work is to investigate, as deeply as possible,  the subharmonics to a class of Lotka--Volterra systems under seasonal effects. Precisely, the model is a planar Hamiltonian system of the form
\begin{equation}\label{i.2}
    \begin{cases}
        x' = -\lambda\alpha(t)f(y),\\
        y' = \lambda\beta(t)f(x),
    \end{cases}
\end{equation}
where $f:{\mathbb R}\to {\mathbb R}$ is a locally Lipschitz continuous function with
$f(0)=0$ and  $f(s)s>0$ for $s\not=0$ such that $f$ is bounded on $(-\infty,0]$
and has a superlinear growth at $+\infty$. The assumptions on $f$ are motivated by the paradigmatic case
\begin{equation}\label{i.3}
    f(s)= e^s-1,
\end{equation}
coming from the original Volterra's equations.
\par
In 1926, Vito Volterra in \cite{Vo-1926} proposed a mathematical model for the predator-prey interactions
as an answer to the statistics on the fishing data in Northern Adriatic sea, provided by the biologist Umberto D'Ancona. His model, extraordinarily famous today, since it is discussed in most of textbooks on
Ordinary Differential Equations and Ecology, has  shown to be
a milestone for the development of more realistic predator-prey models in Environmental Sciences and
Population Dynamics (see, e.g., Begon,  Harper and Townsend \cite{BHT-1990}). It can be formulated as follows
\begin{equation}
    \label{i.4}
    \begin{cases}
        N_1' = N_1(a-bN_2),\\
        N_2' = N_2(-c+dN_1),
    \end{cases}
\end{equation}
where $N_1(t)>0$ and $N_2(t)>0$ represent, respectively, the density of the prey and the predator
populations at time $t$. In \eqref{i.4}, the coefficients, $a$, $b$, $c$ and $d$, are assumed to be positive (see Braun \cite{Br-1983}). The same system had been already introduced few years before
by Alfred J. Lotka from a hypothetical chemical reaction exhibiting periodic behavior in the
chemical concentrations (see Murray \cite[\S 3.1]{Mu-2002}), which reveals how the same models can
mimic a variety of phenomenologies of a different nature. After these pioneering contributions, differential equations involving the interaction of two or more species are usually named as Lotka--Volterra systems and are represented in the general form
\begin{equation}\label{i.5}
    N'_i=N_i\bigl(a_i - \sum_{j=1}^N b_{ij}N_j\bigr), \quad i=1,\dots,n,
\end{equation}
though in this paper we will focus attention on those satisfying $b_{ii}=0$.
The choice of the sign of the coefficients $a_i$ and $b_{ij}$ allows describing
different kinds of interactions, as competition, cooperation, or parasitism.
\par
Although Volterra \cite{Vo-1931} and Lotka \cite{Lo-1998} considered the possibility
of some varying in time coefficients,  an extensive study of the Lotka--Volterra systems with periodic
coefficients has not been carried out until more recently (see, e.g.,
Butler and Freedman \cite{BuFe-1981}, Cushing \cite{Cu-1977,Cu-1980} and Rosenblat \cite{Ro-1980},  for some early works in this direction). However, in the past four decades, the researches in this area have
originated a great number of contributions, also in connection with the study of
periodically perturbed Hamiltonian systems and  Reaction-Diffusion systems arising in genetics and
population dynamics, beginning with the influential monograph of P. Hess \cite{Hess-1991} and the refinements of L\'{o}pez-G\'{o}mez \cite{LG-1992}, where a general class of spatially heterogeneous diffusive Lotka--Volterra systems with periodic coefficients was studied. To understand the role
played by the spatial heterogeneities in these models the reader is sent to the early works of L\'opez-G\'omez \cite{LG-1995} and Hutson et al. \cite{HLMV-1995}, as well as to the monographs \cite{LG-2001,LG-2015}. Further models covering more general spatial interactions, outside the Lotka--Volterra World, were introduced
by L\'opez-G\'omez and Molina-Meyer \cite{LGMJ-2006,LGMT-2006}.
\par
Although the assumption that the coefficients of the model are periodic with a common period might seem  restrictive, it is based on the natural assumption that the species are affected by identical seasonal effects, which is rather natural as they interact in the same environment. Precisely, the interest of this paper focuses into the following periodic counterpart of the autonomous system \eqref{i.4}
\begin{equation}\label{i.6}
    \begin{cases}
        N_1' = N_1(a(t)-b(t)N_2)\\
        N_2' = N_2(-c(t)+d(t)N_1)
    \end{cases}
\end{equation}
where, typically, $a$, $b$, $c$, $d:{\mathbb R}\to {\mathbb R}$ are
$T$-periodic functions such that
$$
b(t)\gneq 0,\qquad d(t)\gneq 0,\qquad \int_0^T a(t)\,dt>0,\qquad \int_0^T c(t)\,dt>0.
$$
In this case,  by \cite[Th. A.1]{LMZ-2020}, the system \eqref{i.6} has a component-wise positive $T$-periodic solution, $(\tilde{N}_1(t),\tilde{N}_2(t))$, and the change of variables
$$
N_1(t)=u(t)\tilde{N}_1(t), \quad N_2(t)=v(t)\tilde{N}_2(t),
$$
leads to the study of the equivalent system
\begin{equation}
    \label{i.7}
    \begin{cases}
        u' = \lambda\alpha(t)u(1-v)\\
        v' = \lambda\beta(t)v(-1+u)
    \end{cases}
\end{equation}
where $(\tilde{N}_1(t),\tilde{N}_2(t))$ becomes $(1,1)$ (see the Appendix of \cite{LMZ-2020} for any further details). In  \eqref{i.7}, $\lambda>0$ is regarded as a parameter, while $\alpha(t)\gneq 0$ and $\beta(t)\gneq 0$ are $T$-periodic function coefficients, where $T>0$ is their minimal period. The component-wise \textit{positive} periodic solutions of \eqref{i.7} are called (periodic) \textit{coexistence states} and, obviously, are relevant in population dynamics, as they represent states
where none of the interacting species is driven to extinction by the other.
\par
Among the periodic coexistence states, the subharmonics are of particular interest. Focussing attention
into the simplest prototype model with constant coefficients
\eqref{i.4}, it is folklore since Volterra \cite{Vo-1931} that all its
positive solutions are periodic and oscillate around the equilibrium $P_0\equiv (c/d,a/b)$ in the counterclockwise sense, lying on the ``energy levels''
$$
E(N_1,N_2):=dN_1 -c \log N_1 + bN_2 - a \log
N_2 = \text{constant} = k,
$$
for $k\in (k_0,+\infty)$, where
$$
k_0:= E(P_0) = c\left(1-\log\tfrac{c}{d}\right)  +
a\left(1-\log\tfrac{a}{b}\right).
$$
Therefore, $P_0$ is a global center in the open first quadrant. The fact that
the period  of the orbit at the level $k$, $\tau(k)$,  is an
increasing function of $k$ is a more recent finding of Rothe \cite{Ro-1985}, Shaaf \cite{Sh-1985} and
Waldvogel \cite{Wa-1986}, where it was also shown that
$$
\lim_{k\downarrow k_0}{\tau(k)} =
\frac{2\pi}{\sqrt{ac}}=:\tau_0 \quad\text{ and }\;
\lim_{k\uparrow \infty}{\tau(k)} = +\infty.
$$
Thus, for every  $T\in (0,\tau_0)$, the equilibrium point $P_{0}$
is the unique $T$-periodic coexistence state (harmonic solution),
though there are infinitely many nontrivial subharmonic
coexistence states corresponding to the energy levels $k$ for
which $\tau(k)= mT$ for sufficiently large  $m\geq 2$. This
classical example also illustrates how subharmonics can be
packaged in equivalence classes (two subharmonics are equivalent
when they are a time-shift of the other).  This also holds for
\textit{nonautonomous systems}. Indeed,  if $z(t)$ is a
$nT$-periodic solution of \eqref{i.1}, then $z(t+jT)$ is also a
$nT$-periodic solution for every $j=0,1,\dots,n-1.$ Michalek and
Tarantello \cite{MiTa-1988} referred to the set
$$
\Theta(z)=\{z(\cdot+jT):j=0,1,\dots,n-1\}
$$
as the \emph{$\Z_n$-orbit of $z$.}
Hence, searching subharmonics
in  \textit{nonautonomous} systems  is a task fraught with a
number of difficulties.
\par
The main goals of this paper are the following:
\begin{itemize}
    \item{} finding out $nT$-periodic solutions having $nT$ as \textit{minimal period},
    namely \textit{subharmonics of order $n$};
    \item{} among the subharmonics having the same order, ascertaining whether, or not, they belong to the same periodicity class.
\end{itemize}
Previous results on the existence and multiplicity of harmonic and
subharmonic solutions in periodically perturbed predator-prey
systems have been obtained by  Hausrath \cite{Ha-1982} and Liu \cite{Liu-1995}, as a
consequence of the Moser twist theorem, and by
Hausrath and Man\'asevich \cite{HM-1991}, Ding and Zanolin \cite{DZ-1993,DZ-1996} and Boscaggin  \cite{Bo-2011} from the Poincar\'{e}--Birkhoff fixed point theorem. The latest results
have been recently extended by Fonda and Toader \cite{FoTo-2019} to systems in
${\mathbb R}^{2n}$ by means of a (previous) higher-dimensional version of the
Poincar\'{e}--Birkhoff theorem due to Fonda and Ure\~{n}a
\cite{FoUr-2017}. More recently, Boscaggin and Mu\~{n}oz-Hern\'{a}ndez \cite{BoMH-2022} have also analyzed the subharmonic solutions in a class of planar Hamiltonian systems including \eqref{i.7}.
\par
L\'{o}pez-G\'{o}mez, Ortega and Tineo \cite{LOT-1996} and L\'{o}pez-G\'{o}mez \cite[\S 5]{LG-2000} carried out a thorough investigation of the positive coexistence states for the generalized Lotka--Volterra system
with periodic coefficients
\begin{equation}
    \label{i.8}
    \begin{cases}
        N_1' = N_1(a_1(t)-c_1(t)N_1-b_1(t)N_2),\\
        N_2' = N_2(a_2(t)+b_2(t)N_1-c_2(t)N_2),
    \end{cases}
\end{equation}
which includes the presence of logistic terms incorporating to the model setting some interspecific competition effects. This model has, in addition, semi-trivial coexistence states of the form
$(\hat{N}_1(t),0)$ and $(0,\hat{N}_2(t))$, where  $\hat{N}_i$ stands for the (unique) positive $T$-periodic solution of the logistic equation
$$
N_i' = N_i(a_i(t)-c_i(t)N_i),\qquad i=1, 2.
$$
As in the periodic-parabolic counterpart of \eqref{i.8}, already analyzed by Hess \cite{Hess-1991} and L\'{o}pez-G\'{o}mez \cite{LG-1992}, the local character of the semitrivial positive solutions plays a crucial role in determining the dynamics of \eqref{i.8}, being a challenging task to ascertain the stability, or instability, of the coexistence states.
\par
Ortega, L\'{o}pez-G\'{o}mez and Tineo \cite{LOT-1996} made the crucial observation that, for the special choice
\begin{equation}\label{i.9}
    \alpha(t)\equiv 0, \quad\text{for }\; t\in [T/2,T], \qquad
    \beta(t)\equiv 0, \quad\text{for }\; t\in [0,T/2],
\end{equation}
\eqref{i.7} has a (unique) linearly unstable coexistence state, though the system can admit two coexistence
states (see \cite[Rem. 7.5]{LOT-1996}). This is strong contrast with some previous one-dimensional uniqueness results available for the diffusive counterparts of these models (see the detailed discussion of \cite{LG-2000}). Actually, this prototype model has shown to be rather paradigmatic for analyzing the local character and the multiplicity of harmonic and subharmonic coexistence states. Indeed, setting
\begin{equation}\label{i.10}
    A:=\int_{0}^{T}\alpha(t)\,dt, \qquad B:=\int_{0}^{T}\beta(t)\,dt,
\end{equation}
it follows from \cite[Pr. 7.1]{LOT-1996} that $P_0\equiv (1,1)$ is linearly unstable
if
\begin{equation}
    \label{i.11}
    \lambda > \frac{2}{\sqrt{AB}}.
\end{equation}
Moreover, by \cite[Th. 5.3]{LG-2000}, the instability of $P_0$ guarantees the existence  of three
coexistence states for \eqref{i.8} within the appropriate ranges of values of its
function coefficients. It turns out that, besides the unique
(harmonic) coexistence state $P_0$,  there are, at least, two additional $2T$-periodic
coexistence states if \eqref{i.11} holds. After two decades, L\'{o}pez-G\'{o}mez and
Mu\~{n}oz-Hern\'{a}ndez \cite{LM-2020} were able to construct $nT$-periodic solutions for
every $n\geq 2$, providing simultaneously with a sharp estimate of their minimal cardinals. Figure \ref{fig-i} shows the (minimal) global bifurcation diagram of  subharmonics of \eqref{i.7} found in \cite{LM-2020} for an arbitrary choice of $\a(t)$ and $\b(t)$ satisfying the \emph{orthogonality condition} \eqref{i.9}.

\begin{figure}[h!]
    \centering
    \includegraphics[scale=0.5]{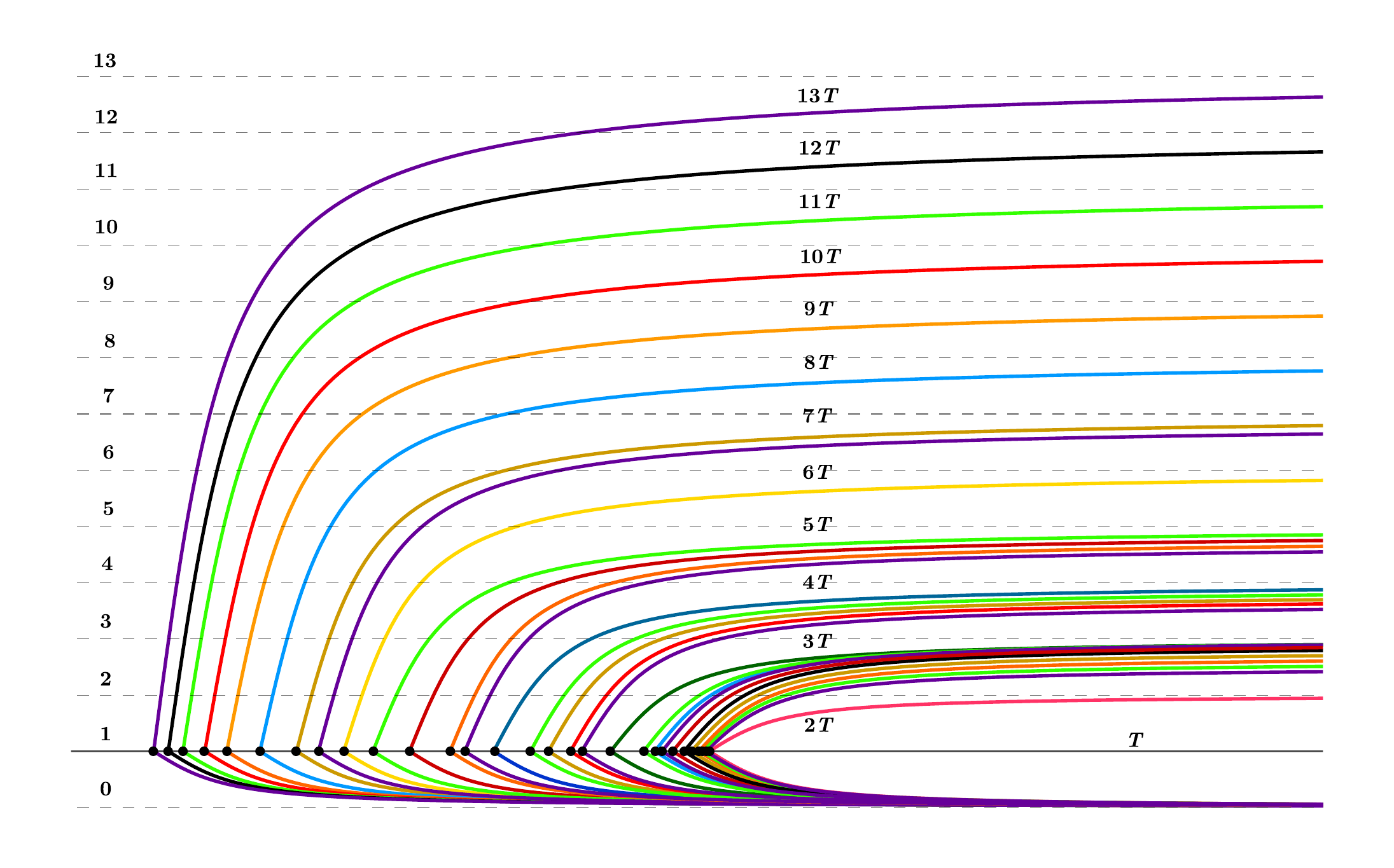}
    \caption{Admissible global bifurcation diagram under \eqref{i.9}, after \cite{LM-2020}.}
    \label{fig-i}
\end{figure}

These findings should be also true, at least,  when $\a(t)$ and $\b(t)$ are \emph{nearly orthogonal}, in the sense that the product  $\a\b$ is sufficiently small in $[0,T]$, and actually they might be also true for
very general classes of weight functions $\a(t)$ and $\b(t)$ far from satisfying \eqref{i.9}. Thus, it is rather natural to investigate the structure of the set of subharmonics of \eqref{i.7} for more general configurations of $\alpha(t)$ and $\beta(t)$ than those satisfying \eqref{i.9}. Besides its intrinsic interest, this analysis might reveal some new important features, based, e.g., on the shapes of $\alpha(t)$ and $\beta(t)$, that might be significative in population dynamics. Indeed, as already discussed by the authors in \cite{LMZ-2020}, when $\beta\equiv 0$, one is considering a seasonal time-interval where predator is still aggressive with respect to the prey but does not take advantage of the harvest, which may simulate the difference between the hunting and the gathering period. Analog interpretations may be given for the time-intervals when $\alpha\equiv 0$.
\par
From the mathematical analysis of  \eqref{i.7} it becomes apparent that its dynamics might vary according to the measure of the intersection of the supports of $\a$ and $\b$. To differentiate the two extreme cases, we will name as ``degenerate'' the case when $|\supp \alpha \cap \supp \beta|=0$, while the case when
$$
|\supp \alpha \cap \supp \beta|>0.
$$
will be referred to as the ``non-degenerate'' case. In \cite{LMZ-2020,LMZ-2021}, the authors have already shown the applicability of the Poincar\'{e}--Birkhoff theorem
in searching subharmonics in both cases (see also the recent refinements of Boscaggin and Muñoz-Hernández \cite{BoMH-2022}).
\par
As a sharp  analysis of \eqref{i.7} seems  imperative  to classify all the possible dynamics associated to a general predator-prey system with periodic coefficients, throughout this paper we will focus our attention into the simplest prototype model \eqref{i.7}, with special emphasis towards  the problem of analyzing its subharmonic solutions depending on whether, or not, the chosen configuration
of $\a(t)$ and $\b(t)$ is degenerate.
\par
Among the various existing approaches in finding out subharmonics in Hamiltonian systems, the most successful
ones are the following:
\begin{itemize}
    \item variational methods (critical point theory),
    \item Kolmogorov--Arnold--Moser (KAM) theory, Moser twist theorem, and Poincar\'{e}--Birkhoff
    theorem,
    \item symbolic dynamics associated to Smale's horseshoe-type structures, and
    \item reduction via symmetry and bifurcation theory.
\end{itemize}
As we are going to invoke the last three in this paper, we will shortly revisit them in the next few paragraphs.
\par
\bigskip

\noindent\textbf{Variational methods (critical point theory):} In this framework, a number of different arguments have been given to show that the periodic solutions that are  critical
points of the associated functional are indeed subharmonics of
large period. For instance, Rabinowitz  \cite{Ra-1980} achieves it by
analyzing the critical levels of the functional
(see also Fonda and Lazer \cite{FoLa-1992} and Serra, Tarallo and Terracini \cite{STT-2000}), whereas  Michalek and Tarantello \cite{MiTa-1988} apply a combination of estimates on critical levels and
$\mathbb{Z}_p$-index theory. Assuming an additional non-degeneracy condition on the solutions,
Conley and Zhender \cite{CoZe-1984} can get subharmonics through their Morse indices (see
also Abbondandolo \cite{Ab-2001} and the references therein).
\par
\bigskip

\noindent\textbf{KAM theory, Moser twist theorem, and Poincar\'{e}--Birkhoff
    theorem:} In this setting, a typical approach consists in constructing
an annular region enclosed by two curves, $\Gamma_{\rm int}$
and $\Gamma_{\rm out}$, which are invariant by the Poincar\'{e} map
associated to the given planar system. Recall that
the Poincar\'{e} map is the homeomorphism $\Phi=\Phi_{0}^{T}$ transforming
an initial point $z_0$ to $\zeta(T;0,z_0)$, where
$\zeta(\cdot;0,z_0)$ stands for the solution of \eqref{i.1} with $z(0)=z_0.$ The associated
flow is  area-preserving when $\div_{z}F(t,z)\equiv 0$,  a
condition which is always satisfied by Hamiltonian systems. Then, introducing  a
suitable \textit{rotation number}, ${\rm rot}([0,nT],z_0)$,
the Poincar\'{e}--Birkhoff fixed point theorem guarantees the existence of at least two fixed points for $\Phi^n$ in the interior of the annulus as soon as the following  \textit{twist
    condition}
\begin{equation}\label{i.12}
    {\rm rot}([0,nT],z_0) \begin{cases}
        < j \; [\text{resp} >j]\; \; \hbox{for all} \;\, z_0\in \Gamma_{\rm int}\\
        > j \; [\text{resp} <j]\; \; \hbox{for all}\;\, z_0\in \Gamma_{\rm out}
    \end{cases}
\end{equation}
holds for some integer $j$. In such situation, the fixed points of $\Phi^n$ have $j$ as an associated rotation number. This actually entails the existence of subharmonics of order $n$ if $n$ and $j$ are co-prime integers. Theorems 2.1 and 3.1 of Neumann  \cite{Ne-1977} give some sufficient conditions  so that these subharmonic solutions belong to different periodicity classes.
\par
A typical definition of rotation number can be  given by passing to polar coordinates and
counting the number of turns (counterclockwise, or clockwise) that the solution $\zeta(t;0,z_0)$
makes around the origin in a given time-interval. The choice of the origin as a ``pivotal point'' is merely conventional, but it fits well for systems like \eqref{i.2}, where $F(t,0)\equiv 0$. Adopting this methodology,  for any given interval $[t_0,t_1]$, we will set
\begin{equation}
    \label{i.13}
    {\rm rot}([t_0,t_1],z_0):=\delta\frac{1}{2\pi} \int_{t_0}^{t_1}
    \frac{F(t,\zeta(t;t_0,z_0))\wedge \zeta(t;t_0,z_0)}{||\zeta(t;t_0,z_0)||^2}\,dt
\end{equation}
where  $\delta=\pm1$. Boscaggin \cite{Bo-2011} and Boscaggin and Mu\~{n}oz-Hern\'{a}ndez \cite{BoMH-2022}
give some other equivalent notions.
\par
Ding \cite{Di-1982}, Fonda and Ure\~{n}a \cite{FoUr-2017}, Frank \cite{Fr-1988}, Qian and Torres
\cite{QiTo-2005} and Rebelo \cite{Re-1997} have found some refinements  of the
Poincar\'{e}--Birkhoff fixed point theorem adapted to more general settings where the boundary invariance
of the underlying annular region is lost. Dalbono and Rebelo  \cite{DaRe-2002} and more recently Fonda, Sabatini and Zanolin \cite{FSZ-2012} have reviewed a series of results on the applicability of the Poincar\'{e}--Birkhoff theorem in non-invariant annuli.
\par
According to some recent findings of Boscaggin and Mu\~{n}oz-Hern\'{a}ndez \cite{BoMH-2022},  there is a link between the variational approach discussed in the previous paragraph and the rotation number estimates necessary to apply the Poincar\'{e}--Birkhoff theorem based on the Conley--Zehnder--Maslov  index (see
Abbondandolo \cite{Ab-2001} and Long \cite{Lo-2002}).
\par
Nevertheless, establishing the existence of the invariant curves is a hard task that typically requires to invoke the KAM theory or some of its variants (see Laederich and Levi \cite{LaLe-1991}), like, e.g.,
the Moser twist theorem \cite{Mo-1962}, as in Dieckerhoff and Zehnder \cite{DiZh-1987} and Levi \cite{Le-1991} for the scalar second order differential equation
$$
x'' + V_x(x,t)=0,
$$
or as in Hausrath \cite{Ha-1982} and Liu \cite{Liu-1995} for the most sophisticated  predator-prey system.
\par
\bigskip

\noindent\textbf{Symbolic dynamics associated with horseshoe-type structures:}
In connection with the previous discussion, subharmonics can be also detected by
applying to the Poincar\'{e} map and its iterates various methods coming from the theory of dynamical systems, the most paradigmatic being the celebrated
Smale's horseshoe (see Smale \cite{Sm-1965,Sm-1967}, Moser \cite[Ch. III]{Mo-1973} and Wiggins \cite{Wi-2003}). Essentially, it consists of a toy-diffeomorphism stretching and bending, recursively,  a square onto a horseshoe-type domain.
\par
In this paper, we will work in a slightly weaker setting entering into the theory of \textit{topological horseshoes} as developed, adapting different perspectives, by Carbinatto, Kwapisz and Mischaikow \cite{CKM-2000}, Mischaikow and Mrozek \cite{MiMr-1995}, Srzednicki \cite{Sr-2000}, Srzednicki and W\'{o}jcik \cite{SrWo-1997}, W\'{o}jcik and Zgliczy\'{n}ski \cite{WoZg-2000}, Zgliczy\'{n}ski \cite{Zg-1996} and Zgliczy\'{n}ski and Gidea \cite{ZgGi-2004}, just to quote some of the most illustrative contributions among a vast literature on this topic. The theory of topological horseshoes, as discussed  by Burns and Weiss \cite{BuWe-1995} and Kennedy and Yorke \cite{KeYo-2001}, consists of a series of methods introduced to extend the classical geometry of the Smale horseshoe to more general dynamical situations involving topological crossings and, crucially,  avoiding any hyperbolicity assumptions on the diffeomorphism, as it might be a challenge to verify them in applications.
\par
In particular, in this paper we will benefit of a topological approach developed by Papini and Zanolin  \cite{PaZa-2004b,PaZa-2004a} and Pascoletti, Pireddu and Zanolin \cite{PPZ-2008} leading to the following notion of \textit{chaos in the coin-tossing sense}.

\begin{definition}\label{def1.1}
    Let $X$ be a metric space and $\Phi: {\mathcal Q}\to X$ be a homeomorphism. It is said that
    $\Phi$ has a topological horseshoe in the set ${\mathcal Q}$ if there are  $\ell\geq 2$
    non-empty pairwise disjoint compact sets ${\mathcal H}_{0},\dots,{\mathcal H}_{\ell-1}
    \subset \mc{Q}$ such that, for every  two-sided sequence of $\ell$ symbols,
    $$
    \textbf{s}=(s_{i})_{i\in {\mathbb Z}}\in \Sigma_{\ell}:=\{0,\dots,\ell-1\}^{\mathbb Z},
    $$
    there exists $z\in {\mathcal Q}$ such that
    $z_{i}:=\Phi^i(z)\in {\mathcal H}_{s_i}$ for all $i\in {\mathbb Z}$ and,
    whenever $(s_i)_{i}$ is a $n$-periodic sequence, then $z$ can be chosen so that
    the sequence of iterates $(z_i)_{i}$ is as well  $n$-periodic.
    In this case, it is also said  that
    \textit{$\Phi$ induces chaotic dynamics on $\ell$ symbols} in ${\mathcal Q}.$
\end{definition}

Definition \ref{def1.1} is inspired in the concept of chaotic
dynamics as a situation where a deterministic map can
reproduce, along its iterates, all the possible outcomes of a
coin-flipping experiment, as discussed by Smale \cite{Sm-1998} (see also
Kirchgraber and Stoffer \cite{KiSt-1989}). According to Medio, Pireddu and Zanolin  \cite{MPZ-2009}, any
map $\Phi$ inducing chaotic dynamics on $\ell$ symbols in ${\mathcal Q}$ is \textit{semi-conjugate} to the Bernoulli shift automorphism
$$
\sigma: \Sigma_{\ell}\to \Sigma_{\ell}, \qquad \sigma(s_{i})_i:=(s_{i+1})_i\quad \hbox{for all}\;\; i\in\Z,
$$
in the sense that there exist a compact subset, $\Lambda\subset
\bigcup_{i=0,\dots\ell-1}{\mathcal H}_i\subset{\mathcal Q}$, invariant  for
$\Phi$, whose set of periodic points, $\mathrm{Per}\,\Phi$, is dense
in $\Lambda$,  and a \textit{continuous and surjective map} $g:
\Lambda\to \Sigma_{\ell}$ such that:
\begin{itemize}
    \item[i)] $g\circ\Phi=\sigma\circ g$, and
    \item[ii)] for every periodic sequence $\textbf{s}\in\Sigma_\ell$, the set $g^{-1}(\textbf{s})$ contains a periodic point of $\Phi$ with the same period.
\end{itemize}
Property ii) corresponds to the one  introduced by Zgliczy\'{n}ski in \cite[Th. 4.1]{Zg-1996}.
Thus, adopting Definition \ref{def1.1}  we are entering into a genuine
classical definition of chaotic dynamics of Block--Coppel type,
as discussed by Aulbach and Kieninger \cite{AuKi-2001}.
Hence, according to Adler, Konheim and McAndrew \cite{AKMc-1965},
$\Phi$ has a positive topological entropy.
The semi-conjugation property provides a weaker form of chaos with respect to the original Smale's
horseshoe, where $g:\Lambda \to \Sigma_{\ell}$ is assumed to be a homeomorphism and, so,
$\Phi|_{\Lambda}$ is \textit{conjugate} to the Bernoulli shift. The conjugation property provides us with
a stronger type of chaotic dynamics as, in such case, $\Phi$ inherits on the invariant set $\Lambda$
all the properties of the automorphism $\sigma$. Therefore, all the existing notions of chaos, such as those introduced by Devaney \cite{De-1989}, Li and Yorke \cite{LiYo-1975}, Aulbach and Kieninger \cite{AuKi-2001}
and Kirchgraber and Stoffer \cite{KiSt-1989} hold simultaneously.
Although there are series of results ensuring the conjugation, as, e.g., the celebrated
Melnikov theorem,  which requires  the delicate task of
analyzing perturbations of homoclinic, or heteroclinic,  configurations, or the theory of linked twist maps
developed by Devaney \cite{De-1978} and Sturman, Ottino and Wiggins \cite{SOW-2006},
except in concrete special examples, it is a challenge to make sure that these sufficient conditions hold.
\par
Clearly, for any map $\Phi$ satisfying the requirements of Definition \ref{def1.1}, one can
infer  the existence of periodic points of arbitrary  minimal period, and hence subharmonics if
$\Phi$ is the Poincar\'{e} map of an ODE with periodic coefficients. For instance,
if $\ell=2,$ given any periodic sequence in $\{0,1\}$ of minimal period $n$, there is also a periodic point of $\Phi$ in ${\mathcal Q}$ with minimal period $n$. Similar notions of ``chaos'' can be given by means of a number of  topological methods, based on the Conley index, fixed point theories, or topological degree (see, e.g., Carbinatto, Kwapisz and Mischaikow \cite{CKM-2000}, Mischaikow and Mrozek \cite{MiMr-1995}, Srzednicki \cite{Sr-2000}, Srzednicki and W\'{o}jcik \cite{SrWo-1997}, W\'{o}jcik and Zgliczy\'{n}ski \cite{WoZg-2000}, Zgliczy\'{n}ski \cite{Zg-1996} and Zgliczy\'{n}ski and Gidea \cite{ZgGi-2004}). Typically, in any setting
entailing Definition \ref{def1.1}, it is possible to detect a larger number of subharmonics than merely
applying the Poincar\'{e}--Birkhoff theorem (see Feltrin \cite[Rem. 4.1]{Fe-2018}).
\par
Among the main advantages of using the theory of topological
horseshoes, instead of the methods discussed in the two previous
items, it is worth-mentioning that, besides the system is not
required to inherit any Hamiltonian structure, there are not constraints on the dimension.
\par
\bigskip
\noindent\textbf{Reduction via symmetry and bifurcation theory:}
Another successful approach for solving a huge variety of
nonlinear differential equations, both ODEs and PDEs,  relies on
the topological degree  through local and global bifurcation
theory. Essentially, in the context of bifurcation theory, the
continuation methods in parametric models do substitute the
Implicit Function Theorem in the presence of degenerate solutions,
provided that a change of degree occurs as the parameter, $\l$,
crosses some critical value, $\l_0$.   The relevance of
bifurcation theory in studying nonlinear differential equations
was first understood by  Krasnosel'skii \cite{Kras-1964}, who was
able to show that any eigenvalue of the linearization  at a given
state with an odd  (classical) algebraic multiplicity is a
nonlinear eigenvalue. By a nonlinear eigenvalue we mean a
bifurcation value from the given state, regardless the nature of
the nonlinear terms of the differential equation. In  other words,
nonlinear eigenvalues are those for which the fact that
bifurcation occurs is \emph{based on the linear part}, as
discussed by Chow and Hale \cite{ChHa-1982}. Some years later,
Rabinowitz \cite{Ra-1971,Ra-1973} established his celebrated
global alternative within the setting of the Krasnosel'skii's
theorem founding Global Bifurcation Theory.  According to the
Rabinowitz's global alternative, the global connected component,
$\mf{C}$,  bifurcating from the given state at an eigenvalue with
an odd (classical) algebraic multiplicity must be either
unbounded, or it bifurcates from the given state at, at least, two
different values of $\l$. The relevant fact that if $\mf{C}$ is
bounded, then the number of bifurcation points from the given
state with an odd algebraic multiplicity must be even was observed
by Nirenberg \cite{Ni-1974}.
\par
However,  the precise role played by the classical spectral theory
in the context of the emerging bifurcation theory remained a real
mystery for two decades. That was a mystery is confirmed by the
astonishing circumstance that the extremely popular transversality
condition  of Crandall and Rabinowitz \cite{CrRa-1971,CrRa-1973}
for bifurcation from simple eigenvalues was not known to entail a
change of the Leray--Schauder degree until Theorem 5.6.2 of
L\'{o}pez-G\'omez \cite{LG-2001} could be derived through the
generalized algebraic multiplicity, $\chi$, of Esquinas and
L\'{o}pez-G\'{o}mez \cite{ELG-1988,Es-1988,LG-2001}. The
multiplicity $\chi$ is far more general that the one introduced in
\cite{CrRa-1971,CrRa-1973} for algebraically simple eigenvalues
and it was used, e.g., by L\'opez-G\'omez and Mora Corral
\cite{LGMC-2007}, to characterize the existence of the Smith
canonical form. According to \cite[Ch.4]{LG-2001}, the oddity of
$\chi$ characterizes whether, or not, $\l_0$ is a nonlinear
eigenvalue of the problem, and this occurs  if, and only if, the
local degree changes as $\l$ crosses $\l_0$. And this regardless
if we are dealing with the Leray--Schauder degree, or with degree
for Fredholm operators of Fitzpatrick, Pejsachowicz and Rabier
\cite{FP-1991,FPR-1994}, or Benevieri and Furi
\cite{BF-1998,BF-2000}, which are almost equivalent.  Therefore,
by Corollary 2.5 of L\'{o}pez-G\'{o}mez and Mora-Corral
\cite{LM-2005}, the local theorem of Crandall and Rabinowitz
\cite{CrRa-1971} is actually global. This important feature was
later \emph{rediscovered} by Shi and Wang \cite{SW-2009} in a much
less general context.
\par
Some more specific important information in the context of dynamical bifurcation theory and singularity theory can be found in the textbooks of Guckenheimer and Holmes \cite{GuHo-1983} and Golubitsky and Shaeffer \cite{GS-1985}. Essentially, singularity theory tries to classify canonically all the possible local structures at the bifurcation values, while dynamical bifurcation theory focuses attention in bifurcation
phenomena not involving only equilibria.
\par
These abstract developments have tremendously facilitated the mathematical analysis of a huge variety
of nonlinear bvp's related to a huge variety of nonlinear differential equations and systems (see, e.g., the monographs of L\'opez-G\'omez \cite{LG-2001,LG-2015}, Cantrell and Cosner \cite{CC-2004} and
Ni \cite{Ni-2011}, as well as their abundant lists of references). However, the underlying mathematical analysis is more involved when dealing with periodic conditions, instead of mixed boundary conditions, especially when searching for branches of
subharmonic solutions bifurcating from a given state. Indeed, although the pioneering strategy
for constructing coexistence states as bifurcating from the semitrivial solution branches  in Reaction-Diffusion systems of Lotka--Volterra type was developed by Cushing \cite{Cu-1977,Cu-1980}
for their classical periodic counterparts, rather astonishingly, except for certain technicalities inherent
to Nonlinear PDEs, the level of difficulty in establishing the existence of the coexistence states in the diffusive prototype models inherits the same order of magnitude as getting them in their classical periodic counterparts. Not to talk about finding out infinitely many subharmonics of large order. Although some additional contributions in this direction were done by T\'aboas \cite{Ta-1987}, the analysis of the periodic-parabolic counterparts of these classical models, extraordinarily facilitated by the pioneering results of Cushing \cite{Cu-1977,Cu-1980} for the non-spatial models, was already ready to be developed by Hess \cite{Hess-1991} and L\'opez-G\'omez \cite{LG-1992}.
\par
Nevertheless, in spite of the huge amount of literature on bifurcation for Reaction-Diffusion Systems in population dynamics, almost no reference is available  about harmonic and subharmonic solutions for  predator-prey systems of Volterra type, except
for those already discussed in this section, beginning with \cite{LOT-1996} and  \cite{LG-2000}, and continuing, after two decades, with \cite{LM-2020}, where the weight functions $\alpha(t)$ and $\beta(t)$ were assumed to have non-overlapping supports so that the underlying Poincar\'{e} map
associated with \eqref{i.7} could take a special form to allow solving the periodic
problem via a one-dimensional reduction.
\par
\bigskip

\noindent  As already told above, the main goal of this paper  is to analyze the existence, multiplicity and structure of subharmonic solutions to planar systems, including the periodic  Volterra's predator-prey model \eqref{i.7}. Naturally, as the underlying Poincar\'{e} maps  play  a crucial role in this analysis, we will benefit of a number of methods and tools  among those already described in the previous paragraphs. As a consequence of our analysis, the richness of the dynamics of \eqref{i.7} will become apparent even  for the simplest configurations of $\a(t)$ and $\b(t)$. Some recent applications of the Poincar\'{e}--Birkhoff
fixed point theorem to equations directly related to \eqref{i.7} have been given by
Boscaggin \cite{Bo-2011}, Ding and Zanolin \cite{DZ-1993,DZ-1996}, Fonda and Toader \cite{FoTo-2019},
Hausrath and Man\'asevich \cite{HM-1991} and Rebelo \cite{Re-1997}. In the more recent papers \cite{LM-2020,LMZ-2020,LMZ-2021} the authors have studied in detail some simple prototype models, non-degenerate and degenerate. Essentially, this paper continues the research program initiated in \cite{LM-2020,LMZ-2020,LMZ-2021} trying to understand  how the relative position of the supports of the weight functions $\a(t)$ and $\b(t)$ might influence the dynamics of \eqref{i.7} and the global structure of the set its subharmonics.
\par
These goals will be achieved in Section \ref{sec2} for the degenerate case by means of the
bifurcation approach introduced in \cite{LM-2020}. Precisely, we will consider the general case of
weight functions having multiple non-overlapping humps as in Figure \ref{fig-ii}. Then, depending on the
mutual distributions of the supports of  $\alpha(t)$ and $\beta(t)$, some sharp estimates on the parameter $\lambda$ ensuring the existence of nontrivial subharmonics will be given. These objectives will
be accomplished in Theorems \ref{th2.1}-\ref{th2.5}, up to
deal with the most general configuration admissible for the validity of these results.
\par
Further, in Section \ref{sec3}, we will analyze some simple prototype models, not previous considered in the literature,  for which the associated Poincar\'{e} consists
of a superposition of a stretching and a twist producing a horseshoe-type geometry. The new findings have been collected in Theorems \ref{th3.1} and \ref{th3.2}. As this topic is more sophisticated technically and not well understood by most of experts in Reaction-Diffusion systems, we will begin the proof of Theorem \ref{th3.1} by giving a rather direct proof for stepwise-constant functions $\a(t)$ and $\b(t)$ before completing the proof in the general case. At a further step we will  discuss the problem of the semi-conjugation/conjugation of the Poincar\'{e} map to the Bernoulli shift, which, essentially,  depends on the shape of the weight functions. Finally, we will close Section \ref{sec3} describing in full detail the geometric horseshoe nature of the Poincar\'{e} map associated with the periodic Volterra predator-prey system. This provides us with a (new) simple mechanism to mimic the Smale's horseshoe from one of the most paradigmatic models in population dynamics.
\par
As a byproduct of our mathematical analysis it becomes apparent how the evolution in seasonal environments where predator-prey interaction plays a role might be random. To catch the attention of experts in Reaction-Diffusion  systems and population dynamics, note that, actually,  the harmonics and subharmonics
of \eqref{i.7} are the non-spatial periodic harmonic and subharmonic solutions of the following Reaction-Diffusion periodic-parabolic problem
\begin{equation}
    \label{i.14}
    \left\{ \begin{array}{ll} \frac{\p u}{\p t}-d_1\D u = \l \a(t)u(1-v)& \quad \hbox{in}\;\; \O\times (0,+\infty),\\[1ex] \frac{\p v}{\p t}-d_2\D v = \l \b(t)v(-1+u)& \quad \hbox{in}\;\; \O\times (0,+\infty),\\[1ex]
        \frac{\p u}{\p n_x}=\frac{\p v}{\p n_x} =0 &\quad \hbox{on}\;\; \p\O \times (0,+\infty),\end{array}\right.
\end{equation}
where $\O$ stands for a $\mc{C}^2$ bounded domain of $\R^N$, $N\geq 1$, $n_x$ stands for the outward normal vector-field to $\O$ on its boundary, $d_1$ and $d_2$ are two positive constants, and $\D$ stands for the
Laplace's operator in $\R^N$. Therefore, the findings of this paper seem extremely relevant from the point of view of population dynamics. As a byproduct of our analysis, non-cooperative systems in periodic environments can provoke chaotic dynamics. This cannot occur in cooperative and quasi-cooperative dynamics, as a consequence of the ordering imposed by the maximum principle. Therefore, it is just the lack of a maximum principle for the predator-prey models the main mechanism provoking chaos in these models, though this
sharper analysis will be accomplished in a forthcoming paper.
\par
To avoid unnecessary repetitions, throughout this paper we will assume that
$\alpha,\beta:{\mathbb R}\to {\mathbb R}^+:=[0,+\infty)$ are continuous and $T$-periodic functions, though
our results are easily extended to the Carath\'{e}odory setting with coefficients measurable and integrable in $L^1([0,T],{\mathbb R}^+).$ In particular, the bounded and piecewise-continuous $\alpha,\beta$
fall within our functional setting. Hence, piece-constant coefficients are admissible in Section 3.

\section{A bifurcation approach: Minimal complexity of subharmonics for a class of degenerate
predator-prey models}\label{sec2}

\noindent We consider the non-autonomous Volterra predator-prey model
\begin{equation}
\label{ii.1}
\left \{
\begin{array}{ll}
u'=\lambda\alpha(t)u(1-v)\\
v'=\lambda\beta(t)v(-1+u)
\end{array}
\right.
\end{equation}
where $\l>0$ is regarded as a real parameter, and the intersection of the supports of the
non-negative weight functions $\a$ and $\b$, denoted by
\begin{equation*}
Z:=\supp \a\cap \supp \b,
\end{equation*}
is assumed to have Lebesgue measure zero, $|Z|=0$. This is the reason why the model \eqref{ii.1} is said to be \emph{degenerate}. In \eqref{ii.1}, given a real number $T>0$, $\a$ and $\b$ are $T$-periodic continuous functions such that
\[
   A:=\int_{0}^{T}\alpha>0,\qquad B:=\int_{0}^{T}\beta>0.
\]
These kind of degenerate Volterra predator-prey models were introduced in
\cite{LOT-1996} and \cite{LG-2000} and, then, deeply analyzed in \cite{LM-2020}.
Actually, these  references dealt with the very special, but interesting, case when
\[
  \supp\a\equiv[0,T/2],\qquad \supp\b\equiv[T/2,T],
\]
where we could benefit of the degenerate character of the model to
ascertain the global structure of the set of $nT$-periodic
coexistence states of \eqref{ii.1}. In the non-degenerate case
when $|Z|\neq0$, the techniques developed in \cite{LM-2020} do not
work. In such case the existence of high order subharmonics can be
established through the celebrated Poincar\'{e}--Birkhoff twist
theorem, or appropriate variants of it (see, e.g.,
\cite{DoZa-2020,FoUr-2017,GMR-2019,GiMa-2019,MRZ-2002}, as well as
\cite{LMZ-2020} for a specific application to \eqref{ii.1}).
However, the twist theorem cannot provide us with the global
bifurcation diagram of subharmonics constructed in \cite{LM-2020}.
The main goal  of this section is sharpening and generalizing as
much as possible the main findings of \cite{LM-2020}.
\par
Precisely, we analyze the existence of $nT$-periodic coexistence
states of \eqref{ii.1} for any integer $n\geq 1$ under the
following structural constraints on the weight functions $\a$ and
$\b$. For some integers $k, \ell\geq1$, with $|k-\ell|\leq 1$, it
is assumed the existence  of $k+\ell$ continuous functions in the
interval $[0,T]$, $\a_i\gneq 0$, $1\leq i\leq k$, and $\b_j\gneq
0$, $1\leq j\leq \ell$, such that
$$
\alpha = \alpha_1+\alpha_2+\cdots+\alpha_k,\quad \beta = \beta_1+\beta_2+\cdots+\beta_\ell,
$$
with
\begin{equation}
\label{ii.2}
\supp\a_i\subseteq[t_0^i,t_1^i]\;\;\hbox{and}\;\; \supp\b_j\subseteq[t_2^j,t_3^j],
\end{equation}
for some partition of $[0,T]$
\begin{equation*}
0\leq t_0^1<t_1^1\leq t_2^1<t_3^1\leq t_0^2<t_1^2\leq t_2^2<t_3^2\leq\cdots\leq t_0^{k}<t_1^{k}\leq
t_2^{k}<t_3^{k} \leq T
\end{equation*}
if $k=\ell$, or
\begin{equation*}
0\leq t_0^1<t_1^1\leq t_2^1<t_3^1\leq t_0^2<t_1^2\leq t_2^2<t_3^2\leq\cdots\leq t_0^{k}<t_1^{k}\leq T
\end{equation*}
if $k=\ell+1$. Similarly, we also consider the case when, instead of \eqref{ii.2},
\begin{equation}
\label{ii.3}
   \supp\b_j\subseteq[t_0^j,t_1^j]\;\;\hbox{and}\;\; \supp\a_i\subseteq[t_2^i,t_3^i],
\end{equation}
for some partition of $[0,T]$
\begin{equation*}
0\leq t_0^1<t_1^1\leq t_2^1<t_3^1\leq t_0^2<t_1^2\leq t_2^2<t_3^2\leq\cdots\leq t_0^{\ell}<t_1^{\ell}\leq
t_2^{\ell}<t_3^{\ell} \leq T
\end{equation*}
if $\ell=k$, or
\begin{equation*}
0\leq t_0^1<t_1^1\leq t_2^1<t_3^1\leq t_0^2<t_1^2\leq t_2^2<t_3^2\leq\cdots\leq t_0^{\ell}<t_1^{\ell}\leq T
\end{equation*}
if $\ell=k+1$.
\par
Moreover, we refer to an $\a$-interval (resp. $\b$-interval) as a
maximal interval where $\b\equiv 0$  (resp. $\a\equiv 0$) and we set
\begin{equation}
\label{ii.5}
A_i:=\int_0^T \a_i,\qquad B_j:=\int_0^T \b_j.
\end{equation}
Figure \ref{fig-ii} shows a series of examples satisfying the previous requirements. Note that the support of
the $\a_i$'s and the $\b_j$'s on each of the intervals $[t_r^i,t_{r+1}^i]$, $1\leq i\leq k$, and
$[t_r^j,t_{r+1}^j]$, $1\leq j\leq \ell$, might not be connected.

\begin{figure}[h!]
    \centering
    \includegraphics[scale=0.71]{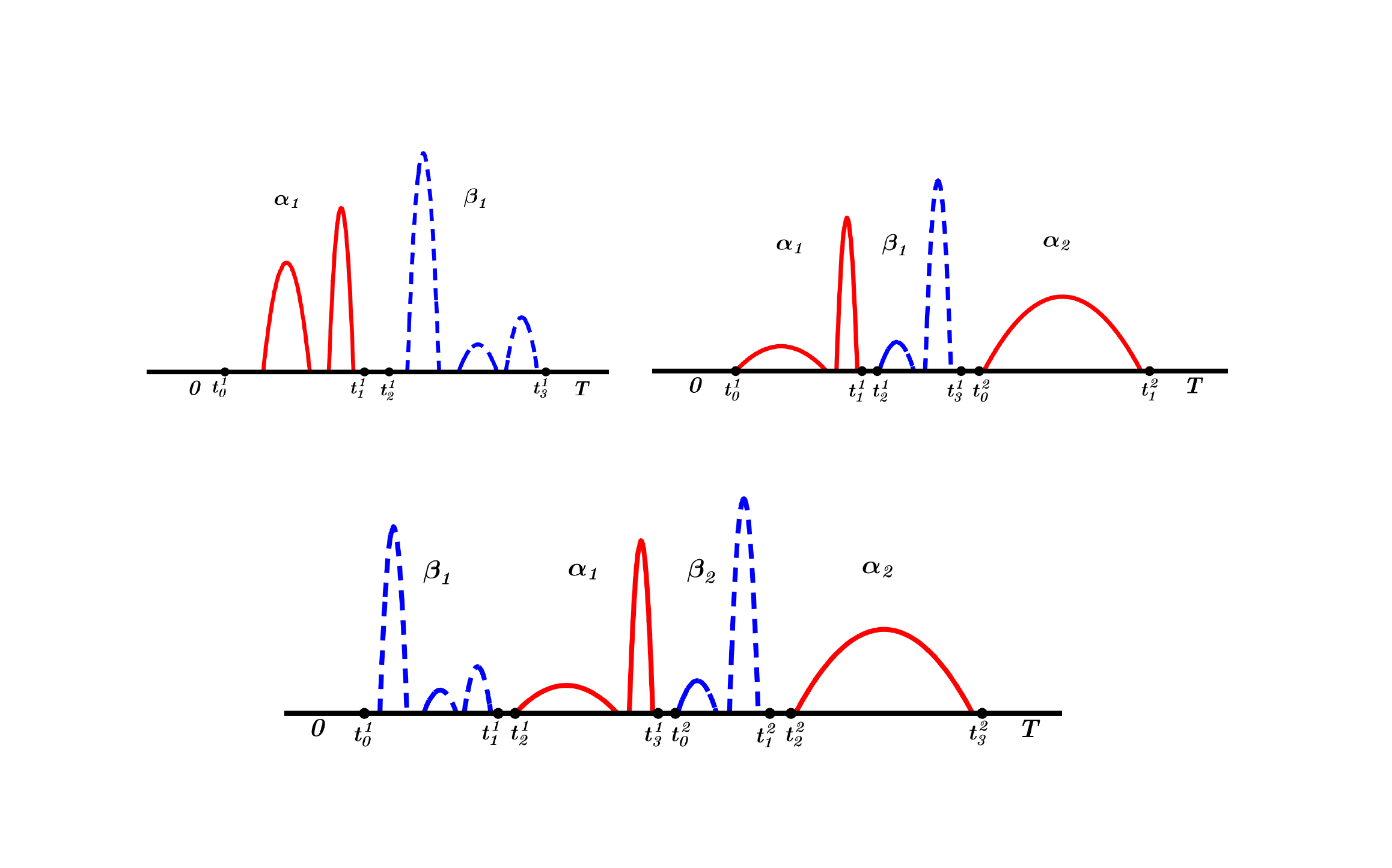}
    \caption{Some admissible examples of weight functions $\a$ and $\b$.}
    \label{fig-ii}
\end{figure}

There are two fundamental aims in this section. The first one is
to show that the  complexity of the global bifurcation diagram of
subharmonics of \eqref{ii.1} when $k=\ell$ depends on the size of
$k$, rather than on the particular structure of the $\a_i$'s and
the $\b_j$'s on each of the intervals of the partition of $[0,T]$.
In fact, all admissible global bifurcation diagrams when $k=\ell$
can be constructed systematically, through a certain algorithm,
from the one already found in \cite{LM-2020}, regardless the
particular locations of each of the points of the partitions,
$t_r^s$'s. The second aim of this section is to determine a lower
bound for the number of $nT$-periodic coexistence states of model
\eqref{ii.1}.
\par
It is elementary to show that, for any initial point
$z_0:=(u_0,v_0)$ (with $u_0,v_0>0$) and each initial time
$\tau_0$, there exists a unique solution
$(u(t;\tau_0,z_0),v(t;\tau_0,z_0))$ to system \eqref{ii.1} which
is globally defined in time. In the sequel, by convention, when
studying $nT$-periodic solutions (for any $n\geq 1$), we will be
looking for the fixed and periodic points of the Poincar\'{e} map
with $\tau_0=0,$ i.e.,
$$z_0=(u_0,v_0)\mapsto (u(t;0,z_0),v(t;0,z_0)).$$
This does not exclude the possibility of the existence of other
fixed points for the Poincar\'{e} maps defined with a different
initial point $\tau_0$. Typically, the corresponding solutions
will be equivalent through an appropriate time translation and
hence, they will be not considered in counting the multiplicity of
the solutions.

\subsection{The case when $k=\ell=1$ and ${\rm supp}\hspace{0.6mm}\a_1\subseteq[t_0^1,t_1^1]$}
\label{sec2.1}
\noindent Then,
\begin{equation}
\label{ii.6}
\supp\a_1\subseteq[t_0^1,t_1^1]\;\;\hbox{and}\;\;\supp\b_1\subseteq[t_2^1,t_3^1],
\end{equation}
where
$$
   0\leq t_0^1<t_1^1\leq t_2^1<t_3^1\leq T.
$$
Figure \ref{fig-iii} illustrates this case.

\begin{figure}[h!]
    \centering
    \includegraphics[scale=0.6]{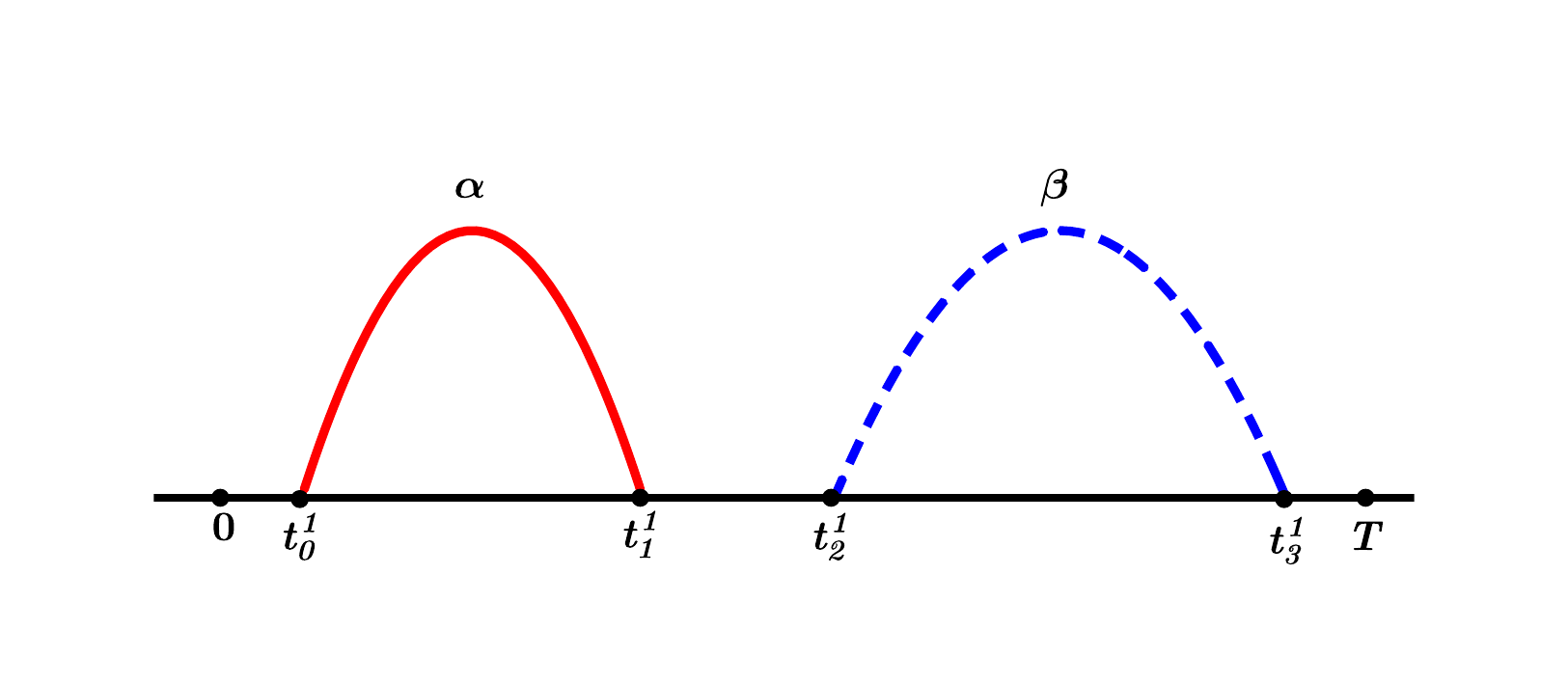}
    \caption{$\a$ and $\b$ satisfying \eqref{ii.6}}
    \label{fig-iii}
\end{figure}

Under condition
\eqref{ii.6}, the next result holds.

\begin{theorem}
\label{th2.1}
Assume \eqref{ii.6}. Then, the equilibrium $(1,1)$ is the unique $T$-periodic coexistence state of  \eqref{ii.1}. Thus, \eqref{ii.1} cannot admit non-trivial $T$-periodic coexistence states. Moreover,
\eqref{ii.1} possesses exactly two non-trivial $2T$-periodic coexistence states for every
\begin{equation*}
  \lambda>\frac{2}{\sqrt{A_1B_1}},
\end{equation*}
where $A_1$ and $B_1$ are those defined in \eqref{ii.5}.
Furthermore, in the special case when $A_1=B_1$ and $u(0)=v(0)$, for every $\lambda>\frac{2}{A_1}$ and $n\geq2$, \eqref{ii.1} has, at least, $n$ non-trivial $nT$-periodic coexistence states if $n$ is even, and $n-1$ if $n$ is odd.
\end{theorem}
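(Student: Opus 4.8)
The plan is to push everything through the Poincar\'e map. Under \eqref{ii.6} nothing evolves outside $[t_0^1,t_1^1]\cup[t_2^1,t_3^1]$; on $[t_0^1,t_1^1]$ one has $\b\equiv0$, so $v$ is frozen while $u$ is multiplied by $e^{\l A_1(1-v)}$, and on $[t_2^1,t_3^1]$ one has $\a\equiv0$, so $u$ is frozen while $v$ is multiplied by $e^{\l B_1(u-1)}$. Hence the Poincar\'e map of \eqref{ii.1} factorizes on the open first quadrant as $\mc{P}=\mc{P}_\b\circ\mc{P}_\a$ with $\mc{P}_\a(u,v)=(ue^{\l A_1(1-v)},v)$ and $\mc{P}_\b(u,v)=(u,ve^{\l B_1(u-1)})$. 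A $T$-periodic coexistence state is a positive fixed point of $\mc{P}$: from $\mc{P}(u_0,v_0)=(u_0,v_0)$ one gets first $e^{\l A_1(1-v_0)}=1$, so $v_0=1$, and then $e^{\l B_1(u_0-1)}=1$, so $u_0=1$; this gives the first assertion. For the $2T$-periodic states I would take a genuine $2$-cycle $\{z_0,z_1\}$, $z_1=\mc{P}z_0\ne z_0$, $\mc{P}z_1=z_0$; multiplying the two relations expressing $u_1/u_0$ and $u_0/u_1$ (and likewise for the $v$'s) yields $u_0+u_1=2$ and $v_0+v_1=2$, and substituting back, with $p:=u_0-1$, $q:=v_0-1\in(-1,1)$, the cycle equations become
\begin{equation*}
\l A_1\, q=2\,\mathrm{artanh}(p),\qquad \l B_1\, p=2\,\mathrm{artanh}(q),
\end{equation*}
with $(0,0)$ corresponding to $(1,1)$. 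These define the graphs $q=\tfrac{2}{\l A_1}\mathrm{artanh}(p)$ (odd, increasing, convex for $p>0$, slope $\tfrac{2}{\l A_1}$ at $0$, blowing up at $p=1$) and $q=\tanh(\tfrac{\l B_1}{2}p)$ (odd, increasing, concave for $p>0$, slope $\tfrac{\l B_1}{2}$ at $0$, bounded); convexity against concavity shows that for $p>0$ they intersect, and intersect exactly once, precisely when $\tfrac{\l B_1}{2}>\tfrac{2}{\l A_1}$, i.e. $\l>2/\sqrt{A_1B_1}$, and never otherwise. Together with its reflection this single intersection is a unique $2$-cycle — hence exactly two non-trivial $2T$-periodic coexistence states — once one checks, by inserting $e^{\mp\l A_1q_1}=\tfrac{1\mp p_1}{1\pm p_1}$ and its $q_1$-analogue into $\mc{P}_\a,\mc{P}_\b$, that the recovered points $(1\pm p_1,1\pm q_1)$ do form a $\mc{P}$-cycle.

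For the last assertion the key structural fact is that when $A_1=B_1=:a$ the map $\mc{P}$ is reversible: setting $R(u,v)=(v,u)$ one checks $R\mc{P}_\a R=\mc{P}_\b^{-1}$ and $R\mc{P}_\b R=\mc{P}_\a^{-1}$, so that $R\mc{P}R=\mc{P}^{-1}$; moreover $\mc{P}=I_2I_1$ for the involutions $I_1=R$ and $I_2=\mc{P}_\b R\mc{P}_\b^{-1}$, whose symmetry lines are $\mathrm{Fix}(I_1)=\{u=v\}$ and $\mathrm{Fix}(I_2)=\mc{P}_\b(\{u=v\})=\{v=ue^{\l a(u-1)}\}$. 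The standard reversibility identities then give, for a datum $(c,c)\in\mathrm{Fix}(I_1)$: if $\mc{P}^m(c,c)\in\mathrm{Fix}(I_1)$ then $\mc{P}^{2m}(c,c)=(c,c)$, and if $\mc{P}^m(c,c)\in\mathrm{Fix}(I_2)$ then $\mc{P}^{2m-1}(c,c)=(c,c)$. Hence, under the constraint $u(0)=v(0)$, an $nT$-periodic coexistence state is a zero $c\ne1$ of the \emph{scalar} equation
\begin{equation*}
\bigl[\mc{P}^{m}(c,c)\bigr]_1=\bigl[\mc{P}^{m}(c,c)\bigr]_2\quad(n=2m),\qquad
\bigl[\mc{P}^{m}(c,c)\bigr]_2=\bigl[\mc{P}^{m}(c,c)\bigr]_1\,e^{\l a(\,[\mc{P}^{m}(c,c)]_1-1)}\quad(n=2m-1),
\end{equation*}
both having $c\equiv1$ as trivial solution. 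This one-dimensional reduction is exactly what dissolves the even-multiplicity obstruction that one meets working with $\mc{P}^n$ in the plane.

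I would then run the bifurcation scheme of \cite{LM-2020} on these scalar equations. Linearising at $(1,1)$ gives $D\mc{P}(1,1)\in SL(2,\R)$ with trace $2-\l^2a^2$, so for $\l<2/a$ the origin is conjugate to a rotation by $\t=\t(\l)$ with $2\cos\t=2-\l^2a^2$ (hence $\l a=2\sin(\t/2)$). Using $M^m=\tfrac{\sin m\t}{\sin\t}M-\tfrac{\sin(m-1)\t}{\sin\t}I$ together with $(1,-1)(1,1)^T=0$, the $c$-derivative at $c=1$ of the scalar left-hand side equals, up to a non-vanishing factor, $\sin(n\t/2)$; it therefore changes sign simply exactly at $\l_{n,j}=\tfrac{2\sin(\pi j/n)}{a}$, $j=1,\dots,\lfloor n/2\rfloor$ — all $\le2/a$, with equality only for $j=n/2$ when $n$ is even, which is the case already treated. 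The second $c$-derivative vanishes there too, so each crossing is a supercritical pitchfork emitting, for $\l>\l_{n,j}$, two non-trivial branches. Rabinowitz's global alternative, the a priori bounds $\delta_K\le c\le n$ for $\l$ in compact sets (coming from the averages $\sum_kU_k=\sum_kV_k=n$ along a periodic orbit), and the constancy of the integer winding number $j$ along each branch, together forbid branches from colliding or returning to the trivial one and force each arm to persist for all $\l>\l_{n,j}$. Consequently, for every $\l>2/a\ (\ge\max_j\l_{n,j})$ one has passed all the $\l_{n,j}$ and obtains at least $2\lfloor n/2\rfloor$ distinct non-trivial $nT$-periodic coexistence states with $u(0)=v(0)$: namely $n$ if $n$ is even and $n-1$ if $n$ is odd.

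The main obstacle is this last step. Verifying that every crossing $\l_{n,j}$ is a supercritical pitchfork, rather than a one-sided transcritical bifurcation — which is precisely what turns $\lfloor n/2\rfloor$ into $2\lfloor n/2\rfloor$ — requires a careful second-order expansion of the $m$-fold iterate restricted to the diagonal; and the global bookkeeping (a priori bounds, non-collision of branches via the winding number, persistence for all $\l$ past $2/a$) is delicate because the twist composition $\mc{P}^n$ carries no Hamiltonian coordinates in which rotation numbers become transparent.
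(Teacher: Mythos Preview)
Your treatment of the $T$- and $2T$-periodic parts is correct. The $2T$ argument via the two odd graphs $q=\tfrac{2}{\lambda A_1}\mathrm{artanh}(p)$ (convex for $p>0$, blowing up at $p=1$) and $q=\tanh(\tfrac{\lambda B_1}{2}p)$ (concave for $p>0$, bounded) is a clean and more self-contained alternative to the paper's route, which reduces to the function $\varphi(x)=x\bigl(e^{\frac{e^{(1-x)\lambda A}-1}{e^{(1-x)\lambda A}+1}\lambda B}+1\bigr)-2$ and cites \cite[Th.~2.1]{LM-2020} for the exact-two-zeros conclusion. Your reversibility reduction for $A_1=B_1$ is also equivalent to what the paper does: the paper observes (via \cite[Lem.~3.1]{LM-2020}) that under $A=B$ and $u_0=v_0$ the two equations in \eqref{ii.9} collapse to one, which is precisely your restriction to $\mathrm{Fix}(I_1)$; and your bifurcation values $\lambda_{n,j}=\tfrac{2}{a}\sin(\pi j/n)$ are exactly the positive roots of the polynomials $p_n$ of \cite{LM-2020}.

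The genuine gap is in how you extract the factor $2$ in the count. You make it rest on every crossing being a \emph{supercritical pitchfork}, and you flag this yourself as the main obstacle. In fact the paper states explicitly (see the paragraph after Theorem~\ref{th2.1}) that ``the local behavior of most of the bifurcations from $(1,1)$ is unknown, except for $n\in\{2,3,4\}$'', so your proposed local route is not merely unverified but, as far as is known, not available. The paper instead defers entirely to \cite{LM-2020}, where the doubling is obtained \emph{globally}: each root of $p_n$ is simple, hence has odd generalized algebraic multiplicity, so Crandall--Rabinowitz applies and, crucially, the unilateral theorem \cite[Th.~6.4.3]{LG-2001} yields \emph{two} global subcontinua $C^{\pm}$ from each bifurcation value. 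A priori bounds in $x$ (your $\delta_K\le c\le n$) force unboundedness in $\lambda$, the divisibility structure $p_{n_1}\mid p_{n_2}\Leftrightarrow n_1\mid n_2$ (equivalently, your rotation-index invariant $j$) prevents the subcontinua from landing on the trivial branch at a different value, and non-existence for small $\lambda$ forces the $\lambda$-projection of each $C^{\pm}$ to contain $[\lambda_{n,j},\infty)$. That is what produces two solutions per bifurcation point for every $\lambda>2/A_1$, without any second- or third-order local computation. In short: your global ingredients (Rabinowitz alternative, a priori bounds, winding-number separation) are the right ones and essentially match \cite{LM-2020}; the pitchfork claim is a red herring and should be dropped.
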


\begin{proof}
Thanks to \eqref{ii.6}, \eqref{ii.1} can be integrated. Indeed,
for any given $(u_0,v_0)\in\mathbb{R}^2$, the (unique) solution of
\eqref{ii.1} such that $(u(0),v(0))=(u_0,v_0)$ is given by
\[
    u(t)=u_0e^{(1-v_0)\l\int_{0}^{t}\alpha(s)ds},\qquad v(t)=v_0e^{(-1+u(T))\l\int_{0}^{t}\beta(s)ds},\quad  t\in[0,T].
\]
Thus, the associated $T$-time and $2T$-time Poincar\'{e} maps are defined through
\begin{equation*}
    (u_1,v_1):=\mc{P}_1(u_0,v_0):=(u(T),v(T))=(u_0e^{(1-v_0)\l A_1},v_0e^{(-1+u_1)\l B_1})
\end{equation*}
    and
\begin{equation*}
    \begin{split}
    (u_2,v_2):= \mc{P}_2(u_0,v_0) & =\mc{P}_1(u_1,v_1)=(u_1e^{(1-v_1)\l A_1},v_1e^{(-1+u_2)\l B_1})\\&=(u_0e^{(2-v_0-v_1)\l A_1},v_0e^{(-2+u_1+u_2)\l B_1}).
    \end{split}
\end{equation*}
    It is apparent that the unique $T$-periodic coexistence state, i.e., the unique solution of \eqref{ii.1} such that $u_0,v_0>0$ and $\mc{P}_1(u_0,v_0)=(u_0,v_0)$, is the equilibrium (1,1). Similarly, the $2T$-periodic coexistence states are the solutions such that $u_0,v_0>0$ and $\mc{P}_2(u_0,v_0)=(u_0,v_0)$, i.e., those solutions satisfying
\begin{equation}
\label{ii.7}
    u_0,v_0>0,\qquad 2-u_0=u_1=u_0e^{(1-v_0)\l A_1},\qquad2-v_0=v_1=v_0e^{(1-u_0)\l B_1}.
\end{equation}
Thus, expressing $x\equiv v_0$ in terms of $u_0$, setting
$(A,B)\equiv (A_1,B_1)$, and adapting the corresponding argument
on \cite[p. 41]{LM-2020} it is easily seen that the $2T$-periodic
coexistence states are given by the zeroes of the function
\begin{equation*}
    \varphi(x)=x\left(e^{\frac{e^{(1-x)\l A}-1}{e^{(1-x)\l A}+1}\lambda B}+1 \right)-2.
\end{equation*}
This function has been already analyzed on \cite[Th. 2.1]{LM-2020}, where it was established that
it possesses  exactly two zeros different from the equilibrium $x=1$ for every $\lambda>\frac{2}{\sqrt{AB}}$.
This proves the first part of the theorem.
\par
By iterating $n$ times, it follows that the $nT$-time map is defined through
\begin{equation}
\begin{split}
\label{ii.8}
    (u_n,v_n)&:=\mc{P}_n(u_0,v_0)=\mc{P}_1^n(u_0,v_0)\\& =(u_0e^{(n-v_0-v_1-\cdots-v_{n-1})\l A},v_0e^{(u_1+u_2+\cdots+u_n-n)\l B}).
\end{split}
\end{equation}
Hence, due to \eqref{ii.8}, a solution $(u(t),v(t))$ of
\eqref{ii.1} provides us with a $nT$-periodic coexistence state
if, and only if, $u_0>0$, $v_0>0$ and
\begin{equation}
\label{ii.9}
    \left\{
    \begin{array}{ll}
    n&=v_0+v_1+\cdots+v_{n-1},\\
    n&=u_0+u_1+\cdots+u_{n-1}.
    \end{array}
    \right.
\end{equation}
Thus, assuming that $A=B$ and $u_0=v_0=x$, the system \eqref{ii.9} reduces to one equation (cf. \cite[Le. 3.1]{LM-2020}). Hence, setting
\begin{equation}
\label{ii.10}
E_n(\l,x):=\left\{
\begin{array}{ll}
\displaystyle \exp([\tfrac{n+1}{2}-x\sum_{\substack{i=0\\i\in 2\N}}^{n-1}E_i(\l,x)]\l A),&\quad n\in 2\N+1,\\[2ex]
\displaystyle\exp([x\sum_{\substack{i=1\\i\in 2\N+1}}^{n-1}E_i(\l,x)-\tfrac{n}{2}]\l A),&\quad n\in 2\N,
\end{array}
\right.
\end{equation}
where $E_0(\l,x)=1$, it follows from \cite[Th. 3.3]{LM-2020} that,
for every $n\geq1$,
\[
\varphi_n(x)=\varphi_{n-1}(x)-1+xE_{n-1}(\l,x)
\]
and $\varphi_0\equiv 0$, where these $\varphi_n$'s are the
functions whose zeroes provide us with the $nT$-periodic
coexistence states of \eqref{ii.1} that were constructed in
\cite{LM-2020}. Therefore, we are within the setting of
\cite{LM-2020}, where it was inferred from these features (see
Sections 2--6 of \cite{LM-2020}) the existence of, at least, $n$
non-trivial $nT$-periodic coexistence states if $n$ is even and
$n-1$ if $n$ is odd. This concludes the proof.
\end{proof}

Note that in \cite{LM-2020} we dealt with continuous non-negative
weight functions $\alpha$ and $\beta$ such that
\[
\supp\alpha\equiv[ 0,\tfrac{T}{2}]\;\;\hbox{and}\;\;\supp\beta\equiv[\tfrac{T}{2},T],
\]
whereas in Theorem \ref{th2.1} the weight functions  $\alpha$ and
$\beta$ are two arbitrary non-negative continuous functions with
disjoint supports. As the set of subharmonics obeys identical
equations as in \cite{LM-2020}, it is apparent that, much like in
\cite{LM-2020}, also in this more general case the global
bifurcation diagram of the positive subharmonics of \eqref{ii.1}
follows the general patterns sketched in Figure \ref{fig-i}, which
has been reproduced from \cite{LM-2020}.  Similarly, at the light
shared by the analysis of \cite{LM-2020}, Theorem \ref{th2.1}
establishes that the global topological structure of the
bifurcation diagram sketched in Figure \ref{fig-i} remains
invariant regardless the concrete values of $t_0^1$, $t_1^1$,
$t_2^1$, $t_3^1$ and the number and distribution of the components
of the supports of the weight functions $\a(t)$ and $\b(t)$ on
each of the intervals of the partition of $[0,T]$ induced by these
values. Naturally, much like in \cite{LM-2020}, Figure \ref{fig-i}
shows an ideal global bifurcation diagram, for as the local
behavior of most of the bifurcations from $(1,1)$ is unknown,
except for $n\in \{2,3,4\}$.

\subsection{The case when $k=\ell=1$ and ${\rm supp}\hspace{0.6mm}\b_1\subseteq[t_0^1,t_1^1]$}
\label{sec2.2}

Then,
\begin{equation}
\label{ii.11}
\supp\b_1\subseteq[t_0^1,t_1^1]\;\;\hbox{and}\;\;\supp\a_1\subseteq[t_2^1,t_3^1],
\end{equation}
where
$$
0\leq t_0^1<t_1^1\leq t_2^1<t_3^1\leq T.
$$
Figure \ref{fig-iv} shows  a simple example within this case.

\begin{figure}[h!]
    \centering
    \includegraphics[scale=0.6]{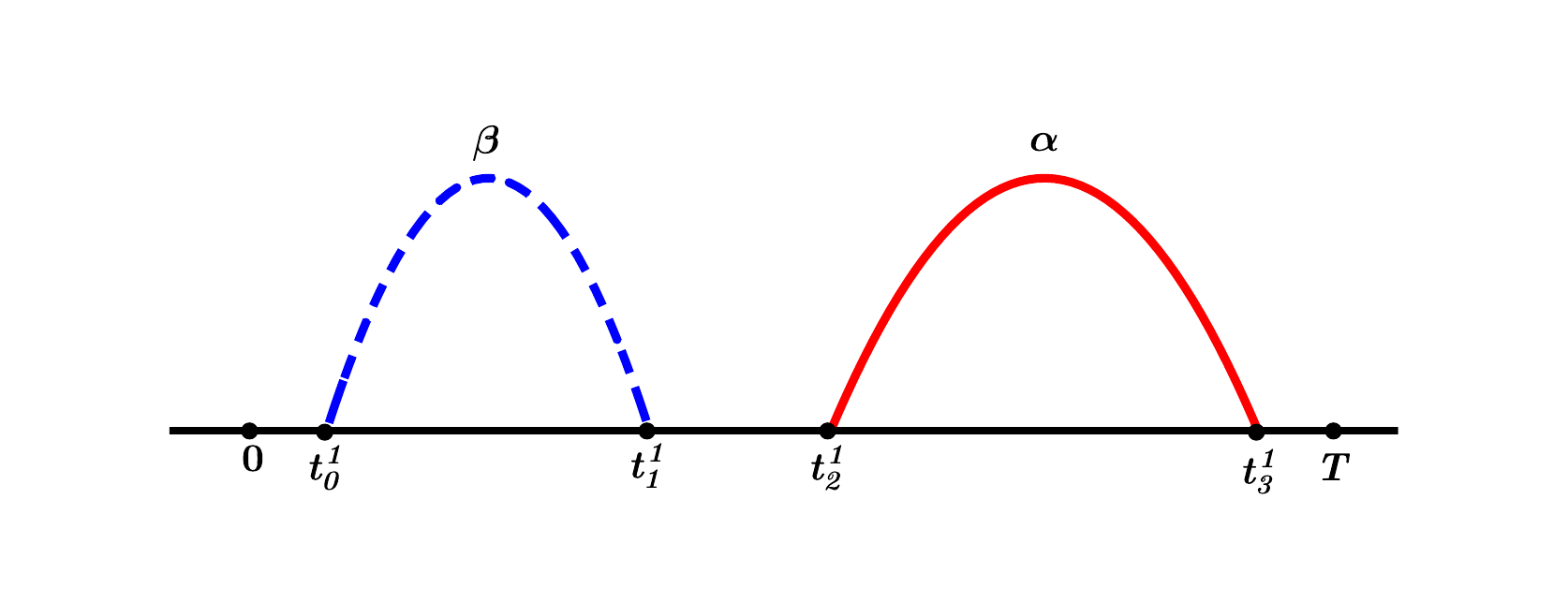}
    \caption{$\a$ and $\b$ satisfying \eqref{ii.11}.}
    \label{fig-iv}
\end{figure}

\begin{remark}\label{rem2.1}
{\rm Based on Theorem \ref{th2.1}, one can get, very easily,
solutions of \eqref{ii.1} satisfying \eqref{ii.11} in the interval
$[0,T]$. Indeed, if $(u_0,v_0)$ is the initial value to an
$nT$-periodic solution of \eqref{ii.1} for the weight distribution
\eqref{ii.6}, then, a solution with initial values
$(u_0e^{(1-v_0)A_1},v_0)$  provides us with an $nT$-periodic
solution of \eqref{ii.1} for the configuration \eqref{ii.11}. Next
section goes further by establishing that, for the distribution
\eqref{ii.11}, there are periodic solutions of \eqref{ii.6} with
initial data on the line $u=v$ by means of similar symmetry
reductions and techniques as in the proof of Theorem \ref{th2.1}.
Equivalently, fixing a time $\tau_0\in (t^1_1,t^1_2)$, and setting
$$
  \tilde{\alpha}(t):=\alpha(t+\tau_0),\qquad \tilde{\beta}(t):=\beta(t+\tau_0),
$$
the pair $(\tilde{\alpha},\tilde{\beta})$ lies within the
configuration of Figure \ref{fig-iii}. Thus, Theorem \ref{th2.1}
applies. Note that this is equivalent to consider the Poincar\'{e}
map with $\tau_0$ as initial time.}
\end{remark}

The next result focuses attention into the case when condition
\eqref{ii.11} holds. A genuine situation where this occurs is
represented in Figure \ref{fig-iv}. As this  case was left outside
the general scope of \cite{LM-2020}, it is a novelty here.

\begin{theorem}
\label{th2.2} Under condition \eqref{ii.11}, the equilibrium
$(1,1)$ is the unique $T$-periodic coexistence state of
\eqref{ii.1}. Moreover, \eqref{ii.1}  possesses exactly two
non-trivial $2T$-periodic coexistence states for every
\begin{equation}
\label{ii.12}
    \lambda>\frac{2}{\sqrt{A_1B_1}}.
\end{equation}
If, in addition, $A_1=B_1$, then the problem \eqref{ii.1} has, for every $\lambda>\frac{2}{A_1}$ and $n\geq 3$, at least $n-1$ non-trivial $nT$-periodic coexistence states with $u_0=v_0$ if $n$ is odd, whereas if $n$ is even, then \eqref{ii.1} possesses, at least, $n-2$ non-trivial $nT$-periodic coexistence states with $u_0=v_0$,
and exactly two with $u_0+v_0=2$.
\end{theorem}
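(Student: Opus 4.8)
The guiding idea is that configuration \eqref{ii.11} is, up to a time-shift, configuration \eqref{ii.6}, so the first two assertions are inherited from Theorem \ref{th2.1}, whereas the statements involving $u_0=v_0$ or $u_0+v_0=2$ demand exploiting a reversibility of the relevant Poincar\'e map directly. First I would integrate \eqref{ii.1} under \eqref{ii.11} exactly as in the proof of Theorem \ref{th2.1}: on $[0,t_1^1]$ only $\b$ acts, so $u\equiv u_0$ and $v$ jumps to $v_1:=v_0e^{(u_0-1)\l B_1}$; on $[t_2^1,t_3^1]$ only $\a$ acts, so $v\equiv v_1$ and $u$ jumps to $u_0e^{(1-v_1)\l A_1}$. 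Hence the $T$-Poincar\'e map is $\mc{Q}_1(u_0,v_0)=(u_0e^{(1-v_1)\l A_1},\,v_1)$, i.e. $\mc{Q}_1=\mc{A}\circ\mc{B}$ with $\mc{B}(u,v)=(u,ve^{(u-1)\l B_1})$ the $\b$-pulse and $\mc{A}(u,v)=(ue^{(1-v)\l A_1},v)$ the $\a$-pulse; since the Poincar\'e map of Theorem \ref{th2.1} is $\mc{P}_1=\mc{B}\circ\mc{A}$, one gets $\mc{Q}_1=\mc{B}^{-1}\circ\mc{P}_1\circ\mc{B}$. Equivalently (cf. Remark \ref{rem2.1}), shifting time by any $\tau_0\in(t_1^1,t_2^1)$ replaces $(\a,\b)$ by a pair satisfying \eqref{ii.6} with the same integrals $A_1,B_1$. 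Since such a time-shift induces a period-preserving bijection between the component-wise positive periodic solutions of the two systems and fixes the equilibrium $(1,1)$, the uniqueness of the $T$-periodic coexistence state and the existence of exactly two non-trivial $2T$-periodic coexistence states for $\l>2/\sqrt{A_1B_1}$ follow at once from Theorem \ref{th2.1}.

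For the part with $A_1=B_1$ the conjugacy is of no direct help (it would only produce solutions on $\mc{B}^{-1}(\{u=v\})$, not on the diagonal), so I would use a symmetry of $\mc{Q}_1$ itself. When $A_1=B_1$ one checks $\s\mc{A}\s=\mc{B}^{-1}$ and $\s\mc{B}\s=\mc{A}^{-1}$ for the swap $\s(u,v)=(v,u)$, whence $\s\mc{Q}_1\s=\mc{Q}_1^{-1}$: the map $\mc{Q}_1$ is reversible, with reversor $\s$ whose fixed set is exactly $\{u=v\}$. Iterating $\mc{Q}_1$ as in \eqref{ii.8} yields closed forms for $(u_n,v_n)$ in terms of the partial sums $\sum u_j,\sum v_j$, so the $nT$-periodicity conditions are once more $\sum_{j=0}^{n-1}u_j=\sum_{j=0}^{n-1}v_j=n$, as in \eqref{ii.9}. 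For an orbit issued from $(x,x)\in\mathrm{Fix}\,\s$, reversibility forces $(u_{n-j},v_{n-j})=(v_j,u_j)$, hence the two conditions coincide and collapse to a single scalar equation $F_n(x)=0$ with $F_n(1)=0$. Writing the recursion for $u_j(x),v_j(x)$ — which differs from \eqref{ii.10} in that here $v$ is updated with the \emph{old} $u$ and $u$ with the \emph{new} $v$ — and transporting the sign-change and monotonicity analysis of Sections 2--6 of \cite{LM-2020}, one should obtain at least $n-1$ non-trivial zeros of $F_n$ if $n$ is odd and at least $n-2$ if $n$ is even, each giving an $nT$-periodic coexistence state with $u_0=v_0$.

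For even $n$ I would additionally restrict to the line $\{u_0+v_0=2\}$. A direct substitution shows that on this line the two periodicity equations again reduce to one; for $n=2$ both become $u_0=(2-u_0)e^{(u_0-1)\l A_1}$, whose associated function has derivative $2-\l A_1$ at the trivial root $x=1$, strictly negative exactly when $\l>2/A_1$; combined with its signs as $x\to0^+$ and $x\to2^-$, this produces exactly two non-trivial zeros, one in $(0,1)$ and its $\s$-image in $(1,2)$. These are the two $2T$-periodic coexistence states of the theorem, now located on $\{u_0+v_0=2\}$ — note that on the diagonal $x\mapsto x+xe^{(x-1)\l A_1}-2$ is strictly increasing and so has only the root $x=1$. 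The analogous scalar function on $\{u_0+v_0=2\}$ for general even $n$ would then be analyzed in the same way to extract exactly two non-trivial roots.

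I expect the main obstacle to be the scalar-equation step for $n\ge3$: verifying that the modified recursion for the iterates of $\mc{Q}_1$ produces a function $F_n$ with the same qualitative profile as the $\varphi_n$ of \cite{LM-2020} apart from the loss of one (odd $n$) or two (even $n$) sign changes, and carrying the $\{u_0+v_0=2\}$ reduction and its sharp root count from $n=2$ to all even $n$. The integration, the conjugacy, and the transfer of the first two assertions from Theorem \ref{th2.1} are routine.
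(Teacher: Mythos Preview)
Your conjugacy argument for the first two assertions is correct and, as the paper itself notes in Remark \ref{rem2.1}, is the natural way to transfer them from Theorem \ref{th2.1}. The paper chooses instead to redo the direct analysis: it writes the $2T$-periodicity conditions \eqref{ii.13}, eliminates $u_0$ to obtain the scalar function $\psi$ in \eqref{ii.14}, and studies its sign, first and second derivatives to get exactly two zeros. Either route works.

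For the $A_1=B_1$, $u_0=v_0$ part, however, you are missing the paper's key shortcut. Your plan is to set up the scalar function $F_n$ and then ``transport the sign-change and monotonicity analysis of Sections 2--6 of \cite{LM-2020}'' to the modified recursion --- which you correctly flag as the main obstacle. The paper avoids this entirely. Writing the iterates through the $E_j$'s of \eqref{ii.10}, one finds $(u_n,v_n)=(xE_{2n}(-\lambda,x),xE_{2n-1}(-\lambda,x))$, so the linearisation at $x=1$ is $q_n(\lambda)=p_n(-\lambda)$. Since $p_{2n-1}$ and $p_{2n}/(2-A\lambda)$ are even in $\lambda$ (from \cite[Cor.~4.7]{LM-2020}), this immediately yields
\[
q_{2n-1}=p_{2n-1},\qquad \frac{q_{2n}}{2+A\lambda}=\frac{p_{2n}}{2-A\lambda},
\]
so the positive bifurcation points from $(\lambda,1)$ coincide with those of the $\varphi_n$-problem \emph{except} that $\lambda=2/A$ is excised from every $q_{2n}^{-1}(0)$. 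All these zeros being simple, Crandall--Rabinowitz plus global continuation (\cite[Th.~6.2.1, 6.4.3]{LG-2001}) then delivers the solutions, and the excision of $2/A$ is precisely what accounts for the drop from $n$ to $n-2$ in the even case. This $\lambda\mapsto-\lambda$ device is Proposition \ref{pr2.1}; it replaces the analysis you propose to redo.

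Finally, for the $u_0+v_0=2$ clause you are over-worrying: the paper only treats $n=2$, reducing to $\varphi_2(x)=x(e^{(1-x)\lambda A}+1)-2$ and invoking the exact-two-zeros result already proved for that function in Theorem \ref{th2.1}; it then declares the proof complete. The two $2T$-periodic solutions are $nT$-periodic for every even $n$, which is all the statement requires. There is no need to analyse a separate scalar function for each even $n$.
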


The main difference between Theorems \ref{th2.1} and \ref{th2.2}
relies on the fact that all the solutions of \eqref{ii.1} when
$A_1=B_1$ and $n$ is even have been constructed to satisfy
$u_0=v_0$ under condition \eqref{ii.6}, while \eqref{ii.1} only
admits $n-2$ solutions with $u_0=v_0$ and $2$ solutions with
$u_0+v_0=2$ when \eqref{ii.11} holds.

\begin{proof}
Since, for  every  $(u_0,v_0)\in \R^2$, the unique solution of
\eqref{ii.1} with $(u(0),v(0))=(u_0,v_0)$ is given through
\[
    u(t)=u_0e^{(1-v(T))\l \int_{0}^{t}\alpha(s)ds},\qquad v(t)=v_0e^{(-1+u_0)\l \int_{0}^{t}\beta(s)ds},\quad t\in[0,T],
\]
the $T$-time and $2T$-time Poincar\'{e} maps of \eqref{ii.1} are
given  by
\begin{equation*}
    (u_1,v_1):= \mc{P}_1(u_0,v_0):=(u(T),v(T))=(u_0e^{(1-v_1)\l A_1},v_0e^{(-1+u_0)\l B_1})
\end{equation*}
    and
\begin{equation*}
    \begin{split}
    (u_2,v_2):= \mc{P}_2(u_0,v_0)=\mc{P}_1(u_1,v_1)&=(u_1e^{(1-v_2)\l A_1},v_1e^{(-1+u_1)\l B_1})\\&=(u_0e^{(2-v_1-v_2)\l A_1},v_0e^{(-2+u_0+u_1)\l B_1}).
    \end{split}
\end{equation*}
It is easily seen that $(u_0,v_0)=(1,1)$  is the unique fixed point
of $\mc{P}_1$. Moreover, a solution of \eqref{ii.1},
$(u(t),v(t))$, is a $2T$-periodic coexistence state if, and only
if, $u_0>0$, $v_0>0$ and
$$
   (u_2,v_2)=\mc{P}_2(u_0,v_0)=(u_0,v_0).
$$
In other words,
\begin{equation}
\label{ii.13}
    u_0,\;\; v_0>0,\qquad 2-u_0=u_1=u_0e^{(v_0-1)\l A_1},\qquad2-v_0=v_1=v_0e^{(u_0-1)\l B_1}.
\end{equation}
Setting $(A,B)\equiv(A_1,B_1)$ and arguing as in the proof of
Theorem \ref{th2.1}, it becomes apparent that the non-trivial
$2T$-periodic coexistence states of \eqref{ii.1} are given by the
zeroes of the map
\begin{equation}
\label{ii.14}
    \psi(x):=x\left(e^{\frac{1-e^{(x-1)\l A}}{1+e^{(x-1)\l A}}\lambda B}+1 \right)-2
\end{equation}
with $x=v_0\neq 1$. By definition, it is obvious that
\[
    \psi(x)<0\;\;\hbox{for all}\; x\leq0,\;\;\psi(1)=0,\;\;\psi(x)>0\;\;\hbox{for all}\;x\geq 2.
\]
Moreover, by differentiating with respect to $x$, after rearranging terms,  yields
\begin{align*}
  \psi'(x) & = e^{\tfrac{1-e^{(x-1)\l A}}{1+e^{(x-1)\l A}}\lambda B}\left[ 1-2\l^2
  AB x \tfrac{e^{(x-1)\l A}}{(1+e^{(x-1)\l A})^2} \right]+1, \\[1ex]
  \psi''(x) & =e^{{\frac{1-e^{(x-1)\l A}}{1+e^{(x-1)\l A}}\lambda B}}\!\left[x\!\left(\tfrac{2\lambda^2 ABe^{(x-1)\l A}}{(1+e^{(x-1)\l A})^2}\right)^2\!-\!\tfrac{4\lambda^2ABe^{(x-1)\l A}}{(1+e^{(x-1)\l A})^2}\!+\!\tfrac{2\lambda^3A^2Bxe^{(x-1)\l A}(e^{(x-1)2\l A}-1)}{(1+e^{(x-1)\l A})^4} \right],
\end{align*}
for all $x \geq 0$. In particular,
\[
    \psi'(1)=2-\lambda^2\frac{AB}{2}.
\]
Thus, owing to \eqref{ii.12}, $\psi'(1)<0$. Summarizing,
\eqref{ii.12} implies that
$$
  \psi(0)=-2<0, \quad \psi(1)=0,\quad \psi'(1)<0,\quad \psi(2)>0.
$$
Hence, the function $\psi$ possesses, at least, one zeroes in each
of the intervals $(0,1)$ and $(1,2)$. Therefore, \eqref{ii.1} has,
at least, two $2T$-periodic coexistence states. Moreover, adapting
the analysis carried out in \cite[Sec. 2]{LM-2020}, from the
previous value of $\psi''(x)$,  it is easily seen that any
critical point, $x_c$, of $\psi$ satisfies $\psi''(x_c)<0$ if
$x_c\in (0,1)$ and $\psi''(x_c)>0$ if $x_c\in (1,2)$.
Consequently, \eqref{ii.1} possesses exactly two $2T$-periodic
coexistence states under condition \eqref{ii.12}. This ends the
proof of the first assertion.
\par
Subsequently, we assume  that
\begin{equation}
\label{ii.15}
  A=B, \qquad u_0=v_0=x.
\end{equation}
In this case the $nT$-time map is defined as follows
\begin{equation*}
\begin{split}
(u_n,v_n)&:=\mc{P}_n(u_0,v_0)=\mc{P}_1^n(u_0,v_0)\\& =(u_0e^{(n-v_1-v_2-\cdots-v_{n})\l A},v_0e^{(u_0+u_2+\cdots+u_{n-1}-n)\l A}).
\end{split}
\end{equation*}
Thus, $(u_n,v_n)=(u_0,v_0)$, i.e., $(u_0,v_0)$ provides us with a subharmonic of order $n\geq 1$ of \eqref{ii.1}, if and only if
\begin{equation}
\label{ii.16}
\left\{
\begin{array}{ll}
n&=v_0+v_1+\cdots+v_{n-1},\\
n&=u_0+u_1+\cdots+u_{n-1}.
\end{array}
\right.
\end{equation}
Adapting the argument of the proof of \cite[Lem. 3.1]{LM-2020}, it
is easily seen that the two equations of \eqref{ii.16} coincide
under condition \eqref{ii.15}. Thus, the solutions of the system
\eqref{ii.16} are the zeroes of the function
\[
\psi_n(x):=x+u_1(x)+\ldots+u_{n-1}(x)-n, \qquad x >0.
\]
Consequently, the $nT$-periodic coexistence states of \eqref{ii.1} are given through $\psi^{-1}(0)$.
\par
Arguing as in the proof of \cite[Prop. 3.2]{LM-2020}, it becomes apparent that
\[
(u_n,v_n):=\mathcal{P}_n(x,x)=(xE_{2n}(-\l,x),xE_{2n-1}(-\l,x))
\]
for all $n\geq 1$. Hence,
\begin{equation}
\label{ii.17}
(1,1)=\mathcal{P}_n(1,1)=(E_{2n}(-\l,1),E_{2n-1}(-\l,1))
\end{equation}
for all $n\geq1$. Further, by the proof of \cite[Th. 3.3]{LM-2020}, we find that
\[
  \psi_n(\l,x)=x\sum_{j=0}^{n-1}E_j(-\l,x)-n=\psi_{n-1}(x)-1+xE_{n-1}(-\l,x)
\]
for all $x>0$. Note that $\psi_n$ also depends on the parameter $\l>0$.
To make explicit this dependence, we will subsequently write $\psi_n(\l,x)$,
instead of  $\psi_n(x)$, for all $n\geq 1$.
\par
By \eqref{ii.17}, differentiating with respect to $x$ and
particularizing at $x=1$ yields
\begin{equation}
\label{ii.18}
q_n(\l):=\frac{\partial\psi_n}{\partial x}(\l,1)=n+\sum_{j=1}^{n-1}E_j'(-\l,1).
\end{equation}
Adapting the induction argument of the proof of \cite[Lemma 4.1]{LM-2020},
it follows from the definition of the $E_j$'s that
the function $q_n(\l)$ is a polynomial for all $n\geq 1$.

\par
The next result relates the sequence
$\{q_n\}_{n\geq1}$ with the corresponding sequence
$\{p_n\}_{n\geq1}$ constructed in \cite{LM-2020} under condition
\eqref{ii.6}.

\begin{proposition}
\label{pr2.1}
For every $n\geq1$,
\begin{equation}
\label{ii.19}
q_{2n-1}(\l)=p_{2n-1}(\l), \qquad \frac{q_{2n}(\lambda)}{2+A\lambda}=\frac{p_{2n}(\lambda)}{2-A\lambda},
\end{equation}
and
\begin{equation}
\label{ii.20}
q_n(\l)=[2+(-1)^nA\l]q_{n-1}(\l)-q_{n-2}(\l).
\end{equation}
\end{proposition}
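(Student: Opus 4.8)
The plan is to establish first the three-term recursion \eqref{ii.20}, and then to deduce the two identities in \eqref{ii.19} by an induction that also uses the companion recursion satisfied by the sequence $\{p_n\}_{n\geq1}$ of \cite{LM-2020} together with the common initial data $q_0=p_0=0$ and $q_1=p_1=1$.

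\noindent\textbf{Step 1 (the recursion \eqref{ii.20}).} Differentiating with respect to $x$ the identity $\psi_n(\l,x)=\psi_{n-1}(\l,x)-1+xE_{n-1}(-\l,x)$ recalled just before \eqref{ii.18}, evaluating at $x=1$, and using that $E_j(-\l,1)=1$ for every $j\geq0$ (by \eqref{ii.17}), one obtains
\[
   q_n(\l)=q_{n-1}(\l)+1+E_{n-1}'(-\l,1),\qquad n\geq 1,
\]
where $E_j'$ denotes $\partial_x E_j$; in particular this re-proves \eqref{ii.18}. Subtracting the relations written for two consecutive values of $n$, we see that \eqref{ii.20} is equivalent to
\[
   E_{n-1}'(-\l,1)-E_{n-2}'(-\l,1)=(-1)^{n}A\l\,q_{n-1}(\l),\qquad n\geq 2.
\]
To prove this I would substitute $-\l$ for $\l$ in the closed forms \eqref{ii.10} of $E_{n-1}$ and $E_{n-2}$, differentiate each in $x$, set $x=1$, and simplify using $E_i(-\l,1)=1$ and $E_0\equiv1$. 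Since \eqref{ii.10} treats even and odd indices differently, this has to be done according to the parity of $n$; in both cases $E_{n-1}'(-\l,1)$ and $-E_{n-2}'(-\l,1)$ each emerge as $(-1)^{n}A\l$ times an integer plus a partial sum of the $E_i'(-\l,1)$, these two contributions being complementary, so their integer parts add to $n-1$ and their partial sums reassemble into $\sum_{i=1}^{n-2}E_i'(-\l,1)$; by \eqref{ii.18} this recovers $(-1)^{n}A\l\,q_{n-1}(\l)$. The even/odd index bookkeeping, and checking that these partial sums telescope exactly onto the full range $1\leq i\leq n-2$, is the only delicate point of the whole argument. The initialization $q_0=0$, $q_1=1$ and $q_2=2+A\l$ is read off directly from $\psi_0\equiv0$, $\psi_1(\l,x)=x-1$ and $E_1(-\l,x)=e^{(x-1)\l A}$.

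\noindent\textbf{Step 2 (the identities \eqref{ii.19}).} Running the computation of Step 1 under configuration \eqref{ii.6}, that is, with $\l$ in place of $-\l$ in \eqref{ii.10}, gives (as in \cite[Lemma 4.1]{LM-2020}) that the polynomials $p_n(\l)$ of \cite{LM-2020} satisfy $p_0=0$, $p_1=1$ and the companion recursion $p_n(\l)=[2-(-1)^{n}A\l]\,p_{n-1}(\l)-p_{n-2}(\l)$, with the sign of the $A\l$ term opposite to the one in \eqref{ii.20}. I would then prove, by induction on $n\geq1$, the two statements
\[
   q_{2n-1}(\l)=p_{2n-1}(\l),\qquad (2-A\l)\,q_{2n}(\l)=(2+A\l)\,p_{2n}(\l),
\]
which is \eqref{ii.19} written so as to avoid any division. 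The base case $n=1$ is immediate, since $q_1=p_1=1$ and $(2-A\l)q_2=(2-A\l)(2+A\l)=(2+A\l)p_2$. Assuming both statements for $n$, the recursion \eqref{ii.20} with $(-1)^{2n+1}=-1$ gives $q_{2n+1}=(2-A\l)q_{2n}-q_{2n-1}$, and inserting the inductive identities turns this into $(2+A\l)p_{2n}-p_{2n-1}=p_{2n+1}$; next, \eqref{ii.20} with $(-1)^{2n+2}=1$ yields $(2-A\l)q_{2n+2}=(2-A\l)(2+A\l)q_{2n+1}-(2-A\l)q_{2n}=(2+A\l)[(2-A\l)p_{2n+1}-p_{2n}]=(2+A\l)p_{2n+2}$, which closes the induction. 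Dividing the second identity by $(2-A\l)(2+A\l)$ gives precisely \eqref{ii.19}, completing the argument.
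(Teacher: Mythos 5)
Your proposal is correct, but it takes a genuinely different (and longer) route than the paper's. The paper's first move, and really its only move, is the observation that comparing \eqref{ii.18} with the definition $p_n(\l)=n+\sum_{j=1}^{n-1}E_j'(\l,1)$ of \cite{LM-2020} gives $q_n(\l)=p_n(-\l)$ outright. From that identity, \eqref{ii.19} is immediate from the parity facts of \cite[Cor.~4.7]{LM-2020} (that $p_{2n-1}(\l)$ and $p_{2n}(\l)/(2-A\l)$ are even in $\l$), and \eqref{ii.20} is then a one-line consequence of the recursion \eqref{ii.21}--\eqref{ii.22} for $p_n$. You never make the observation $q_n(\l)=p_n(-\l)$ explicit and never invoke the parity result, and as a consequence you have to do quite a bit more work: your Step~1 re-derives, at the level of the $E_j$'s with $\l\mapsto-\l$, essentially the same even/odd bookkeeping that underlies the $p_n$-recursion of \cite[Th.~4.6]{LM-2020}, and your Step~2 then extracts \eqref{ii.19} by an induction pairing \eqref{ii.20} with that companion $p_n$-recursion. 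I checked the index arithmetic you flag as ``the only delicate point'': the even and odd partial sums in $E_{n-1}'(-\l,1)$ and $E_{n-2}'(-\l,1)$ do telescope onto $\sum_{i=1}^{n-2}E_i'(-\l,1)$, the counting constants do add to $n-1$, and $E_0'\equiv 0$ kills the stray $i=0$ term, so your Step~1 is sound even though left as a sketch. Your Step~2 induction is also correct, and its base case uses $q_2=2+A\l$, $p_2=2-A\l$ as you state. So the argument goes through; what it loses relative to the paper is the single observation $q_n(\l)=p_n(-\l)$, which collapses both halves of the proposition into direct applications of results already recorded in \cite{LM-2020}.
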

\begin{proof}
According to \cite[(4.6)]{LM-2020}, $p_n$  is defined as
\[
p_n(\l):=\frac{\partial\varphi_n}{\partial x}(\l,1)=n+\sum_{j=1}^{n-1}E_j'(\l,1).
\]
Thus, by \eqref{ii.19}, $q_n(\l)=p_n(-\l)$. By \cite[Cor. 4.7]{LM-2020},
$p_{2n-1}(\l)$ and $\frac{p_{2n}(\l)}{2-A\l}$ are even functions in $\l$. Hence,
\[
q_{2n-1}(\l)=p_{2n-1}(\l)\quad\hbox{and}\quad
q_{2n}(\l)=p_{2n}(-\l)=\frac{p_{2n}(-\l)}{2-A(-\l)}(2+A\l)=\frac{p_{2n}(\l)}{2-A\l}(2+A\l).
\]
This concludes the proof of \eqref{ii.19}.
\par
On the other hand, by \cite[Th. 4.6]{LM-2020},  we already know that
\begin{equation}
\label{ii.21}
p_{2n-1}(\l)=(2+A\l)p_{2n-2}(\l)-p_{2n-3}(\l)
\end{equation}
and
\begin{equation}
\label{ii.22}
\frac{p_{2n}(\l)}{2-A\l}=p_{2n-1}(\l)-\frac{p_{2n-2}(\l)}{2-A\l}.
\end{equation}
Therefore, owing to \eqref{ii.19}, \eqref{ii.21} and \eqref{ii.22},
we find that, for every $n\geq 1$,
\[
\frac{q_{2n-1}(\l)}{2-A\l}=\frac{p_{2n-1}(\l)}{2-A\l}=(2+A\l)\frac{p_{2n-2}(\l)}{2-A\l}-\frac{p_{2n-3}(\l)}{2-A\l}=q_{2n-2}(\l)-\frac{q_{2n-3}(\l)}{2-A\l}
\]
and
\[
\frac{q_{2n}(\l)}{2+A\l}=\frac{p_{2n}(\l)}{2-A\l}=p_{2n-1}(\l)-\frac{p_{2n-2}(\l)}{2-A\l}=
q_{2n-1}(\l)-\frac{q_{2n-2}(\l)}{2+A\l}.
\]
So, \eqref{ii.20} holds, and the proof is complete.
\end{proof}

As a direct consequence of \eqref{ii.19}, the corresponding sets
of bifurcation points from the curve $(\l,x)=(\l,1)$ coincide
under conditions \eqref{ii.6} and \eqref{ii.11}  as soon as
$u_0=v_0 (=x)$ and $A=B$, except for the bifurcation point
$(\l,x)=(2/A,1)$, because
$$
  2/A\in p_{2n}^{-1}(0)\setminus q_{2n}^{-1}(0)\quad \hbox{for all}\;\; n\geq 1.
$$
Moreover, also by \eqref{ii.19}, the mathematical analysis carried
out in  Sections 5 and 6 of \cite{LM-2020} applies \emph{mutatis
mutandis} to cover the case when \eqref{ii.11} holds, instead of
\eqref{ii.6}. As a byproduct, also in the case when \eqref{ii.11}
holds, all the zeroes of the polynomials $q_n(\l)$ are simple.
Thus, the Crandall--Rabinowitz theorem \cite{CrRa-1971} provides
us with a  local analytic curve of $nT$-periodic solutions.
Moreover, since the generalized algebraic multiplicity of Esquinas
and L\'{o}pez-G\'{o}mez \cite{ELG-1988} equals one, according to
\cite[Th. 6.2.1]{LG-2001} and the unilateral theorem \cite[Th.
6.4.3]{LG-2001}, these local curves of subharmonics can be
extended to maximal connected components of the set of
$nT$-subharmonics of \eqref{ii.1}. This proves the theorem when
$u_0=v_0=x$ and $A=B$, regardless the oddity of $n\geq 1$.
\par
Finally, assume that $2=u_0+v_0$ and $A=B$. As we already know that
a solution $(u(t),v(t))$ is $2T$-periodic if, and only if,
\[
  2=u_0+v_0\qquad 2=v_0+v_1=v_0+v_0e^{(u_0-1)\l A},
\]
it becomes apparent that the non-trivial  $2T$-periodic coexistence states of \eqref{ii.1} are the zeroes of the function
\[
  \v_2(x)=x[e^{(1-x)\l A}+1]-2, \qquad x \in (0,2)\setminus\{1\}.
\]
As, according to the proof of Theorem \ref{th2.1}, $\v_2$ possesses exactly two zeroes, the proof of Theorem \ref{th2.2} is completed.
\end{proof}

Since, according to \eqref{ii.19}, we already know that
\[
\{r\in q_n^{-1}(0)\;:\;r>0,\;n\geq1\}=\{r\in p_n^{-1}(0)\;:\;r>0,\;n\geq1\}\setminus\{2\},
\]
the global bifurcation diagram of subharmonics of \eqref{ii.1}
when \eqref{ii.11}, instead of \eqref{ii.6}, holds true, can be
obtained from the global bifurcation diagram plotted in Figure
\ref{fig-i} by removing the component of subharmonics of order
two. However, even the local behavior of the corresponding
components of subharmonics of order $n$ in each of the cases
\eqref{ii.6} and \eqref{ii.11} might be different, because, in
general, $\varphi_n\neq\psi_n$ for all $n\geq 2$ and, hence, the
identity  $\v_n^{-1}(0)= \psi_n^{-1}(0)$ cannot be guaranteed.
\par
By \eqref{ii.7}, $0<v_0<1$ if $0<u_0<1$. Similarly, $1<v_0<2$ if
$1<u_0<2$. Thus, the $2T$-periodic coexistence states of
\eqref{ii.1} under condition \eqref{ii.6} are localized in the
shadowed region of the left plot in Figure \ref{fig-v}. Moreover,
by \eqref{ii.13}, $1<v_0<2$ if $0<u_0<1$, and $0<v_0<1$ if
$1<u_0<2$. Thus, under condition \eqref{ii.11}, the $2T$-periodic
coexistence states of \eqref{ii.1} lye in the shadowed area of the
right plot of Figure \ref{fig-v}. This explains why \eqref{ii.1}
cannot admit a subharmonic of order two if $u_0=v_0$.

\begin{figure}[h!]
    \centering
    \includegraphics[scale=1]{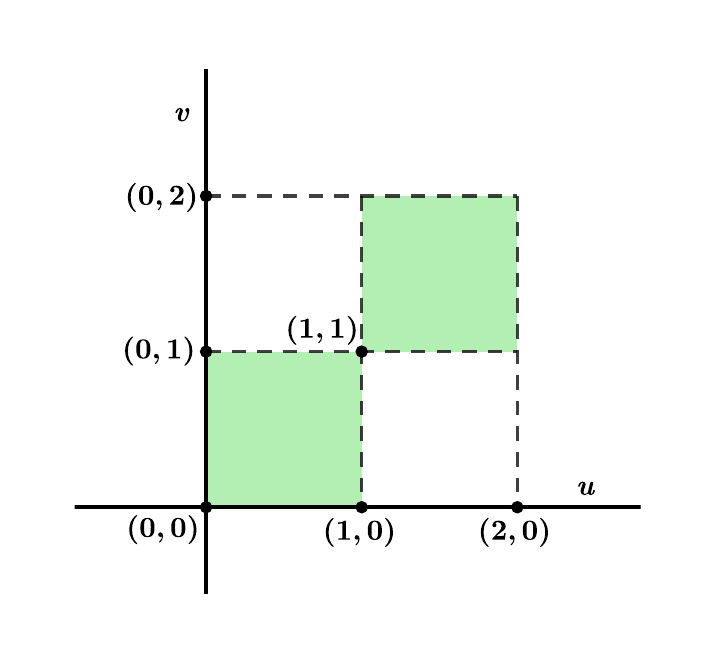}
    \includegraphics[scale=1]{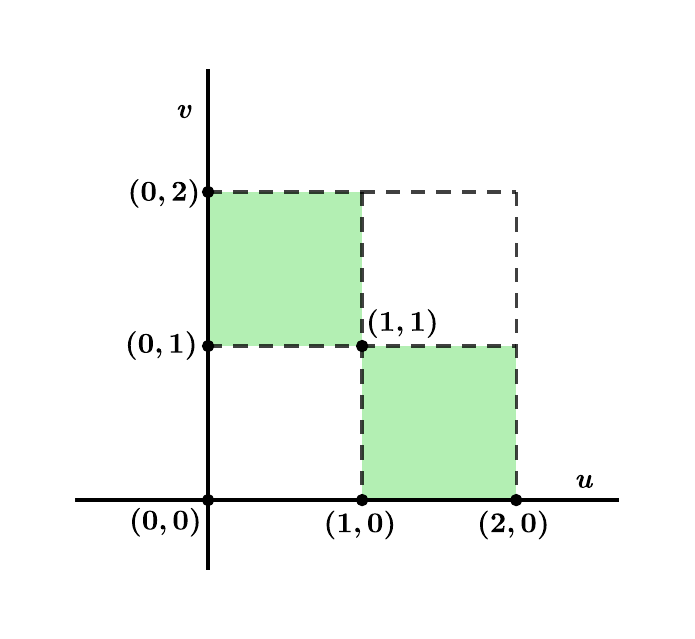}
    \caption{The shadow regions on each of these figures represent
    the quadrants of the phase-plane containing the initial data to
    $2T$-periodic coexistence states of \eqref{ii.1} under the
    conditions \eqref{ii.6} (left) and \eqref{ii.11} (right).}
    \label{fig-v}
\end{figure}

\subsection{The case when $k=\ell\geq 2$}
\label{sec2.3}
Then, either
\begin{equation}
\label{ii.23}
\supp\a_j\subseteq[t_0^j,t_1^j]\quad \hbox{and}\quad \supp\b_j\subseteq[t_2^j,t_3^j]\quad \hbox{for every}\;\; j\in\{1,2,\ldots,k\},
\end{equation}
or
\begin{equation}
\label{ii.24}
\supp\b_j\subseteq[t_0^j,t_1^j]\quad \hbox{and}\quad \supp\a_j\subseteq[t_2^j,t_3^j]\quad \hbox{for every}\;\; j\in\{1,2,\ldots,k\},
\end{equation}
for some
$$
   0\leq t_0^1<t_1^1\leq t_2^1<t_3^1\leq\cdots\leq t_0^k<t_1^k\leq t_2^k<t_3^k\leq T.
$$
As in Theorems \ref{th2.1} and \ref{th2.2}, to analyze the higher
order subharmonics of \eqref{ii.1} we need to impose the
constraints
\begin{equation}
\label{ii.25}
A_1:=\int_{0}^{T}\alpha_1=\cdots=\int_{0}^{T}\alpha_k =
\int_{0}^{T}\beta_1=\cdots=\int_{0}^{T}\beta_k, \qquad u_0=v_0 >0.
\end{equation}
Even in the  simplest case when $k=\ell=1$ it is far from evident
that the analysis of \cite{LM-2020} will be possible to refine  as to
sharpen Theorems \ref{th2.1} and \ref{th2.2} to cover the general
case when
$$
  \int_0^T \a \neq \int_0^T \b.
$$
Essentially, \eqref{ii.25} reduces the problem of finding out the
subharmonics of \eqref{ii.1} to the problem of getting the zeroes
of a sequence of real functions, instead of vectorial ones, much
like in the simplest case when $k=\ell=1$ already covered by
Sections \ref{sec2.1} and \ref{sec2.2}. The next lemma is pivotal
in the proof of the main theorem of this section. We observe that
it holds independently of \eqref{ii.25}.
\begin{lemma}
\label{le2.1}
Assume
\[
A_1:=\int_{0}^{T}\alpha_1=\cdots=\int_{0}^{T}\alpha_k\quad\hbox{and}\quad B_1:=\int_{0}^{T}\beta_1=\cdots=\int_{0}^{T}\beta_k.
\]
Then, under condition \eqref{ii.23} (resp. \eqref{ii.24}),
for any integers $n,m,q,r\geq1$ such that
$$nm=qr,$$
the Poincar\'{e} map of \eqref{ii.1} at time $nT$  for $k=\ell=m$,
denoted by $\mathcal{P}_{n}^{m,\a,\b}$ (resp.
$\mathcal{P}_{n}^{m,\b,\a}$), equals the Poincar\'{e} map of
\eqref{ii.1} at time $qT$ for $k=\ell=r$; denoted, naturally, by
$\mathcal{P}_{q}^{r,\a,\b}$ (resp. $\mathcal{P}_{q}^{r,\b,\a}$).
\end{lemma}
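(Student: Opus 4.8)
The plan is to exploit the degeneracy $|Z|=0$ together with the partition structure of the supports to reduce the Poincar\'{e} map, in \emph{every} admissible configuration with $k=\ell$, to a fixed power of a single planar map, and then to read off the asserted equality directly from the arithmetic identity $nm=qr$.

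First I would record the two \emph{elementary transition maps}. Assume \eqref{ii.23} and look at the $i$-th $\alpha$-block $[t_0^i,t_1^i]$. By the partition the intervals $[t_0^j,t_1^j]$ are pairwise disjoint and every $[t_2^j,t_3^j]$ lies to the right of $[t_0^j,t_1^j]$, so up to a null set one has $\alpha\equiv\alpha_i$ and $\beta\equiv0$ on $[t_0^i,t_1^i]$. Hence along any solution $v$ is constant there, $v\equiv\bar v$, while $u$ solves the linear equation $u'=\lambda\alpha_i(t)(1-\bar v)u$, so that $u(t_1^i)=u(t_0^i)\exp\!\big(\lambda(1-\bar v)\!\int_0^T\!\alpha_i\big)=u(t_0^i)e^{\lambda(1-\bar v)A_1}$, using $\int_0^T\alpha_i=A_1$ and $\supp\alpha_i\subseteq[t_0^i,t_1^i]$. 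Thus the transition across any $\alpha$-block is the same map
\[
\mathcal{A}(u,v):=\big(u\,e^{\lambda(1-v)A_1},\,v\big).
\]
Symmetrically, on each $\beta$-block $[t_2^j,t_3^j]$ one has $\alpha\equiv0$, $\beta\equiv\beta_j$, so $u$ is frozen and $v$ gets multiplied by $e^{\lambda(u-1)B_1}$; the transition across any $\beta$-block is
\[
\mathcal{B}(u,v):=\big(u,\,v\,e^{\lambda(u-1)B_1}\big).
\]
On each of the remaining (``dead'') subintervals $[0,t_0^1]$, $[t_1^i,t_2^i]$, $[t_3^i,t_0^{i+1}]$, $[t_3^k,T]$ both equations have vanishing right-hand side, so the transition there is the identity.

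Next I would compose along $[0,T]$, using the convention $\tau_0=0$ fixed above. Under \eqref{ii.23} with $k=\ell=m$, the active blocks are encountered in the order $\alpha_1,\beta_1,\alpha_2,\beta_2,\dots,\alpha_m,\beta_m$, so the time-$T$ Poincar\'{e} map equals $(\mathcal{B}\circ\mathcal{A})^m$. Since $\alpha$ and $\beta$ are $T$-periodic, the time-$nT$ map is the $n$-th iterate of the time-$T$ map, whence
\[
\mathcal{P}_{n}^{m,\alpha,\beta}=\big((\mathcal{B}\circ\mathcal{A})^{m}\big)^{n}=(\mathcal{B}\circ\mathcal{A})^{nm}.
\]
Running the identical argument with $(r,q)$ in place of $(m,n)$ gives $\mathcal{P}_{q}^{r,\alpha,\beta}=(\mathcal{B}\circ\mathcal{A})^{qr}$, and since $nm=qr$ the two maps coincide. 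For the configuration \eqref{ii.24} the blocks are met in the order $\beta_1,\alpha_1,\dots,\beta_m,\alpha_m$, so the time-$T$ map is $(\mathcal{A}\circ\mathcal{B})^m$, and the very same computation yields $\mathcal{P}_{n}^{m,\beta,\alpha}=(\mathcal{A}\circ\mathcal{B})^{nm}=(\mathcal{A}\circ\mathcal{B})^{qr}=\mathcal{P}_{q}^{r,\beta,\alpha}$.

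I do not expect a genuine obstacle here: the entire content is the block-by-block explicit integrability of the degenerate system. The only point needing a little care is the bookkeeping that makes the reduction \emph{exact} --- checking, from the partition in \eqref{ii.23} (resp. \eqref{ii.24}), that on each $\alpha$-block only $\alpha_i$ survives and $\beta$ vanishes identically (so $v$ is truly frozen and the active weight has integral exactly $A_1$ over the block), that the dead subintervals contribute the identity, and that, \emph{because} $k=\ell$, the blocks strictly alternate, so the time-$T$ map is literally $(\mathcal{B}\circ\mathcal{A})^m$ and not some other arrangement of $m$ copies of $\mathcal{A}$ and $m$ of $\mathcal{B}$. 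Granted this, the lemma is precisely the identity $nm=qr$; note that only $\int_0^T\alpha_i\equiv A_1$ and $\int_0^T\beta_j\equiv B_1$ are used, and in particular nothing like $A_1=B_1$ or \eqref{ii.25} enters.
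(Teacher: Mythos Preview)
Your proof is correct and follows essentially the same approach as the paper: both exploit the block-by-block integrability of the degenerate system to show that the time-$nT$ Poincar\'e map depends only on the product $nm$. Your packaging via the elementary maps $\mathcal{A}$ and $\mathcal{B}$ and the identity $\mathcal{P}_{n}^{m,\alpha,\beta}=(\mathcal{B}\circ\mathcal{A})^{nm}$ is a bit cleaner than the paper's explicit re-indexing leading to \eqref{ii.27}, but the underlying argument is the same.
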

\begin{proof}
Assume \eqref{ii.23} and $k=\ell=m$. Then, integrating in $[0,t_0^2]$ yields
\[
u(t)=u_0e^{(1-v(t_0^1))\l\int_{0}^{t}\a_1(s)ds},
\qquad v(t)=v_0e^{(u(t_0^2)-1)\l\int_{0}^{t}\b_1(s)ds},
\]
for all $t\in[0,t_0^2]$, because $v(t_0^1)=v_0$ and $u(t_0^2)=u(t_2^1)$. Arguing by induction, assume that
\[
u(t)=u_0e^{\sum_{j=1}^{m-1}(1-v(t_0^j))\l\int_{0}^{t}\a_j(s)ds},\qquad v(t)=v_0e^{\sum_{j=2}^{m}(u(t_0^j)-1)\l\int_{0}^{t}\b_{j-1}(s)ds},
\]
for all $t\in[0,t_0^m]$. Then, integrating in $[t_0^m,T]$, it becomes apparent that
\begin{align*}
u(t)& =u_0e^{\sum_{j=1}^{m} (1-v(t_0^j))\l\int_{0}^{t}\a_j(s)ds},\\ v(t) & =
v_0e^{\sum_{j=2}^{m}(u(t_0^j)-1)\l\int_{0}^{t}\b_{j-1}(s)ds+(u(t_0^1+T)-1)\int_0^t \b_m(s)ds}
\end{align*}
for all $t\in[0,T]$. Thus, iterating $n$ times, we find that, for every $t\in [0,nT]$,
\begin{equation}
\label{ii.26}
\left\{
\begin{array}{ll}
u(t)&=u_0e^{\sum_{i=0}^{n-1} \sum_{j=1}^{m} (1-v(t_0^j+iT))\l\int_{0}^{t}\a_j(s)ds }\\[2ex] v(t)&=v_0e^{\sum_{i=0}^{n-1}[\sum_{j=2}^{m}(u(t_0^j+iT)-1)\l\int_{0}^{t}\b_{j-1}(s)ds+(u(t_0^1+(i+1)T)-1)\int_0^t \b_m(s)ds]}
\end{array}
\right.
\end{equation}
for all $t\in[0,nT]$. Moreover, the interval $[0,nT]$ can be
viewed as an interval consisting of $nm$ pairs of $\a$ and $\b$
intervals, instead of made of $n$ copies of $[0,T]$. Thus, setting
for every $1\leq j\leq m$ and $0\leq i\leq n-1$,
$$
  t_0^j+iT \equiv j+mi,\qquad (u(t_0^j+iT),v(t_0^j+iT)) \equiv (u_{j+mi},v_{j+mi}),
$$
and
$$
  \a_j(t) = \a_j(t+iT) \equiv  \a_{j+mi}(t) ,\qquad  \b_j(t) = \b_j(t+iT) \equiv  \b_{j+mi}(t),
$$
\eqref{ii.26} can be equivalently  expressed as
$$
u(t)=u_0e^{\sum_{h =1}^{nm} (1-v_h)\l\int_{0}^{t}\a_h(s)ds},\qquad v(t)=v_0e^{\sum_{h=2}^{nm+1}(u_h-1)\l\int_{0}^{t}\b_{h-1}(s)ds},
$$
for all $t\in[0,nT]$. Thus, it becomes apparent that
\begin{equation}
\label{ii.27}
\begin{split}
\mathcal{P}_n^{m,\a,\b} (u_0,v_0) & :=(u(nT),v(nT))\\[1ex] & =
\left(u_0e^{(nm-\sum_{h=1}^{nm}v_h)\l A_1},v_0e^{(\sum_{h=2}^{nm+1}u_h-nm)\l B_1}\right).
\end{split}
\end{equation}
As in \eqref{ii.27} $n$ and $m$ are arbitrary integer numbers, it
is apparent that
$$
\mathcal{P}_q^{r,\a,\b}=\mathcal{P}_n^{m,\a,\b}
$$
for all integers $q,r\geq1$ such that $nm=qr$,
\par
Lastly, assume \eqref{ii.24} and $k=\ell=m$. Then, arguing as
above yields
\begin{align*}
\mathcal{P}_n^{m,\b,\a}(u_0,v_0)&:=(u(nT),v(nT))\\&=\left(u_0e^{(\sum_{h=2}^{nm+1}v_h-nm)\l A_1},v_0e^{(nm-\sum_{h=1}^{nm}u_h)\l B_1}\right)
\end{align*}
and, therefore, taking integers $q,r\geq1$ such that $nm=qr$, we find that
$$
\mathcal{P}_q^{r,\b,\a} =\mathcal{P}_n^{m,\b,\a}.
$$
The proof is  complete.
\end{proof}

Since $\mathcal{P}_q^{r,\a,\b}=\mathcal{P}_n^{m,\a,\b}$, their
fixed points are the same. Thus, if $nm=qr$, then, the set of
positive fixed points of the Poincar\'e map of \eqref{ii.1} at
time $nT$ for  $k=\ell=m$ equals the set of positive fixed points
of the Poincar\'e map of \eqref{ii.1}  at time $qT$ for
$k=\ell=r$. The main result of this section invokes this feature
to estimate the number of subharmonics of arbitrary order of
\eqref{ii.1} in any of the cases \eqref{ii.23} and \eqref{ii.24}.

\begin{theorem}
\label{th2.3}
Suppose \eqref{ii.25}, $u_0=v_0>0$, and $k=\ell=m\geq2$, and set
\begin{equation}
\label{ii.28}
\nu(z):=\left\{\begin{array}{ll}z&\quad \hbox{if}\;\;z\in 2\N,\\z-1&\quad \hbox{if}\;\;z\in 2\N+1,
\end{array}
\right.
\qquad
\mu(z):=\left\{\begin{array}{ll}z-2&\quad\hbox{if}\;\;z\in 2\N,\\z-1&\quad\hbox{if}\;\;z\in 2\N+1.
\end{array}
\right.
\end{equation}
Then, for every integer $n\geq 1$ and $\l>2/A_1$, \eqref{ii.1} has $\nu(nm)$ (resp. $\mu(nm)$) $nT$-periodic coexistence states under condition \eqref{ii.23} (resp. \eqref{ii.24}).
\end{theorem}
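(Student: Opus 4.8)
The plan is to deduce the statement from the already settled case $k=\ell=1$, namely Theorems \ref{th2.1} and \ref{th2.2}, by means of Lemma \ref{le2.1}. First I would apply Lemma \ref{le2.1} with the choice $r=1$, $q=nm$; this is legitimate because $nm=q\cdot r$, and the hypothesis of the lemma (the integrals $\int_0^T\a_i$ coincide among themselves, and likewise the $\int_0^T\b_j$) is part of \eqref{ii.25}. Hence
\[
\mathcal{P}_{n}^{m,\a,\b}=\mathcal{P}_{nm}^{1,\a,\b}\ \ \text{under \eqref{ii.23}},\qquad
\mathcal{P}_{n}^{m,\b,\a}=\mathcal{P}_{nm}^{1,\b,\a}\ \ \text{under \eqref{ii.24}},
\]
so that the set of positive fixed points of the time-$nT$ Poincar\'{e} map of \eqref{ii.1} for $k=\ell=m$ coincides with the set of positive fixed points of the time-$nmT$ Poincar\'{e} map of \eqref{ii.1} for $k=\ell=1$. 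As both maps act on the same positive quadrant, this identification preserves the diagonal $u_0=v_0$, and therefore counting the $nT$-periodic coexistence states of the $k=\ell=m$ model lying on $u_0=v_0$ is the very same problem as counting the $nmT$-periodic coexistence states on $u_0=v_0$ of a single-hump ($k=\ell=1$) degenerate Volterra model whose weight integrals are $A_1=B_1$ (equality forced by \eqref{ii.25}).

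Next I would invoke the single-hump theorems with their integer parameter set equal to $N:=nm$, which satisfies $N\geq 2$ because $m\geq 2$. Under \eqref{ii.23} the reduced model has, over each $\a$--$\b$ pair, the $\a$-interval preceding the $\b$-interval, hence is of type \eqref{ii.6}; since $A_1=B_1$ one has $2/A_1=2/\sqrt{A_1B_1}$, so Theorem \ref{th2.1}, applied with $\lambda>2/A_1$, yields $\nu(N)$ non-trivial $NT$-periodic coexistence states with $u_0=v_0$ --- and $N=nm$ is even precisely when $n$ or $m$ is even, which is the dichotomy defining $\nu$. Under \eqref{ii.24} the reduced model is of type \eqref{ii.11}, and Theorem \ref{th2.2} with $N=nm$ gives: if $N$ is odd, $N-1=\mu(N)$ non-trivial solutions with $u_0=v_0$; if $N$ is even, $N-2=\mu(N)$ non-trivial solutions with $u_0=v_0$ together with exactly two further $2T$-periodic solutions satisfying $u_0+v_0=2$. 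The latter two are distinct from $(1,1)$ and lie on $u_0+v_0=2$, hence satisfy $u_0\neq v_0$; consequently they violate the constraint $u_0=v_0>0$ imposed in the statement and must be discarded, leaving exactly $\mu(N)$ non-trivial $nT$-periodic coexistence states on the diagonal. This establishes the claimed counts $\nu(nm)$ and $\mu(nm)$.

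It remains only to treat the single value $N=nm=2$ (i.e.\ $n=1$, $m=2$) not covered by the higher-order parts of Theorems \ref{th2.1}--\ref{th2.2}; there the bookkeeping still fits, since $\nu(2)=2$ is exactly the ``two non-trivial $2T$-periodic coexistence states'' granted by the first assertion of Theorem \ref{th2.1}, whereas $\mu(2)=0$ matches the fact recorded after Theorem \ref{th2.2} that in case \eqref{ii.11} all non-trivial $2T$-periodic coexistence states lie on $u_0+v_0=2$, so none satisfies $u_0=v_0$. I do not expect an essential obstacle: the whole argument is a transcription of the $k=\ell=1$ results through Lemma \ref{le2.1}, and the only delicate points are checking that \eqref{ii.25} is exactly what places the reduced model inside the hypotheses $A_1=B_1$, $u_0=v_0$ of Theorems \ref{th2.1}--\ref{th2.2}, and, in case \eqref{ii.24}, correctly discarding the pair of $u_0+v_0=2$ solutions when restricting to the diagonal --- which is precisely the mechanism responsible for the gap $\nu(nm)-\mu(nm)\in\{0,2\}$.
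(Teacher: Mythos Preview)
Your proposal is correct and follows essentially the same approach as the paper: reduce via Lemma~\ref{le2.1} to the single-hump case $k=\ell=1$ and then invoke Theorem~\ref{th2.1} (under \eqref{ii.23}) or Theorem~\ref{th2.2} (under \eqref{ii.24}) with their integer parameter set to $nm$. Your treatment is in fact a bit more careful than the paper's, which omits the details for \eqref{ii.24}; in particular, you correctly note that Theorem~\ref{th2.2}'s diagonal count is stated only for $n\geq 3$, so the case $nm=2$ needs the separate observation (recorded after Theorem~\ref{th2.2}) that no non-trivial $2T$-periodic solution of type \eqref{ii.11} lies on $u_0=v_0$, matching $\mu(2)=0$.
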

\begin{proof}
Assume  \eqref{ii.23}. By the semigroup property of the flow,
Lemma \ref{ii.23} implies that
\[
(\mathcal{P}_{n}^{1,\a,\b})^m:=\mathcal{P}_{nm}^{1,\a,\b}=\mathcal{P}_n^{m,\a,\b}.
\]
Thus, the set of positive fixed points of
$\mathcal{P}_{nm}^{1,\a,\b}$, already described by Theorem
\ref{th2.1}, equals the set of positive fixed points of
$\mathcal{P}_n^{m,\a,\b}$. As, due to Theorem \ref{th2.1}, the map
$\mathcal{P}_{nm}^{1,\a,\b}$ has, at least,  $\nu(nm)$ positive
fixed points, the map $\mathcal{P}_n^{m,\a,\b}$ also admits, at
least, $\nu(nm)$ positive fixed points for all $\l > 2/A_1$. When,
instead of \eqref{ii.23}, the condition \eqref{ii.24} holds, then
one should invoke to Theorem \ref{th2.2}, instead of Theorem
\ref{th2.1}.  As the proof follows the same patterns, we will omit
any further technical detail.
\end{proof}

Crucially, since $\mathcal{P}_n^{m,\a,\b} =
\mathcal{P}_q^{r,\a,\b}$, the global bifurcation diagram of the
$nT$-periodic coexistence states for $k=\ell=m$ coincides with the
global bifurcation diagram of the $qT$-periodic coexistence states
for $k=\ell=r$ if $nm=qr$. For instance, if $m=2$, then the set of
components of $nT$-periodic coexistence states provides us with
the set of components of $2nT$-periodic coexistence states for
$m=1$. Thus, for $m=2$ the global bifurcation diagram of
subharmonics can be obtained by removing from Figure \ref{fig-i}
the set of components filled in by odd order subharmonics.
Therefore, the global bifurcation diagram sketched in Figure
\ref{fig-i} provides us with all the global bifurcation diagrams
for every $k=\ell =m\geq 2$ by choosing the appropriate
subharmonic components in that diagram.
\par
Finally, the  next result ascertains the number of coexistence
states with minimal period $nT$ among those given by Theorem
\ref{th2.3}. By minimal period $nT$ it is meant that the
coexistence states are $nT$-periodic but not $mT$-periodic if
$m<n$.  To state that result, we first need to deliver two
well-known facts on number theory. For every integer $n\geq1$, the
Euler \emph{totient function} is defined as
\[
\Phi(n):={\rm card}(\{1\leq m\leq n\;|\; \gcd(n,m)=1\}).
\]
According to Gauss \cite[p. 21]{Gauss}, the Euler totient function
satisfies the next identity
\begin{equation}
\label{ii.29}
n=\sum_{d|n}\Phi(d).
\end{equation}
The next result relates, through \eqref{ii.29}, the Euler totient
function with a very special class of univariate polynomials.

\begin{proposition}
\label{pr2.2} Let $\mathbb{K}[X]$ be the univariate polynomials
ring over a zero characteristic field,  $\mathbb{K}$. Then, for
every sequence $\{h_n\}_{n\geq2}\subset \mathbb{K}[X]$ satisfying:
\begin{enumerate}
\item[{\rm 1.}] ${\rm deg}(h_n)=n-1$,
\item[{\rm 2.}] ${\rm card}\{r\in\R\,:\, h_n(r)=0\}=n-1$, and
\item[{\rm 3.}] $h_{n_1}|h_{n_2}$ if, and only if, $n_1|n_2$,
\end{enumerate}
the following identity holds
$$
\Phi(n)={\rm card}\left\{r\in\R\,:\, h_n(r)=0\;\hbox{and}\; h_d(r)\neq0\;\hbox{if}\;d|n\right\}.
$$
\end{proposition}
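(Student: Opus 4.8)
The plan is to phrase everything in terms of the real zero sets and then reduce to a purely arithmetic identity. First I would enlarge the family by putting $h_1:=1$, which keeps hypotheses $1$--$3$ in force ($\deg h_1=0$, no real zeros, $h_1\mid h_m$ for every $m$, and $h_m\mid h_1$ only for $m=1$), and set $R_n:=\{r\in\R:h_n(r)=0\}$. By $1$ and $2$, each $R_n$ has exactly $n-1$ elements and $h_n$ splits over $\R$ into $n-1$ distinct linear factors; in particular, if $d\mid n$ then hypothesis $3$ gives $h_d\mid h_n$, hence the set inclusion $R_d\subseteq R_n$. Writing $P_n:=R_n\setminus\bigcup_{d\mid n,\,d<n}R_d$, the cardinal on the right of the asserted identity is $\mathrm{card}(P_n)$, and since every $R_d$ with $d\mid n$ sits inside $R_n$ we already have $R_n=P_n\cup\bigcup_{d\mid n,\,d<n}R_d$; so it remains to prove that this union is disjoint and to count.

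The crucial step — and the one I expect to be the main obstacle — is the rigidity
\[
R_a\cap R_b=R_{\gcd(a,b)}\qquad\text{for all }a,b\ge1 .
\]
The inclusion $\supseteq$ is the previous paragraph applied to $\gcd(a,b)\mid a$ and $\gcd(a,b)\mid b$. For $\subseteq$ I would argue by strong induction on $a+b$: the cases $a=b$, $\min(a,b)=1$, and $a\mid b$ (or $b\mid a$) are trivial, and in the remaining case, given a common zero $r$, I would take $d^{*}$ to be the least integer dividing $\mathrm{lcm}(a,b)$ with $h_{d^{*}}(r)=0$; applying the inductive hypothesis to the pair $(d^{*},a)$ (whose index–sum is strictly smaller) together with the minimality of $d^{*}$ forces $d^{*}\mid a$, and symmetrically $d^{*}\mid b$, whence $d^{*}\mid\gcd(a,b)$ and $r\in R_{d^{*}}\subseteq R_{\gcd(a,b)}$. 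The delicate point, where the ``iff'' of hypothesis $3$ has to be used for the whole family and not merely pairwise, is to rule out $d^{*}=a$; this is the genuine core of the argument. Granting the rigidity, the partition follows: if $r\in R_n$, then $D_r:=\{d\mid n:h_d(r)=0\}\neq\emptyset$, and if $d_0$ denotes the greatest common divisor of all elements of $D_r$, iterating the rigidity gives $d_0\in D_r$ and then $r\in P_{d_0}$ (no proper divisor $e$ of $d_0$ can satisfy $h_e(r)=0$, since $e\in D_r$ would force $d_0\mid e$); moreover $r$ cannot lie in two sets $P_{d_1},P_{d_2}$ with $d_1\ne d_2$, for then $r\in R_{\gcd(d_1,d_2)}$ with $\gcd(d_1,d_2)$ a proper divisor of, say, $d_1$, contradicting $r\in P_{d_1}$. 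Thus $R_n=\bigsqcup_{d\mid n}P_d$.

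Finally, taking cardinalities in this partition gives $\mathrm{card}(R_n)=\sum_{d\mid n}\mathrm{card}(P_d)$, i.e.\ $\sum_{d\mid n}\mathrm{card}(P_d)=n-1$ for every $n\ge1$. Möbius inversion of this relation (an integer–valued inversion, hence unconditional) — or, equivalently, a one–line induction on $n$ using \eqref{ii.29} — yields $\mathrm{card}(P_n)=\sum_{d\mid n}\mu(n/d)(d-1)=\sum_{d\mid n}\mu(n/d)\,d-\sum_{d\mid n}\mu(n/d)$; the first sum equals $\Phi(n)$, which is precisely Gauss' identity \eqref{ii.29} read as $\Phi=\mu*\mathrm{id}$, and the second sum vanishes for $n\ge2$. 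Hence $\mathrm{card}(P_n)=\Phi(n)$, as claimed. In this scheme hypotheses $1$ and $2$ are used only to convert polynomial divisibility into inclusion of zero sets, while all the combinatorial content lives in hypothesis $3$; accordingly, I expect essentially the whole difficulty to be concentrated in upgrading the ``iff'' of hypothesis $3$ to the gcd–rigidity $R_a\cap R_b=R_{\gcd(a,b)}$ of the zero sets.
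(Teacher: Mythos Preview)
You have correctly located the entire difficulty: your argument reduces the proposition to the ``gcd--rigidity'' $R_a\cap R_b=R_{\gcd(a,b)}$ (equivalently, to the disjointness of the union $R_n=\bigcup_{d\mid n}P_d$), and you explicitly flag the step ``rule out $d^{*}=a$'' as the genuine core that remains to be done. That honesty is well placed, because this step \emph{cannot} be carried out from hypotheses 1--3 alone, and in fact the proposition as stated is false. Here is a counterexample. Pick a distinguished real number $r^{*}$ and, for each $d\geq 2$, a set $S_d$ of $c_d$ fresh real numbers, all pairwise disjoint and disjoint from $\{r^{*}\}$, where
\[
c_d:=\Phi(d)-[d=4]-[d=6]+[d=12]\quad(\geq 1\ \text{for all }d\geq 2).
\]
Define $R_n:=\bigl(\{r^{*}\}\cap[\,4\mid n\text{ or }6\mid n\,]\bigr)\cup\bigcup_{d\mid n,\,d\geq 2}S_d$ and $h_n:=\prod_{r\in R_n}(X-r)$. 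A direct check (using $\sum_{d\mid n}\Phi(d)=n$ and $\sum_{d\mid n,\,m\mid d}\mu(n/d)=[n=m]$) gives $|R_n|=n-1$, so hypotheses 1 and 2 hold; and since each $S_d$ is nonempty, $R_{n_1}\subseteq R_{n_2}$ forces every divisor of $n_1$ to divide $n_2$, i.e.\ $n_1\mid n_2$, so hypothesis 3 holds as well. Yet $r^{*}\in R_4\cap R_6$ while $r^{*}\notin R_2=R_{\gcd(4,6)}$, so the rigidity fails; correspondingly $r^{*}\in P_4\cap P_6$ and the ``partition'' of $R_{12}$ double--counts it, giving $|P_{12}|=c_{12}=5\neq 4=\Phi(12)$.

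The paper's own proof has exactly the same gap: it asserts the relation $\mc{C}_h^{\min}(n)=\mc{C}_h(n)-\sum_{d\mid n,\,d\neq 1,n}\mc{C}_h^{\min}(d)$ as ``apparent'' and then runs the same M\"obius/induction argument you give, but that relation is precisely the disjointness of $\bigcup_{d\mid n}P_d$, which the counterexample above refutes. So neither argument can be completed under hypotheses 1--3 alone; an extra structural assumption (for instance, that the family of zero sets is closed under intersection, i.e.\ $R_a\cap R_b=R_{\gcd(a,b)}$ --- which does hold for the concrete Chebyshev--type sequence $p_n$ to which the proposition is later applied) is needed.
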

\begin{proof}
Setting
\begin{align*}
\mc{C}_h(n) & :={\rm card}\{r\in\R\,:\, h_n(r)=0\},\\
\mc{C}_h^{\rm min}(n) & :={\rm card}\{r\in\R\,:\, h_n(r)=0\;\hbox{and}\; h_d(r)\neq0\;\hbox{if}\;d|n\},
\end{align*}
it is apparent that
\[
\mc{C}_h^{\rm min}(n)=\mc{C}_h(n)-\sum_{\substack{d|n\\d\neq 1,n}}\mc{C}_h^{\rm min}(d).
\]
By definition, $\mc{C}_h^{\rm min}(p)=\Phi(p)$ for every prime
integer $p\geq2$, because $\mc{C}_h^{\rm min}(p)=\mc{C}_h(p)=p-1$.
Moreover, for any given prime integers $p_1, p_2\geq2$,
\begin{equation*}
\mc{C}_h^{\rm min}(p_1p_2)=
\left\{
\begin{array}{ll}
\mc{C}_h(p_1p_2)-\mc{C}_h^{\rm min}(p_1)-\mc{C}_h^{\rm min}(p_2)=p_1p_2-1-\Phi(p_1)-\Phi(p_2)&\quad\hbox{if}\;\, p_1\neq p_2,\\[7pt]
\mc{C}_h(p_1p_2)-\mc{C}_h^{\rm min}(p_1)=p_1p_2-1-\Phi(p_1)
&\quad\hbox{if}\;\, p_1= p_2.
\end{array}
\right.
\end{equation*}
Thus, by \eqref{ii.29},  $\mc{C}_h^{\rm min}(p_1p_2)=\Phi(p_1p_2)$. Now, given $i\geq2$, assume as a
complete induction hypothesis, that, for every
$j\in\{1,2,\ldots,i\}$,
\begin{equation}
\label{ii.30}
\mc{C}_h^{\rm min}(p_1p_2\ldots p_j)=\Phi(p_1p_2\ldots p_j).
\end{equation}
Then, denoting $\g:=p_1p_2\ldots p_{j+1}$, it follows from \eqref{ii.30} that
\[
\mc{C}_h^{\rm min}(\g)=\mc{C}_h(\g)-\sum_{\substack{d|\g\\d\neq 1,\g}}\mc{C}_h^{\rm min}(d)=\g-1-\sum_{\substack{d|\g\\d\neq 1,\g}}\Phi(d)=\g-\sum_{\substack{d|\g\\d\neq \g}}\Phi(d).
\]
Therefore, by \eqref{ii.29}, $\mc{C}_h^{\rm min}(\g)=\Phi(\g)$,
which concludes the induction.
As any integer, $n$, can be factorized  as a (unique)
finite product of prime integers, it becomes apparent
that $\mc{C}_h^{\rm min}(n)=\Phi(n)$. This ends the proof.
\end{proof}

Next theorem is a direct consequence of Proposition \ref{pr2.2}.

\begin{theorem}
\label{th2.4}
Assume \eqref{ii.25}, $k=\ell=m\geq1$, and either \eqref{ii.23}, or \eqref{ii.24}.
Then, \eqref{ii.1} possesses, at least, $\Phi(nm)$ coexistence states with minimal period $nT$ for all $nm>2$.
\end{theorem}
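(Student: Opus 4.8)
The plan is to read Theorem~\ref{th2.4} off Proposition~\ref{pr2.2}, fed with the polynomial sequence that records the bifurcations of the coexistence states from the trivial branch $(\l,1,1)$ in the single-hump situation, and then to transfer the conclusion to $k=\ell=m$ via Lemma~\ref{le2.1}. Set $N:=nm$ and let $h_N:=p_N$ if \eqref{ii.23} holds and $h_N:=q_N$ if \eqref{ii.24} holds, where $p_N$ is the polynomial of \cite{LM-2020} and $q_N$ is the polynomial introduced in \eqref{ii.18}; in either case the positive zeroes of $h_N$ are exactly the values of $\l$ at which a component of $NT$-periodic coexistence states with $u_0=v_0$ bifurcates from $(\l,1,1)$.

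The first step is to check that $\{h_N\}_{N\geq2}$ satisfies the three requirements of Proposition~\ref{pr2.2}. That ${\rm deg}\,h_N=N-1$ and that $h_N$ has $N-1$ simple real zeroes is established in \cite[\S\S4--6]{LM-2020} for $h_N=p_N$, and for $h_N=q_N$ it follows from the identities \eqref{ii.19} together with the recursion \eqref{ii.20}, just as in the discussion after Proposition~\ref{pr2.1}. The remaining and genuinely delicate point is the divisibility equivalence $h_{N_1}\mid h_{N_2}\iff N_1\mid N_2$: the direction $N_1\mid N_2\Rightarrow h_{N_1}\mid h_{N_2}$ follows from the semigroup identity $\mc{P}_{N_2}^{1,\a,\b}=(\mc{P}_{N_1}^{1,\a,\b})^{N_2/N_1}$ contained in Lemma~\ref{le2.1} (every bifurcation value of the $N_1T$-problem is one of the $N_2T$-problem and the zeroes are simple), while the converse rests on the recursive structure of the $E_j$'s governing $p_N$ and $q_N$; both directions are part of the analysis of \cite{LM-2020}, transported through Proposition~\ref{pr2.1} when $h_N=q_N$.

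Granting this, Proposition~\ref{pr2.2} yields
\[
\Phi(N)={\rm card}\bigl\{r\in\R\,:\,h_N(r)=0\ \hbox{and}\ h_d(r)\neq0\ \hbox{for every }d\mid N,\ d\neq N\bigr\}.
\]
Fix such a zero $r$. Since $r$ is simple, the generalized algebraic multiplicity of Esquinas and L\'opez-G\'omez \cite{ELG-1988} equals one, so the Crandall--Rabinowitz theorem \cite{CrRa-1971} provides a local analytic curve of $NT$-periodic coexistence states through $(\l,x)=(r,1)$, which by \cite[Th.~6.2.1]{LG-2001} and the unilateral theorem \cite[Th.~6.4.3]{LG-2001} extends to a maximal connected component $\mf{C}_r$ of the set of $NT$-periodic coexistence states of \eqref{ii.1}. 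Since $h_d(r)\neq0$ for every proper divisor $d\mid N$, no component of $dT$-periodic coexistence states issues from $(\l,1,1)$ at $r$; in view of the description of the global bifurcation diagram in \cite{LM-2020}, this forces every coexistence state on $\mf{C}_r$ to have $NT$ as its minimal period, and distinct zeroes $r$ yield distinct components. Hence, for every $\l>2/A_1$, \eqref{ii.1} possesses at least $\Phi(N)$ coexistence states of minimal period $NT$ in the single-hump setting $k=\ell=1$.

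It remains to transfer this to $k=\ell=m$ through Lemma~\ref{le2.1}. With $N=nm$ that lemma gives $\mc{P}_n^{m,\a,\b}=\mc{P}_{nm}^{1,\a,\b}$, so a coexistence state of minimal period $NT$ for $k=\ell=1$ is an $nT$-periodic coexistence state of \eqref{ii.1} for $k=\ell=m$, and it cannot be $dT$-periodic for any $d<n$ in the $k=\ell=m$ problem, since that would make it $dmT$-periodic for $k=\ell=1$ with $dm\mid nm$ and $dm<nm$, contradicting the minimality of its period. Thus all $\Phi(nm)$ states built above have minimal period $nT$ for $k=\ell=m$, which is the assertion; the restriction $nm>2$ ensures ${\rm deg}\,h_{nm}\geq2$ so that the divisor bookkeeping in Proposition~\ref{pr2.2} is non-vacuous, while the cases $nm\leq2$ are directly covered by Theorems~\ref{th2.1} and \ref{th2.2}. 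The principal obstacle is the divisibility equivalence above together with the structural claim that a component bifurcating at a zero of $h_N$ not shared by any $h_d$ with $d\mid N$ consists exclusively of coexistence states of minimal period $NT$; both lean on the fine analysis of \cite{LM-2020}, which under condition \eqref{ii.24} must first be carried over via Proposition~\ref{pr2.1}.
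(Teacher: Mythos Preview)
Your overall architecture matches the paper's: feed the polynomial sequence $h_N$ (either $p_N$ or $q_N$) into Proposition~\ref{pr2.2}, and then invoke Lemma~\ref{le2.1} to pass from the single-hump count at level $nm$ to the $m$-hump count at level $n$. The transfer step at the end is fine.

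There is, however, a genuine counting error. Proposition~\ref{pr2.2} gives you $\Phi(N)$ \emph{real} zeroes of $h_N$ that are not shared with any $h_d$ for proper $d\mid N$. But only the \emph{positive} zeroes are admissible values of the parameter $\l$, and by \cite[Cor.~4.7]{LM-2020} the polynomials $p_{2N+1}(\l)$ and $p_{2N}(\l)/(2-A\l)$ are even in $\l$ (and the same holds for the $q$'s via \eqref{ii.19}); hence exactly half of those $\Phi(N)$ zeroes are positive. So your sentence ``distinct zeroes $r$ yield distinct components, hence $\Phi(N)$ coexistence states'' actually produces only $\Phi(N)/2$ bifurcation points, and as written only $\Phi(N)/2$ components. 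The paper recovers the missing factor of two by observing that from \emph{each} positive root at least two nontrivial solutions emerge (one with $x<1$ and one with $x>1$, coming from the two signs of the Crandall--Rabinowitz parameter $s$ in $x(s)=1+s+O(s^2)$); this is the step you are missing. Without it your argument yields $\Phi(nm)/2$, not $\Phi(nm)$.

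A secondary point: your claim that ``every coexistence state on $\mf{C}_r$ has $NT$ as its minimal period'' is stronger than what is needed and not quite what the paper asserts; the paper simply counts $\Phi(nm)/2$ positive bifurcation values to minimal-period states and multiplies by two. Your version leans on the global picture of \cite{LM-2020} in a way that would require more justification (a component could in principle meet the trivial line again at a root shared with some $h_d$). It is cleaner to argue locally: the two solutions emanating from $r$ are $NT$-periodic and, since $h_d(r)\neq 0$ for proper $d\mid N$, they are not $dT$-periodic near $r$, hence have minimal period $NT$.
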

\begin{proof}
First, assume that \eqref{ii.23}. Then, by Lemma 4.3 and Theorem 5.2 of \cite{LM-2020},
the sequence of polynomials $p_{nm}$ whose positive roots are the bifurcation
points to the $nT$-periodic coexistence states, satisfies the hypothesis of Proposition \ref{pr2.2}. Thus,
\[
\mc{C}_p^{\min}(nm)=\Phi(nm).
\]
Subsequently, we denote by
\[
\mc{C}_{p,+}^{\rm min}(nm):={\rm card}\{r\in\R,\,r>0\,:\,
p_{nm}(r)=0\;\hbox{and}\; p_d(r)\neq0\;\hbox{if}\;d|nm\}
\]
the cardinality of the set of bifurcations points of \eqref{ii.1}
to minimal $nT$-periodic coexistence states.
As, thanks to \cite[Cor. 4.7]{LM-2020}, we already know that
$\frac{p_{2nm}(\l)}{2-A\l}$ and $p_{2nm+1}(\l)$ are even polynomials, it is apparent that
\[
\mc{C}_{p,+}^{\rm min}(nm)=\frac{\Phi(nm)}{2}.
\]
Thus, as they emerge at least two solutions from each positive root,
there are, at least, $\Phi(nm)$ coexistence states  with minimal period $nT$ for $nm>2$.
\par
Finally, assume \eqref{ii.24}. Then, owing to \eqref{ii.19},
\[
\mc{C}_{q,+}^{\rm min}(nm)=\mc{C}_{p,+}^{\rm min}(nm)=\frac{\Phi(nm)}{2}.
\]
Therefore, the same conclusion holds. This ends the proof.
\end{proof}

Now we will
ascertain, under the assumptions of Theorem \ref{th2.4},
the classes of periodicity of the subharmonics of \eqref{ii.1}.
Recall that, thanks to Lemma \ref{le2.1}, $\mathcal{P}_n^m=\mathcal{P}_{nm}^1$.
Thus, for $k=\ell=m$, the $nT$-periodic coexistence states of \eqref{ii.1}
are the same as the $nmT$-periodic coexistence states for $k=\ell=1$.
Remember that, in case $k=\ell=1$, we already know from Lemma 3.1 of \cite{LM-2020} that
\begin{equation}
\label{ii.31}
u_h=v_{nm-h}\qquad\hbox{for all}\;\, h\in\{1,2,\ldots,nm-1\}.
\end{equation}

\begin{proposition}
\label{pr2.3}
Assume \eqref{ii.25}, $k=\ell=1$, and either \eqref{ii.23},
or \eqref{ii.24}, and let $(u,v)$ be  a minimal $nmT$-periodic coexistence state
of \eqref{ii.1} with $nm>2$. Then,
\begin{equation}
\label{ii.32}
u_h= v_h \;\; \hbox{if and only if}\;\; nm\in 2\mathbb{N}\;\;\hbox{and}\;\; h=\tfrac{nm}{2}.
\end{equation}
Therefore:
\begin{enumerate}
\item[{\rm 1.}] The $\Phi(nm)$
subharmonics of order $nm$ of \eqref{ii.1} given by Theorem \ref{th2.4}
lie in different periodicity classes if $nm\in 2\mathbb{N}+1$, $nm\geq 3$, and
\item[{\rm 2.}] At least $\Phi(nm)/2$ of these subharmonics
lie in different periodicity classes if $nm\in 2\mathbb{N}$, $nm\geq 2$.
\end{enumerate}
\end{proposition}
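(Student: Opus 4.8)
The plan is to establish the orbit identity \eqref{ii.32} first, and then to obtain the two counting statements by re-expressing ``two of the listed subharmonics belong to the same periodicity class'' as a condition on the diagonal $\{u=v\}$ along the Poincar\'e orbit, after which \eqref{ii.32} does the rest.

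First I would prove \eqref{ii.32}. Write $\mc{P}$ for the time-$T$ Poincar\'e map of \eqref{ii.1} (that is, $\mc{P}_1^{1,\a,\b}$ under \eqref{ii.23} and $\mc{P}_1^{1,\b,\a}$ under \eqref{ii.24}), so that $(u_h,v_h)=\mc{P}^h(u_0,v_0)$ for all $h$. Since $(u,v)$ has \emph{minimal} period $nmT$, the points $(u_0,v_0),(u_1,v_1),\dots,(u_{nm-1},v_{nm-1})$ are pairwise distinct: an equality $\mc{P}^h(u_0,v_0)=\mc{P}^{h'}(u_0,v_0)$ with $0\le h<h'\le nm-1$ would, upon applying $\mc{P}^{-h}$ and using that powers of $\mc{P}$ commute, make $(u_0,v_0)$ a fixed point of $\mc{P}^{h'-h}$, contradicting minimality. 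Now, if $u_h=v_h$ for some $h\in\{1,\dots,nm-1\}$, then \eqref{ii.31} gives $u_h=v_{nm-h}$ and, applying \eqref{ii.31} with $nm-h$ in place of $h$, also $u_{nm-h}=v_h$; hence $(u_h,v_h)=(u_{nm-h},v_{nm-h})$. By the injectivity of the orbit on $\{1,\dots,nm-1\}$ this forces $h=nm-h$, i.e. $nm$ is even and $h=nm/2$. The converse is immediate, since when $nm$ is even \eqref{ii.31} with $h=nm/2$ yields $u_{nm/2}=v_{nm/2}$.

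Next I would count the periodicity classes. The point to record is that all $\Phi(nm)$ subharmonics of order $nm$ furnished by Theorem \ref{th2.4} have their initial datum on the diagonal $\{u_0=v_0\}$: they lie on the branches bifurcating from the curve $(\l,1)$ at the positive roots of the polynomials $p_{nm}$ (under \eqref{ii.23}), resp. $q_{nm}$ (under \eqref{ii.24}), and the relation $u_0=v_0$ holds all along such branches (see Sections \ref{sec2.1}--\ref{sec2.2}). Fix such a subharmonic $z=(u,v)$. Its periodicity class $\Theta(z)=\{z(\cdot+jT):j=0,\dots,nm-1\}$ has exactly $nm$ elements because $z$ has minimal period $nmT$, and the shift $z(\cdot+jT)$ has $(u_j,v_j)$ as its initial datum; therefore $z(\cdot+jT)$ lies again on the diagonal if and only if $u_j=v_j$, which by \eqref{ii.32} means $j=0$ if $nm$ is odd and $j\in\{0,nm/2\}$ if $nm$ is even. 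Consequently a single periodicity class can contain at most one of the $\Phi(nm)$ subharmonics when $nm$ is odd, and at most two of them when $nm$ is even. Since these $\Phi(nm)$ coexistence states are pairwise distinct (they arise from the $\Phi(nm)/2$ distinct positive bifurcation values of Theorem \ref{th2.4}, each one contributing two of them), it follows that they lie in $\Phi(nm)$ pairwise distinct periodicity classes when $nm$ is odd, while at least $\Phi(nm)/2$ of them lie in pairwise distinct classes when $nm$ is even. These are exactly assertions 1 and 2.

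The difficulty here is not computational. Step 1 is a short argument combining the reversibility \eqref{ii.31} with the injectivity of a minimal Poincar\'e orbit, and the arithmetic of divisors has already been isolated in Proposition \ref{pr2.2}. The only genuinely delicate point is the one flagged above, namely the precise identification of the $\Phi(nm)$ subharmonics of Theorem \ref{th2.4} as the coexistence states carried by $\{u_0=v_0\}$, so that \eqref{ii.32} may be applied to the shift structure of a periodicity class; this needs some care when $nm$ is even, where the diagonal is revisited once more, at $j=nm/2$, which is precisely what lowers the estimate from $\Phi(nm)$ to $\Phi(nm)/2$ classes.
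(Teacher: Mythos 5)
Your proof is correct, and it streamlines both halves of the argument relative to the paper. For \eqref{ii.32}, the paper also starts from the symmetry $z_h=z_{nm-h}$ (inherited from \eqref{ii.31}), but then sets $\omega^*=|2h-nm|$ and runs a case analysis on $\gcd(\omega^*,nm)$ via B\'ezout, deriving a contradiction either with $z_h\neq z_{h+1}$ (when $\gcd=1$) or with minimality (when $\gcd>1$). You replace this whole case analysis with the single observation that minimal period $nmT$ forces the orbit points $z_0,\dots,z_{nm-1}$ under the Poincar\'e map to be pairwise distinct; since $z_h=z_{nm-h}$ with both indices in $\{1,\dots,nm-1\}$, injectivity immediately gives $h=nm-h$. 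This is both more elementary and makes transparent that the B\'ezout step in the paper is, in effect, reproving orbit injectivity in a roundabout way. For the counting statements, the paper writes ``Therefore:'' and stops; you actually supply the missing argument: all $\Phi(nm)$ subharmonics from Theorem \ref{th2.4} have initial data on the diagonal $\{u_0=v_0\}$, each periodicity class $\Theta(z)=\{z(\cdot+jT)\}_{j=0}^{nm-1}$ revisits the diagonal only at $j=0$ (and $j=nm/2$ when $nm$ is even) by \eqref{ii.32}, and pigeonhole then delivers assertions 1 and 2. This is a genuine gap in the paper's exposition that your write-up fills; the one point worth making explicit is that \eqref{ii.32}, like the paper's own proof, is implicitly quantified over $h\in\{1,\dots,nm-1\}$, since $u_0=v_0$ is assumed in \eqref{ii.25}.
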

\begin{proof}
The proof of \eqref{ii.32} proceeds by contradiction.
Assume that $u_h=v_h$ for some $h\neq nm/2$. Then, by \eqref{ii.31},
\[
v_{nm-h}=u_h=v_h=u_{nm-h}\qquad\hbox{for all}\;\, h\in\{1,2,\ldots,nm-1\},
\]
which, in particular, implies that
\begin{equation}
\label{ii.33}
z_h:=(u_h,v_h)=(u_{nm-h},v_{nm-h})=:z_{nm-h}.
\end{equation}
Note that, by the structure of \eqref{ii.1},
\begin{equation}
\label{ii.34}
  z_h\neq z_{h+1} \quad \hbox{for all}\;\; h\in\{1,2,\ldots,nm-2\}.
\end{equation}
Subsequently, we set
$$
  \o_{\min}:=\min\{h,nm-h\},\quad \o_{\max}:=\max\{h,nm-h\},\quad
  \o^*:=\o_{\max}-\o_{\min}.
$$
Thanks to \eqref{ii.33}, by the $T$-periodicity of \eqref{ii.1}, we find that, for every $k\in\mathbb{N}$,
\[
z_{\o_{\max}}=z_{\o_{\min}+k\o^*\,({\rm mod}\, nm)}.
\]
Suppose $\gcd(\o^*,nm)=1$. Then, by the B\'ezout's Identity,
there exists an integer $k_0\geq1$ such that $(k_0+1)\o^*=\o^*+1\,({\rm mod}\,nm)$, which contradicts \eqref{ii.34}. Thus, $\gcd(\o^*,nm)>1$ and, hence,
there exists $k<nm$ such that $k\o^*=0\,({\rm mod}\,nm)$. Therefore,
$$
   \o_{\min}+k\o^*=\o_{\min}\,({\rm mod}\,nm)
$$
which implies that the solution is $kT$-periodic with $k<nm$.
This contradicts the  minimality of the period and ends the proof.
\end{proof}

We conclude this section with a quick comparison with the previous results of the authors in
\cite{LMZ-2021} though the Poincar\'{e}--Birkhoff theorem. According to \cite{LMZ-2021}, if $nm\geq3$ and $n=3h+i\geq1$ for some $h\geq0$ and $i\in\{0,1,2\}$, then there exists $\l_n>0$ such that, for every $\l>\l_n$, \eqref{ii.1} possesses, at least,
\begin{equation}
\label{ii.35}
\s(n)=2\left(hm+\left[\frac{im}{3}\right]\right)
\end{equation}
$nT$-periodic solutions. Moreover, setting
\begin{equation}
\label{ii.36}
\gamma(n):=\min \left\lbrace \gamma\geq0\;:\; \gcd\left(n,\tfrac{\s(n)}{2}-\g\right)=1 \right\rbrace,
\end{equation}
it turns out that, for every $\l>\l_n$, \eqref{ii.1} has, at least, $\s(n)-2\g(n)$ periodic solutions with minimal period $nT$. Next, we will compare, in some special cases, the lower bounds on the number of independent  subharmonics provided by Proposition \ref{pr2.3} and Theorem 4 of \cite{LMZ-2021}.
Assuming that $n$ and $m$ prime numbers, the Euler totient function $\Phi$ satisfies
\[
\Phi(nm)=\left\{
\begin{array}{lll}
(n-1)(m-1)&\quad\hbox{if}\;\; n\neq m,\\[0.5ex]
n(n-1)&\quad\hbox{if}\;\; n=m,\\[0.5ex]
nm-1&\quad\hbox{if}\;\; nm\;\hbox{is prime}.
\end{array}
\right.
\]
Thus, thanks to \eqref{ii.35}, there exists a constant  $c>0$ such that
\begin{equation*}
\Phi(nm)-(\s(n)-2\g(n))\geq \frac{nm}{3}-c(n+m)+2\g(n),
\end{equation*}
which is positive for sufficiently large $n$ and $m$.
Therefore, within this rank, under the strong assumptions
on the weight functions and on the initial values imposed in this section,
Proposition \ref{pr2.3} is sharper than \cite[Th. 4]{LMZ-2021}.
However, in some other circumstances
the previous difference might be negative, being
in these cases deeper \cite[Th.4]{LMZ-2021}
than Proposition \ref{pr2.3}.
It remains an open problem here to find out the eventual
relationships between $\Phi$ and $\s-2\g$, if any.
\par
In \cite{Ne-1977} (see also \cite[Sec.4.1.2]{BoMH-2022}), where
the Poincar\'{e}--Birkhoff theorem was improved from several
perspectives, another lower bound, also related to $\Phi(n)$,  was
given for the number of subharmonics of order $n$ of \eqref{ii.1}.

\subsection{The case when $k\neq\ell$}\label{sec2.4}

Necessarily, either  $k=\ell+1$, or $\ell=k+1$. Moreover, setting  $m:=\min\{k,\ell\}$,
there exist
$$
0\leq t_0^1<t_1^1\leq t_2^1<t_3^1\leq\cdots\leq t_0^m<t_1^m\leq t_2^m<t_3^m\leq t_1^{m+1}<t_2^{m+1}\leq T,
$$
such that
\begin{equation}
\label{ii.37}
\supp\a_i\subseteq[t_0^i,t_1^i]\quad \hbox{and}\quad \supp\b_j\subseteq[t_2^j,t_3^j]
\end{equation}
if $k=\ell+1$,  whereas
\begin{equation}
\label{ii.38}
\supp\b_j\subseteq[t_0^j,t_1^j]\quad \hbox{and}\quad \supp\a_i\subseteq[t_2^i,t_3^i]
\end{equation}
if $\ell=k+1$.
\par
The next result shows that also in this case \eqref{ii.1} has as many subharmonics
as in the context of Theorem \ref{th2.3} with $k=\ell$.

\begin{theorem}
\label{th2.5}
Assume
\begin{equation}
\label{ii.39}
A_1\!:=\!
\int_{0}^{T}\left(\alpha_1+\alpha_{m+1}\right)\!=\!
\int_{0}^{T}\alpha_2\!=\!\cdots\!=\!\int_{0}^{T}\alpha_{m}\!=\!
\int_{0}^{T}\beta_1\!=\!\int_{0}^{T}\beta_2\!=\!\cdots\!=\!\int_{0}^{T}\beta_{m}
\end{equation}
if \eqref{ii.37} holds, and
\begin{equation}
\label{ii.40}
B_1\!:=\! \int_{0}^{T}\left(\b_1+\b_{m+1}\right)\!=\!
\int_{0}^{T}\b_2\!=\!\cdots\!=\!\int_{0}^{T}\b_{m}\!=\!
\int_{0}^{T}\a_1\!=\!\int_{0}^{T}\a_2\!=\!\cdots\!=
\!\int_{0}^{T}\a_{m}
\end{equation}
under condition \eqref{ii.38}.
Then, much like in Theorem \ref{th2.3},
for every integer $n\geq 1$ and $\l>2/A_1$,
\eqref{ii.1} has $\nu(nm)$ (resp. $\mu(nm)$) $nT$-periodic coexistence
states under condition \eqref{ii.37} (resp. \eqref{ii.38}).
\end{theorem}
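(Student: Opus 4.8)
The plan is to reduce Theorem~\ref{th2.5} to Theorem~\ref{th2.3} by the same ``coalescing of intervals'' device that underlies Lemma~\ref{le2.1}. The point of condition \eqref{ii.39} (resp. \eqref{ii.40}) is precisely that, although there are $m+1$ bumps of $\a$ (resp. $\b$) and only $m$ bumps of $\b$ (resp. $\a$), the \emph{first and last} $\a$-bumps $\a_1$ and $\a_{m+1}$ are separated by the last $\b$-bump and then, after the period, wrap around to the first $\b$-bump; so on the circle $\R/T\Z$ the pair $(\a_1,\a_{m+1})$ behaves as a single $\a$-hump sitting between $\b_m$ and $\b_1$. Concretely, I would fix the initial time $\tau_0\in(t_1^{m+1},t_2^{1})$ (an $\a$-interval in the wrapped configuration, using that $t_1^{m+1}<t_2^{m+1}\le T$ and the $T$-periodicity to identify $[0,t_2^1]$ with a neighbourhood of $\tau_0$) and consider the shifted weights $\tilde\a(t):=\a(t+\tau_0)$, $\tilde\b(t):=\b(t+\tau_0)$. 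Then $(\tilde\a,\tilde\b)$ has exactly $k=\ell=m$ $\a$-humps and $m$ $\b$-humps arranged as in \eqref{ii.23} (resp. \eqref{ii.24}), with common integrals all equal to $A_1$ by \eqref{ii.39} (resp. \eqref{ii.40}), since $\int_0^T(\a_1+\a_{m+1})$ is the integral of the coalesced hump.

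Next I would show that the Poincar\'e maps match up. The $nT$-Poincar\'e map of \eqref{ii.1} taken with initial time $\tau_0$ is conjugate, via the flow from $0$ to $\tau_0$, to the $nT$-Poincar\'e map taken with initial time $0$; in particular the two have the same number of positive fixed points, and these fixed points correspond to the same $nT$-periodic coexistence states of \eqref{ii.1} (only their phase is shifted, so they lie in the same periodicity classes). This is exactly the remark already made in the paragraph preceding \S\ref{sec2.1} and reused in Remark~\ref{rem2.1}. Therefore, counting $nT$-periodic coexistence states of \eqref{ii.1} under \eqref{ii.37} with the convention $\tau_0=0$ is the same as counting positive fixed points of the $nT$-Poincar\'e map of the system with weights $(\tilde\a,\tilde\b)$, which falls under the hypotheses of Theorem~\ref{th2.3} with $k=\ell=m$: assumption \eqref{ii.25} holds for $(\tilde\a,\tilde\b)$ by \eqref{ii.39}, and $u_0=v_0>0$ is preserved because the flow on the $\a$-interval $(\tau_0\!-\!t,\tau_0)$ only rescales $u$, so it suffices to start on the line $u=v$ at time $0$ and pick up the matching initial datum at time $\tau_0$ — indeed one may simply apply Theorem~\ref{th2.3} directly to $(\tilde\a,\tilde\b)$ and then translate back. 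Theorem~\ref{th2.3} then yields $\nu(nm)$ (resp. $\mu(nm)$) such states for every $\l>2/A_1$, which is the assertion.

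The step that needs the most care is the verification that after the shift the weights genuinely satisfy the \emph{pattern} \eqref{ii.23} (resp. \eqref{ii.24}) — that is, that $\supp\tilde\a_1\supseteq\supp\a_{m+1}\cup(\text{shift of }\supp\a_1)$ sits in a single interval disjoint from the (shifted) $\b$-humps, and that the induction in the proof of Lemma~\ref{le2.1} still applies verbatim with the coalesced hump playing the role of $\a_1$. This is where the strict inequality $t_1^{m+1}<t_2^{m+1}$ and the ordering in \S\ref{sec2.4} are used, to guarantee that there really is an $\a$-interval on the circle containing $\tau_0$ in which no $\b_j$ is supported, so that the integration scheme of Lemma~\ref{le2.1} produces, after $n$ turns, the factorised formula \eqref{ii.27} with $nm$ — not $n(m+1)$ — pairs of $\a$/$\b$ intervals. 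Once that bookkeeping is in place, Theorems~\ref{th2.1}--\ref{th2.4} transfer mechanically; in particular one also obtains, exactly as in Theorem~\ref{th2.4} and Proposition~\ref{pr2.3}, the corresponding minimal-period count $\Phi(nm)$ and the periodicity-class statement, though these are not part of the present statement. I would omit the routine details of the integration, referring to the proof of Lemma~\ref{le2.1}, and only spell out the shift and the conjugacy of Poincar\'e maps.
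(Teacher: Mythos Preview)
Your approach is correct and is essentially the paper's own argument in different language. Both reduce to Theorem~\ref{th2.3} by coalescing $\a_{m+1}$ with $\a_1$ into a single hump of integral $A_1$. Where you invoke a time-shift and the conjugacy of the Poincar\'e maps at different initial times, the paper writes down the explicit bijection
\[
(u_0,v_0)\ \longmapsto\ \bigl(u_0\,e^{(1-v_0)\int_0^T\a_{m+1}},\,v_0\bigr)
\]
between fixed points of the auxiliary $k=\ell$ map and those of the $k=\ell+1$ map, and checks it by splitting the flow at $t_1^1$ and at $t_3^{nm}$. If you choose $\tau_0$ just \emph{before} the support of $\a_{m+1}$ (equivalently, view $\tau_0$ as a small negative time so that only the $\a_{m+1}$-piece lies between $\tau_0$ and $0$), your conjugating flow is precisely this map, and the two proofs coincide.

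Two small corrections to your write-up. First, the interval you give for $\tau_0$ does not work as stated: placing $\tau_0$ \emph{after} $\supp\a_{m+1}$ and before $\supp\a_1$ (through $T$) leaves the shifted pattern with $k=\ell+1$, so you must take $\tau_0$ between $\b_m$ and $\a_{m+1}$ (that is, in $[t_3^m,\cdot]$) to land in configuration~\eqref{ii.23} and obtain $\nu(nm)$ rather than $\mu(nm)$. Second, the clause ``$u_0=v_0$ is preserved'' is neither needed nor true in general: the conjugating flow moves the diagonal off itself. What you actually use---and also say in the next sentence---is that Theorem~\ref{th2.3} applied to the shifted weights yields $\nu(nm)$ fixed points with $\tilde u_0=\tilde v_0$, and the conjugacy then carries them bijectively to $\nu(nm)$ fixed points of the original Poincar\'e map (with no diagonal constraint), which is exactly the paper's conclusion.
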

\begin{proof}
Assume \eqref{ii.37} and \eqref{ii.39}, and let denote the Poincar\'e map in the time interval $[s_1,s_2]$ by $\mc{P}_{[s_1,s_2]}^{k=\ell}$ if $k=\ell$, and by $\mc{P}_{[s_1,s_2]}^{k=\ell+1}$ if $k=\ell+1$.  Then, the value $\int_0^T\a_1$ in case $k=\ell$ equals $\int_0^T(\a_1+\a_{m+1})$ in case $k=\ell+1$.
\par
Let $(u_0,v_0)$ be a fixed point of $\mc{P}_{[0,nT]}^{k=\ell}=\mc{P}_{[0,t_3^{nm}]}^{k=\ell}$. Then,
the point $(u_0e^{(1-v_0)\int_0^T\a_{m+1}},v_0)$ is a fixed point of $\mc{P}_{[0,nT]}^{k=\ell+1}=\mc{P}_{[0,t_1^{nm+1}]}^{k=\ell+1}$. Indeed, by the structure of \eqref{ii.1},
\begin{equation*}
\begin{split}
\mc{P}^{k=\ell+1}_{[0,t_1^1]}(u_0e^{(1-v_0)\int_0^T\a_{m+1}},v_0)&=
(u_0e^{(1-v_0)\int_0^T(\a_{m+1}+\a_1)},v_0)\\&=(u_0e^{(1-v_0)A_1},v_0)=\mc{P}^{k=\ell}_{[0,t_1^1]}(u_0,v_0).
\end{split}
\end{equation*}
Moreover, thanks to \eqref{ii.39}, we also have that $\mc{P}^{k=\ell+1}_{[t_1^1,t_{3}^{nm}]}=\mc{P}^{k=\ell}_{[t_1^1,t_{3}^{nm}]}$. Thus, since $(u_0,v_0)$ is a fixed point of $\mc{P}_{[0,t_3^{nm}]}^{k=\ell}$, it becomes apparent that
\[
\mc{P}^{k=\ell+1}_{[0,t_{3}^{nm}]}(u_0e^{(1-v_0)\int_0^T\a_{m+1}},v_0)=(u_0,v_0).
\]
Therefore,
\[
\mc{P}^{k=\ell+1}_{[0,t_{1}^{nm+1}]}(u_0e^{(1-v_0)\int_0^T\a_{m+1}},v_0)=(u_0e^{(1-v_0)\int_0^T\a_{m+1}},v_0),
\]
i.e., $(u_0e^{(1-v_0)\int_0^T\a_{m+1}},v_0)$ is an $nT$-periodic coexistence state.
This establishes a bijection between the $nT$-coexistence states of \eqref{ii.1}
in cases $\ell=k$ and $k=\ell+1$, and shows that \eqref{ii.1} has $\nu(nm)$
$nT$-periodic coexistence states under condition \eqref{ii.37}.
\par
As the proof when $\ell=k+1$ can be accomplished similarly, we will omit its technical details here.
\end{proof}

Our next result gives some sufficient conditions for non-existence.

\begin{lemma}
\label{le2.2}
The following non-existence results hold:
\begin{enumerate}
\item[{\rm (i)}] \eqref{ii.1} cannot admit any non-trivial $nT$-periodic coexistence state, $n\in\N$, if $k+\ell=1$.
\item[{\rm (ii)}] \eqref{ii.1} cannot admit any non-trivial $T$-periodic coexistence state if $k+\ell=3$.
\end{enumerate}
\end{lemma}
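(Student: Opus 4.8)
The plan is to prove both non-existence statements by the same elementary device already used for Theorems \ref{th2.1} and \ref{th2.2}: integrate \eqref{ii.1} explicitly on each maximal interval where at most one of the two weights is active, write down the resulting Poincar\'e map in closed form, and then read off its positive fixed and periodic points. Throughout, the strict positivity of $\lambda$ and of each partial integral $\int_0^T\a_i$, $\int_0^T\b_j$ is exactly what makes the relevant exponentials equal $1$ only at trivial values.

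First I would dispose of (i). The case $k+\ell=1$ amounts to one of the two weights vanishing identically (formally $k=0$ or $\ell=0$, which lies outside the running assumptions of the section); say $\b\equiv0$, the case $\a\equiv0$ being symmetric, obtained by interchanging the two equations of \eqref{ii.1}. Then the second equation gives $v(t)\equiv v_0$ along any solution with $v(0)=v_0$, and the first integrates to $u(t)=u_0\exp\!\big((1-v_0)\lambda\int_0^t\a\big)$. If such a solution with $u_0,v_0>0$ were $nT$-periodic, then $u_0=u(nT)=u_0\exp\!\big((1-v_0)\lambda nA\big)$ with $\lambda nA>0$, forcing $v_0=1$; but then $u\equiv u_0$, so the coexistence state is the constant equilibrium $(u_0,1)$. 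Hence \eqref{ii.1} admits no \emph{non-trivial}, i.e.\ non-constant, $nT$-periodic coexistence state.

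For (ii), the constraint $|k-\ell|\leq1$ together with $k+\ell=3$ forces $\{k,\ell\}=\{1,2\}$, so $m:=\min\{k,\ell\}=1$ and on $[0,T]$ there are exactly three active humps, ordered $\a_1,\b_1,\a_2$ under \eqref{ii.37} and $\b_1,\a_1,\b_2$ under \eqref{ii.38}. I would carry out the first configuration; the second is identical after swapping the roles of the two equations. Integrating phase by phase --- and noting that on the intervals where \emph{both} weights vanish each component is frozen --- one obtains, with $A_i:=\int_0^T\a_i$ and $B_1:=\int_0^T\b_1$: across the $\a_1$-hump $v$ stays equal to $v_0$ and $u$ becomes $\bar u:=u_0 e^{(1-v_0)\lambda A_1}$; across the $\b_1$-hump $u$ stays equal to $\bar u$ and $v$ becomes $\bar v:=v_0 e^{(\bar u-1)\lambda B_1}$; across the $\a_2$-hump $v$ stays equal to $\bar v$ and $u$ becomes $\tilde u:=\bar u e^{(1-\bar v)\lambda A_2}$. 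A $T$-periodic coexistence state is precisely a positive fixed point of $(u_0,v_0)\mapsto(\tilde u,\bar v)$. Now $\bar v=v_0$ forces $\bar u=1$ (using $v_0>0$ and $\lambda B_1>0$); with $\bar u=1$, the relations $\bar u=u_0 e^{(1-v_0)\lambda A_1}$ and $\tilde u=u_0$ become $u_0=e^{(v_0-1)\lambda A_1}$ and $u_0=e^{(1-v_0)\lambda A_2}$, so $(v_0-1)\lambda(A_1+A_2)=0$, hence $v_0=1$ and then $u_0=1$. Therefore $(1,1)$ is the only $T$-periodic coexistence state, which is (ii).

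I do not anticipate a genuine obstacle: in both parts the Poincar\'e map is fully explicit and its fixed-point equations are solved by inspection. The only place requiring a little care is the combinatorial bookkeeping of which component is frozen on which sub-interval when $k\neq\ell$, together with the (automatic) remark that the conclusion depends solely on the strictly positive integrals $A_i$, $B_j$ and not on the partition points $t_r^s$ nor on the internal shape of the humps; and one should state explicitly that the reductions from \eqref{ii.38} to \eqref{ii.37} and from $\a\equiv0$ to $\b\equiv0$ are just the symmetry interchanging the two equations of \eqref{ii.1}.
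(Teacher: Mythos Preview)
Your argument is correct. Part (i) matches the paper's proof in spirit but fills in a step the paper leaves implicit: the paper simply observes that one component is constant and declares the proof finished, whereas you correctly go on to show that periodicity of the \emph{other} component then forces $v_0=1$ (resp.\ $u_0=1$), so the solution is constant.

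For (ii) the two approaches genuinely differ. The paper gives a short geometric argument in the phase plane: over $[0,T]$ the orbit is a concatenation of three straight segments, two parallel to one axis and one parallel to the other, and such a broken line cannot close up non-trivially. You instead write the Poincar\'e map explicitly and solve the fixed-point system $\bar v=v_0$, $\tilde u=u_0$ by elimination. Your route is more computational but also more self-contained: it shows directly that $(1,1)$ is the \emph{unique} positive fixed point, without asking the reader to verify the degenerate sub-cases of the geometric picture (when one of the three segments collapses to a point). The paper's argument, on the other hand, is configuration-independent and makes the obstruction visually transparent. Either proof is perfectly adequate here.
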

\begin{proof}
If $k+\ell=1$, then either $k=1$ and $\ell=0$, or $\ell=1$ and $k=0$. Thus, either $u$, or $v$, are constant for all $t\in[0,T]$, which ends the proof.
\par
Now, suppose that $k+\ell=3$. Then, there exist
$$
0\leq t_0^1<t_1^1\leq t_2^1<t_3^1\leq t_0^2<t_1^2\leq T,
$$
such that either
\begin{equation*}
\supp\alpha_1\subseteq[t_0^1,t_1^1],\quad \supp\beta_1\subseteq[t_2^2,t_3^1], \qquad \supp\alpha_2\subseteq[t_0^2,t_1^2],
\end{equation*}
or
\begin{equation*}
\supp\b_1\subseteq[t_0^1,t_1^1],\quad \supp\a_1\subseteq[t_2^2,t_3^1],\quad
\supp\b_2\subseteq[t_0^2,t_1^2],
\end{equation*}
as illustrated in Figure \ref{fig-vi}.
\begin{figure}[h!]
\centering
\includegraphics[scale=0.4]{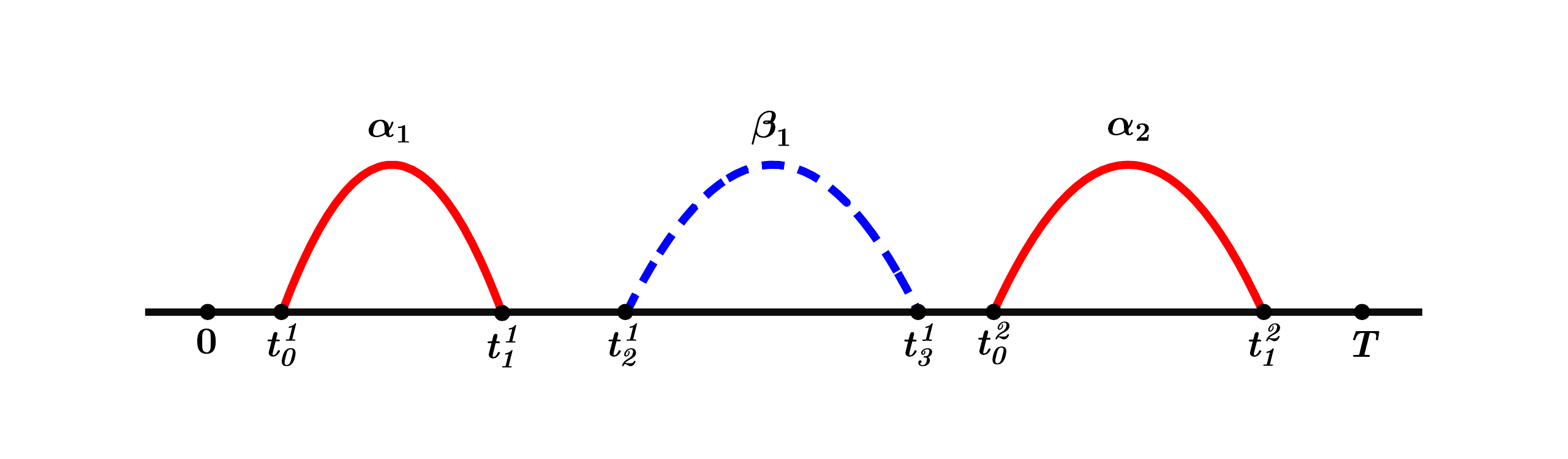}\\
\includegraphics[scale=0.4]{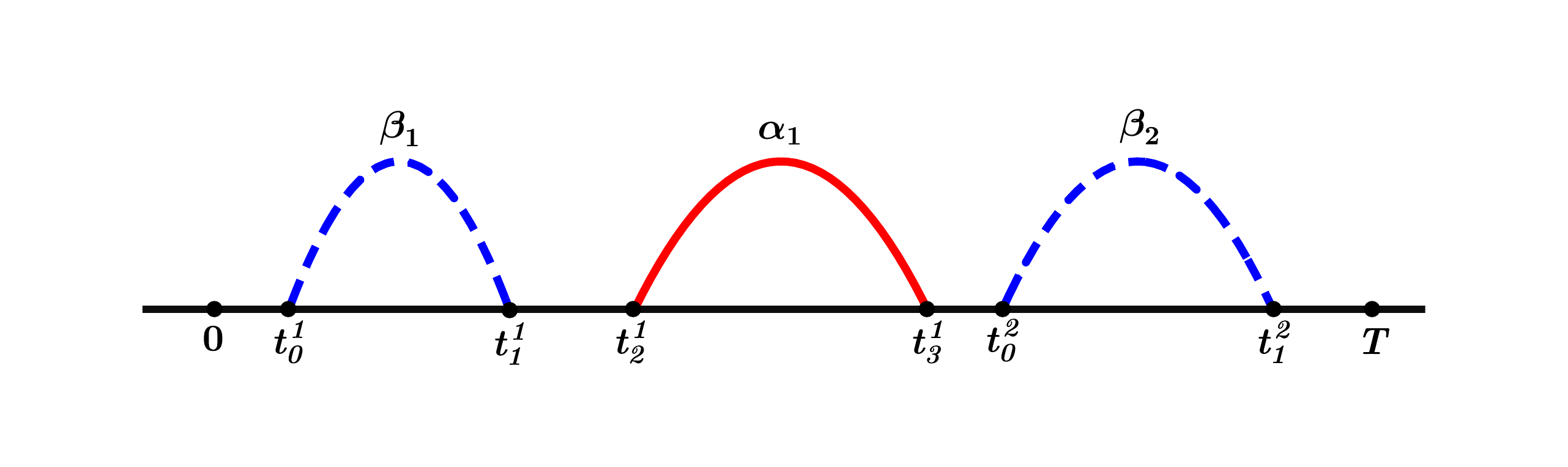}
\caption{Two admissible examples with $k=2$, $\ell=1$, and $k=1$, $\ell=2$. }
\label{fig-vi}
\end{figure}

By the special structure of \eqref{ii.1}, where $\a\b=0$, the orbit of any solution in the interval $[0,T]$ consists of three lines, two of them parallel in the phase-plane to one of the axis, while the third one is parallel to
the other axis. So, these orbits cannot be closed. Therefore, \eqref{ii.1} cannot admit any
$T$-periodic solution.
\end{proof}
\par

The next theorem summarizes the results found in the previous  four sections. It characterizes
the existence of $T$-periodic coexistence states, and subharmonics of all orders of \eqref{ii.1},
in terms of the number of $\a$-intervals and $\b$-intervals in $[0,T]$.

\begin{theorem}
\label{th2.6}
The system \eqref{ii.1} admits, for sufficiently large $\lambda$, some $T$-periodic coexistence state if, and only if,  $k+\ell\geq4$. Moreover, for every $\l>2/A_1$, under the appropriate symmetry properties,  \eqref{ii.1} has subharmonics of all orders, $n\geq 2$, if, and only if, $k+\ell\geq2$.
\end{theorem}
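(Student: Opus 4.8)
The plan is to derive Theorem \ref{th2.6} as a corollary of the material developed in Sections \ref{sec2.1}--\ref{sec2.4}: the genuine work is already contained in Lemma \ref{le2.2} and Theorems \ref{th2.1}--\ref{th2.5}, and what remains is to organize those statements according to the value of $k+\ell$ and, in two borderline situations, the parity of $m:=\min\{k,\ell\}$. Throughout, ``appropriate symmetry properties'' means the equal-area normalizations \eqref{ii.25} (or \eqref{ii.39}--\eqref{ii.40} when $k\neq\ell$) together with $u_0=v_0$, exactly as in those theorems.

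For the first assertion, the ``only if'' implication is immediate. If one of the weights vanishes identically (the case $k+\ell\leq 1$ of Lemma \ref{le2.2}), part (i) of that lemma precludes any non-trivial periodic coexistence state; if $k+\ell=2$, then $k=\ell=1$ and Theorems \ref{th2.1} and \ref{th2.2} assert that $(1,1)$ is the only $T$-periodic coexistence state; and if $k+\ell=3$, Lemma \ref{le2.2}(ii) rules out non-trivial $T$-periodic coexistence states. Conversely, if $k+\ell\geq 4$, then $|k-\ell|\leq 1$ forces $m\geq 2$. When $k\neq\ell$, Theorem \ref{th2.5} reduces the count of $T$-periodic coexistence states to the case $k=\ell=m$, so it suffices to treat the latter: there Theorem \ref{th2.1} (under \eqref{ii.23}) gives $\nu(m)\geq 2$ such states for $\lambda>2/A_1$, while Theorem \ref{th2.2} (under \eqref{ii.24}) gives $\mu(m)$ of them, which is $\geq 2$ as soon as $m\geq 3$. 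The sole exception is $k=\ell=2$ under \eqref{ii.24}, where $\mu(2)=0$: here I would invoke Lemma \ref{le2.1} with $nm=qr=2$ to identify the $T$-time Poincar\'e map for $k=\ell=2$ with the $2T$-time Poincar\'e map for $k=\ell=1$, whose non-trivial fixed points are exactly two for every $\lambda>2/\sqrt{A_1B_1}$ by Theorem \ref{th2.2}. In every case with $k+\ell\geq 4$ one thus obtains at least two non-trivial $T$-periodic coexistence states for all sufficiently large $\lambda$.

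For the second assertion, the ``only if'' part is once more Lemma \ref{le2.2}(i): if $k+\ell\leq 1$, one of $u$, $v$ is constant, so no subharmonic can occur. For the converse, assume $k+\ell\geq 2$, so $m:=\min\{k,\ell\}\geq 1$, and fix an integer $n\geq 2$. If $nm>2$, Theorem \ref{th2.4} furnishes at least $\Phi(nm)\geq 1$ coexistence states of minimal period $nT$ when $k=\ell=m$, and the same count persists when $k\neq\ell$ since the explicit bijection in the proof of Theorem \ref{th2.5} carries $nT$-periodic solutions to $nT$-periodic solutions without changing the minimal period; hence \eqref{ii.1} has a subharmonic of order $n$. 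The only remaining case is $nm=2$, i.e. $n=2$ and $k=\ell=1$; here Theorems \ref{th2.1} and \ref{th2.2} give exactly two non-trivial $2T$-periodic coexistence states, which have minimal period $2T$ precisely because $(1,1)$ is the only $T$-periodic coexistence state. Therefore, for every $\lambda>2/A_1$ and under the stated symmetry, \eqref{ii.1} possesses subharmonics of every order $n\geq 2$ whenever $k+\ell\geq 2$, which finishes the proof.

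I expect the whole difficulty to be concentrated in this borderline bookkeeping rather than in any new analytic estimate. Concretely, the two points requiring care are the exceptional value $\mu(2)=0$ for $k=\ell=2$ under configuration \eqref{ii.24} and the boundary case $nm=2$ in the second assertion: in both Theorems \ref{th2.3}--\ref{th2.4} are silent, and one must fall back on the elementary Theorems \ref{th2.1} and \ref{th2.2} via the Poincar\'e-map identity of Lemma \ref{le2.1}, together with the (routine but necessary) check that the reduction in Theorem \ref{th2.5} preserves the minimal period of a coexistence state.
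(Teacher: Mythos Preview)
Your proposal is correct and matches the paper's treatment: Theorem \ref{th2.6} is stated there as a summary of Sections \ref{sec2.1}--\ref{sec2.4} with no separate proof, and your argument is precisely the bookkeeping one would write out, invoking Lemma \ref{le2.2} and Theorems \ref{th2.1}--\ref{th2.5} case by case. Two minor points of enumeration: in the first assertion, the references to Theorems \ref{th2.1} and \ref{th2.2} for $k=\ell=m\geq 2$ are really to Theorem \ref{th2.3} (equivalently, Lemma \ref{le2.1} followed by Theorems \ref{th2.1}--\ref{th2.2}); and in the second assertion, the case $nm=2$ also includes $n=2$ with $\{k,\ell\}=\{1,2\}$, not only $k=\ell=1$, but this is covered at once by the same bijection from Theorem \ref{th2.5} that you already use for $nm>2$.
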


\subsection{The limiting $T$-periodic case $k=\ell=2$}
\label{sec2.5}

According to Theorem \ref{th2.6}, the
condition $k+\ell\geq4$ is necessary and sufficient
so that \eqref{ii.1}
can admit a $T$-periodic coexistence state.
In this section, we deal with the limiting case when $k=\ell=2$ and ascertain
the bifurcation directions to $T$-periodic coexistence states.
Note that when $k=\ell\geq2$, then there exist
$$
   0\leq t_0^1<t_1^1\leq t_2^1<t_3^1\leq t_0^2<t_1^2\leq t_2^2<t_3^2\leq T.
$$
such that either
\begin{equation}
\label{ii.41}
\supp\alpha_1\subseteq[t_0^1,t_1^1],\quad \supp\beta_1\subseteq[t_2^1,t_3^1],\quad \supp\alpha_2\subseteq[t_0^2,t_1^2],\quad \supp\beta_2\subseteq[t_2^2,t_3^2],
\end{equation}
or
\begin{equation}
\label{ii.42}
\supp\beta_1\subseteq[t_0^1,t_1^1],\quad \supp\alpha_1\subseteq[t_2^1,t_3^1],\quad \supp\beta_2\subseteq[t_0^2,t_1^2],\quad \supp\alpha_2\subseteq[t_2^2,t_3^2].
\end{equation}
Figure \ref{fig-vii} shows two admissible configurations.

\begin{figure}[h!]
\centering
\includegraphics[scale=0.4]{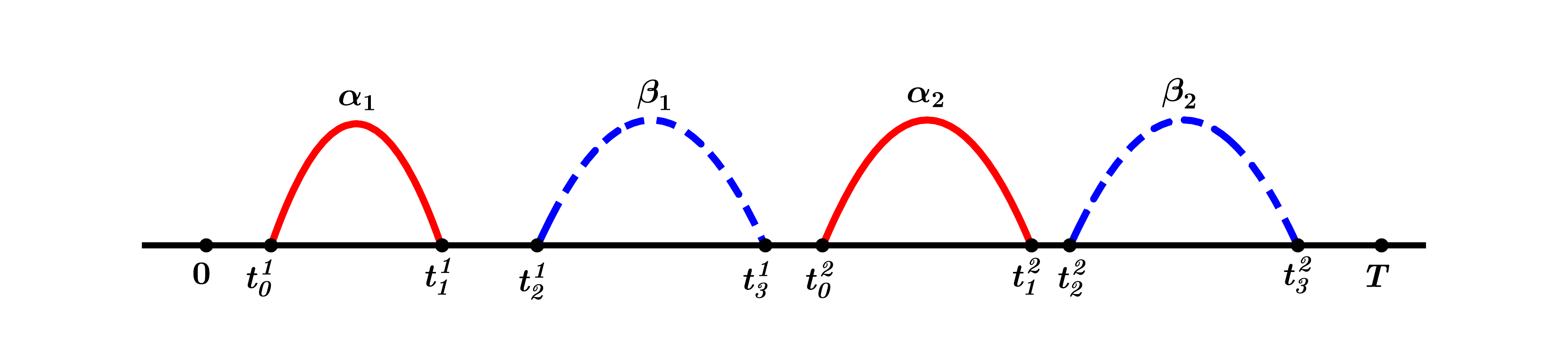}\\
\includegraphics[scale=0.4]{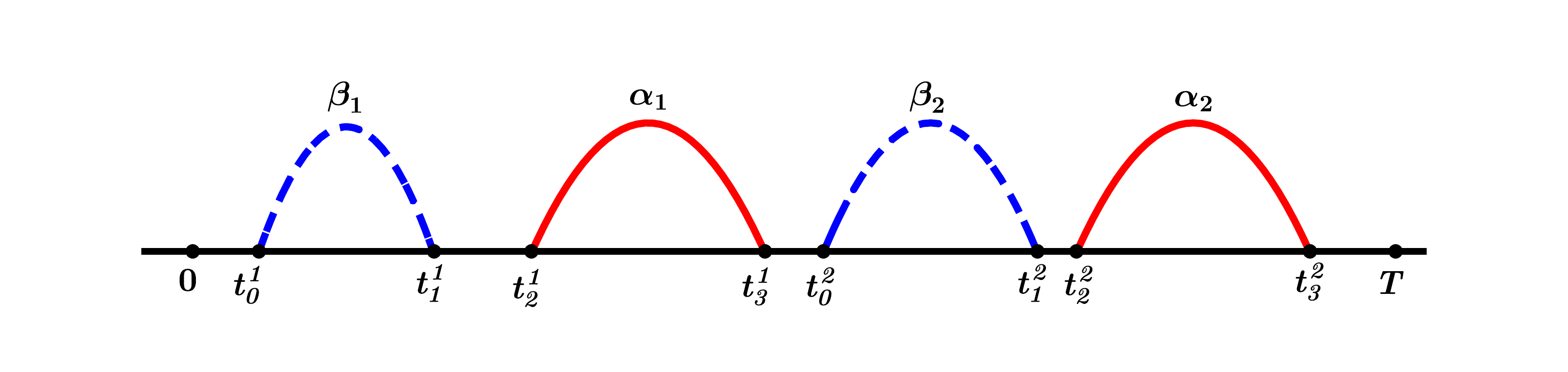}
\caption{Two examples,  with $k=\ell=2$, satisfying \eqref{ii.41}
(above) and \eqref{ii.42} (below).} \label{fig-vii}
\end{figure}

\begin{theorem}
\label{th2.7}
Under the assumption \eqref{ii.41}, or \eqref{ii.42}, \eqref{ii.1} has, at least, two $T$-periodic coexistence states for every $\l>\l_0$, where
\begin{equation}
\label{ii.43}
\l_0:= \sqrt{\frac{(A_1+A_2)(B_1+B_2)}{A_1A_2B_1B_2}}, \qquad
A_i:=\int_{\mathrm{supp}\a_i}\a_i,\quad B_i:= \int_{\mathrm{supp}\b_i}\b_i,\quad i=1,2.
\end{equation}
In general, although $\l>\l_0$ is a sufficient condition, it is far from necessary. Actually, regarding $\l$ as the main bifurcation parameter, $\l_0$ provides with a bifurcation value to $T$-periodic coexistence states of \eqref{ii.1} from the line $(\l,u,v)=(\l,1,1)$, and there are some ranges of values of the parameters, $A_j$, $B_j$, $j\in\{1,2\}$, for which this bifurcation is transcritical.
\end{theorem}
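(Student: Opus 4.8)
The plan is to exploit the degenerate structure to integrate \eqref{ii.1} explicitly over the four consecutive blocks of the partition and to reduce the fixed point equation for the Poincar\'e map $\mc{P}_1$ to a single scalar equation, exactly as in the proofs of Theorems \ref{th2.1} and \ref{th2.2}. Assume first \eqref{ii.41}. Integrating in turn on $[0,t_1^1]$ (where only $\a_1$ acts), on $[t_1^1,t_3^1]$ (only $\b_1$), on $[t_3^1,t_1^2]$ (only $\a_2$) and on $[t_1^2,T]$ (only $\b_2$), and writing $(u_1,v_1)$ for the state reached at $t_3^1$, one obtains
$$
u_1=u_0e^{(1-v_0)\l A_1},\quad v_1=v_0e^{(u_1-1)\l B_1},\quad u(T)=u_1e^{(1-v_1)\l A_2},\quad v(T)=v_1e^{(u(T)-1)\l B_2}.
$$
Hence a component-wise positive pair $(u_0,v_0)$ is a $T$-periodic coexistence state if and only if
$$
(1-v_0)A_1+(1-v_1)A_2=0\qquad\text{and}\qquad (u_1-1)B_1+(u_0-1)B_2=0 .
$$
Setting $p:=u_1-1$, the first identity gives $v_0=(A_1+A_2)/(A_1+A_2e^{p\l B_1})$ and the second gives $u_0=1-\tfrac{B_1}{B_2}p$; substituting into $u_1=u_0e^{(1-v_0)\l A_1}$ shows that the $T$-periodic coexistence states are, through $p\mapsto(u_0,v_0)$, in bijection with the zeros, lying in the admissible interval $p\in(-1,B_2/B_1)$ (forced by $u_0>0$ and $u_1>0$), of the analytic function
$$
\Psi(\l,p):=\Bigl(1-\tfrac{B_1}{B_2}p\Bigr)\exp\!\Bigl(\l\,\tfrac{A_1A_2\,(e^{p\l B_1}-1)}{A_1+A_2e^{p\l B_1}}\Bigr)-(1+p),
$$
the trivial zero $p=0$ corresponding to the equilibrium $(1,1)$.

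Next I would read off the sign of $\Psi(\l,\cdot)$ at the ends of the admissible interval: $\Psi(\l,p)\to+\infty$ as $p\to-\infty$, $\Psi(\l,-1)>0$, and $\Psi(\l,p)<0$ as $p\uparrow B_2/B_1$. A short differentiation gives
$$
\partial_p\Psi(\l,0)=\frac{\l^2A_1A_2B_1}{A_1+A_2}-\frac{B_1}{B_2}-1 ,
$$
which is increasing in $\l$, vanishes precisely at $\l=\l_0$ and is positive for $\l>\l_0$. Therefore, when $\l>\l_0$, $\Psi(\l,\cdot)$ is negative just to the left of $0$ and positive just to the right, so the intermediate value theorem, combined with the endpoint signs, yields one zero in $(-1,0)$ and one in $(0,B_2/B_1)$, i.e.\ two distinct non-trivial $T$-periodic coexistence states (distinct $p$'s give distinct $u_0$'s, and $p\neq0$ forces $(u_0,v_0)\neq(1,1)$). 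The case \eqref{ii.42} is treated identically: integrating in the order $\b_1,\a_1,\b_2,\a_2$ and parametrising by $q:=v_1-1\in(-1,A_2/A_1)$ leads to the companion function $\widetilde\Psi(\l,q)=(1-\tfrac{A_1}{A_2}q)\exp\!\bigl(\l\,\tfrac{B_1B_2(1-e^{-q\l A_1})}{B_1+B_2e^{-q\l A_1}}\bigr)-(1+q)$, whose $q$-derivative at $q=0$ vanishes at the very same $\l_0$ (indeed $\l_0$ is symmetric under $A_i\leftrightarrow B_i$), and the same argument applies.

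For the bifurcation statement the relevant facts are that $\Psi(\l,0)\equiv0$, $\partial_p\Psi(\l_0,0)=0$ and $\partial_\l\partial_p\Psi(\l_0,0)=\tfrac{2\l_0A_1A_2B_1}{A_1+A_2}\neq0$. Hence the reduced function defined by $G(\l,p):=\Psi(\l,p)/p$ for $p\neq0$ and $G(\l,0):=\partial_p\Psi(\l,0)$ is analytic with $G(\l_0,0)=0$ and $\partial_\l G(\l_0,0)\neq0$, so the implicit function theorem provides an analytic branch $\l=\Lambda(p)$, $\Lambda(0)=\l_0$, of non-trivial $T$-periodic coexistence states emanating from $(\l,u,v)=(\l_0,1,1)$; equivalently, one may run the Crandall--Rabinowitz theorem together with the generalised algebraic multiplicity argument ($\chi=1$) of \cite{ELG-1988,LG-2001}, exactly as in the proof of Theorem \ref{th2.2}. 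Since $\Lambda'(0)=-\tfrac12\,\partial_p^2\Psi(\l_0,0)/\partial_\l\partial_p\Psi(\l_0,0)$, the bifurcation is transcritical precisely when $\partial_p^2\Psi(\l_0,0)\neq0$. The main point requiring care is the evaluation of $\partial_p^2\Psi(\l_0,0)$ and the verification that, although it vanishes on a thin set of parameters (for instance it equals $0$ when $A_1=A_2$ and $B_1=B_2$, a pitchfork case), it does not vanish identically: a direct computation gives $\partial_p^2\Psi(\l_0,0)=-\tfrac{2\sqrt{3}}{3}$ when $A_1=B_1=B_2=1$ and $A_2=2$, so transcriticality holds on an open set of parameters $(A_1,A_2,B_1,B_2)$. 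Finally, that $\l>\l_0$ is not necessary follows from the same picture: for $\l<\l_0$ one has $\partial_p\Psi(\l,0)<0$, so the endpoint-sign argument forces no non-trivial zero, yet the global shape of $\Psi(\l,\cdot)$ may still create some.
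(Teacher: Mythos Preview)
Your proof is correct and follows essentially the same strategy as the paper's: reduce the fixed-point condition for $\mc{P}_1$ to a single scalar equation via the block structure, use the intermediate value theorem with $\partial_p\Psi(\l,0)>0$ for $\l>\l_0$ to get two non-trivial zeros, and then run a Crandall--Rabinowitz/IFT argument for the bifurcation claim. The only cosmetic differences are your choice of scalar variable ($p=u_1-1$ instead of the paper's $x=v_0$), and that you verify transcriticality by a single explicit example ($A_1=B_1=B_2=1$, $A_2=2$) rather than the paper's open parameter range \eqref{ii.52}; your final sentence on non-necessity is phrased a bit loosely, but the transcritical branch you exhibit already furnishes solutions for $\l$ slightly below $\l_0$, which is all the theorem asserts.
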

\begin{proof}
Assume \eqref{ii.41}. Then,  integrating \eqref{ii.1} yields
\begin{equation}
\label{ii.44}
u(T)=u_0e^{(1-v_0)\l A_1+(1-v(t_0^2))\l A_2},\qquad
v(T)=v_0e^{(-1+u(t_0^2))\l B_1+(-1+u(T))\l B_2}.
\end{equation}
Thus, a solution $(u(t),v(t))$ of \eqref{ii.1} with initial data $(u_0,v_0)$ is a $T$-periodic coexistence state if, and only, if $u_0,v_0>0$ and $\mc{P}_1(u_0,v_0)=(u_0,v_0)$, i.e, by \eqref{ii.44}, if, and only, if $u_0,v_0>0$ and
\begin{equation*}
(1-v_0)A_1=(v(t_0^2)-1)A_2,\qquad (1-u_0)B_2=(u(t_0^2)-1)B_1,
\end{equation*}
which, again integrating \eqref{ii.1}, is equivalent to
\begin{equation}
\label{ii.45}
\left \{
\begin{array}{ll}
(1-v_0)A_1&=(v_0e^{(-1+u_0e^{(1-v_0)\l A_1})\l B_1}-1)A_2,\\[1ex]
(1-u_0)B_2&=(u_0e^{(1-v_0)\l A_1}-1)B_1.
\end{array}
\right.
\end{equation}
Hence, eliminating $u_0$ from the second equation of \eqref{ii.45},
\begin{equation}
\label{ii.46}
u_0=\frac{B_1+B_2}{B_2+B_1e^{(1-v_0)\l A_1}}.
\end{equation}
So, substituting \eqref{ii.46} into the first equation of  \eqref{ii.45}, it follows that
\begin{equation*}
    (1-v_0)A_1=(v_0e^{(-1+\frac{B_1+B_2}{B_2+B_1e^{(1-v_0)\l A_1}}e^{(1-v_0)\l A_1})\l B_1}-1)A_2.
\end{equation*}
Consequently, naming $x\equiv v_0$ and setting
\begin{equation}
\label{ii.47}
\Phi(x):=x(A_2e^{\frac{B_2(e^{(1-x)\l A_1}-1)}{B_1e^{(1-x)\l A_1}+B_2}\l B_1}+A_1)-(A_1+A_2),
\end{equation}
it is apparent that $\Phi^{-1}(0)$ provides us with the set of $T$-periodic coexistence states of \eqref{ii.1}.   As $\Phi(x)<0$ for all $x\leq 0$, its zeroes are always positive.  Since
\[
    \Phi(0)=-(A_1+A_2)<0,\qquad \Phi(1)=0,
\]
\[
  \Phi(x)>0\quad \hbox{if} \quad x\geq M:=1+A_2/A_1,
\]
and
\[
    \Phi'(1)=A_1+A_2-\lambda^2\frac{A_1A_2B_1B_2}{B_1+B_2},
\]
we find that $\Phi'(1)<0$ if and only if, $\l > \l_0$ (see \eqref{ii.43}). Therefore, \eqref{ii.1}
possesses two $T$-periodic solutions, $(\l,x_\pm)$, with $0<x_-<1$ and $1< x_+< M$. This ends the proof
when \eqref{ii.41} holds. Note that $x_\pm \in (0,M)$.
\par
Assume \eqref{ii.42}. Then, repeating the previous argument it is apparent that the $T$-periodic
coexistence states of \eqref{ii.1} are the zeroes of the map
\begin{equation}
\label{ii.48}
\Psi(x):=x(B_2e^{\frac{A_2(1-e^{(x-1)\l B_1})}{A_1e^{(x-1)\l B_1}+A_2}\l A_1}+B_1)-(B_1+B_2).
\end{equation}
As above, since $\Psi(x)<0$ for all $x\leq 0$, its zeroes are always positive.
Thus, as $\Psi(x)$  satisfies
\[
    \Psi(0)=-(B_1+B_2)<0,\qquad \Psi(1)=0,
\]
\[
  \Psi(x)>0\quad \hbox{if}\quad x\geq N:=1+B_2/B_1,
\]
and
\[
    \Psi'(1)=B_1+B_2-\lambda^2\frac{A_1A_2B_1B_2}{A_1+A_2},
\]
it is apparent that $\Psi'(1)<0$ if $\l >\l_0$. Therefore, in this case, \eqref{ii.1} admits, at least,
two $T$-periodic coexistence states, $(\l,x^{\pm})$, with $0<x^+<1$ and $1<x^-<N$. Note that
$x^\pm \in (0,N)$. This concludes the proof that $\l>\l_0$ is sufficient for the existence
of, at least, two $T$-periodic coexistence states.
\par
It remains to determine the bifurcation directions from $(\l,x)=(\l_0,1)$ in both cases. Now,
it is appropriate to made explicit the dependence of the functions $\Phi$ and $\Psi$ not only on $x$ but also on $\l$, for as $\l$ will be though as a bifurcation parameter.
\par
Assume \eqref{ii.41} and let $\Phi(\l,x)$ denote the function defined by \eqref{ii.47}. Then, the linearization of this function at $(\l,1)$ is given by
\begin{equation}
\label{ii.49}
     \mathfrak{L}(\lambda):=\frac{\partial\Phi }{\partial x}(\lambda,1)=A_1+A_2-\lambda^2\frac{A_1A_2B_1B_2}{B_1+B_2}.
\end{equation}
Thus, using the notations of \cite{LG-2001}, we find that, by the definition of $\l_0$,
\[
  \mf{L}_0:=\mathfrak{L}(\lambda_0) =\frac{\partial\Phi }{\partial x}(\l_0,1)=0,  \qquad \mathfrak{L}_{1}:=\frac{d\mathfrak{L}}{d\lambda}(\lambda_0).
\]
We claim that the next algebraic transversality condition holds
\begin{equation}
\label{ii.50}
\mathfrak{L}_1(N[\mathfrak{L}_0])\oplus R[\mathfrak{L}_0]=\R.
\end{equation}
Indeed, since $\mathfrak{L}_0=0$, it is apparent that $R[\mf{L}_0]=[0]$ and hence,
$N[\mathfrak{L}_0]=\R={\rm span}[1]$. Moreover, differentiating with respect to $\l$ \eqref{ii.49} yields
\[
\mathfrak{L}_1=-\frac{2\lambda_0A_1A_2B_1B_2}{B_1+B_2}\neq0.
\]
Therefore, $\mathfrak{L}_11\notin R[\mathfrak{L}_0]$ and \eqref{ii.50} holds. Consequently, by Theorem 7.1 of
Crandall and Rabinowitz \cite{CrRa-1971}, there exist $\varepsilon>0$ and two analytic functions $\lambda, x:(-\varepsilon,\varepsilon)\to \R$ such that, for some $\l_1\in\R$ to be determined,
$$
   \lambda(s)=\lambda_0+\lambda_1s+\mathcal{O}(s^2), \qquad x(s)=1+s+\mathcal{O}(s^2),\qquad \hbox{as} \;\;
     s\to 0,
$$
and $\Phi(\lambda(s),x(s))=0$ for all $s\in(-\varepsilon,\varepsilon)$. Moreover, besides $(\l,1)$, these
are the unique solutions of \eqref{ii.1} in a neighborhood of $(\l_0,1)$.
\par
Setting
$$
   \v(s):= \Phi(\l(s),x(s)), \qquad |s|<\e,
$$
it is apparent that
\begin{equation*}
  0=\v(s)=\v(0)+\v'(0)s+\frac{1}{2}\v''(0)s^2 + \mc{O}(s^3)\qquad |s|<\e,
\end{equation*}
where $'$ stands for differentiation with respect to $s$. So,
$$
   \v(0)=\v'(0)=\v''(0)=0.
$$
By construction,
$$
  0=\v(0)=\Phi(\lambda_0,1)=0, \qquad \frac{\partial\Phi}{\partial x}(\lambda_0,1).
$$
Moreover, differentiating \eqref{ii.47} with respect to $\l$, it becomes apparent that
$$
  \frac{\partial\Phi}{\partial \lambda}(\lambda_0,1)=0.
$$
Thus, since $\l'(0)=\l_1$,
\[
  \v'(0)= \frac{\partial\Phi}{\partial \lambda}(\lambda_0,1) \lambda_1+\frac{\partial\Phi}{\partial x}(\lambda_0,1)= 0
\]
does not provide any information on the sign of $\lambda_1$. So, we must analyze the second order terms of
$\Phi(\l,x)$ at $(\l_0,1)$. By differentiating $\Phi$, after some straightforward, but tedious,
manipulations, we find that
\[
\frac{\partial^2\Phi}{\partial x^2}(\lambda_0,1)=\l_0^2 \frac{A_1A_2B_1B_2}{(B_1+B_2)^2}\left[(B_1+B_2)(\l_0A_1-2)+\l_0A_1B_1(\l_0B_2-2)\right],
\]
\[
\frac{\partial^2\Phi}{\partial x\partial\lambda}(\lambda_0,1)=-2\lambda_0\frac{A_1A_2B_1B_2}{B_1+B_2}, \qquad \frac{\partial^2\Phi}{\partial\lambda^2}(\lambda_0,1)=0.
\]
Consequently, differentiating and substituting the previous values of the second derivatives, it is apparent that
\begin{align*}
0&=\v''(0) =2\lambda_1\frac{\partial^2\Phi}{\partial x\partial\lambda}(\lambda_0,1)+\frac{\partial^2\Phi}{\partial x^2}(\lambda_0,1)+\frac{\partial^2\Phi}{\partial\lambda^2}(\lambda_0,1)\\&=
\lambda_1\left(-4\lambda_0\frac{A_1A_2B_1B_2}{B_1+B_2}\right)+
\l_0^2\frac{A_1A_2B_1B_2}{(B_1+B_2)^2}\left[(B_1+B_2)(\l_0A_1-2)+\l_0A_1B_1(\l_0B_2-2)\right]
\end{align*}
and therefore,
\begin{equation}
\label{ii.51}
\l_1=\frac{\l_0}{4(B_1+B_2)}\left[(B_1+B_2)(\l_0A_1-2)+\l_0A_1B_1(\l_0B_2-2)\right].
\end{equation}
It is clear that \eqref{ii.51} can reach both positive and negative values depending on the values of the
several parameters  $A_1$, $A_2$, $B_1$ and $B_2$. For instance, if
\begin{equation}
\label{ii.52}
2A_1<A_2,\quad 2B_2<B_1,\quad \frac{2}{3}< \frac{B_2}{A_1}<\frac{3}{2},
\end{equation}
then,
\[
\l_0^2 A_1^2= \frac{(A_1+A_2)(B_1+B_2)}{A_2B_1B_2} A_1 =
\left(\frac{A_1}{A_2}+1\right) \left( 1+ \frac{B_2}{B_1}\right) \frac{A_1}{B_2}
<\left( \frac{3}{2}\right)^3 <4.
\]
Similarly,
\[
 \l_0^2\,B_2^2 =  \frac{(A_1+A_2)(B_1+B_2)}{A_1A_2B_1} B_2= \left(\frac{A_1}{A_2}+1\right) \left( 1+ \frac{B_2}{B_1}\right) \frac{B_2}{A_1} <\left( \frac{3}{2}\right)^3 <4.
\]
Since the estimates \eqref{ii.52} are satisfied, for example, if $B_2=A_1$, $B_1=A_2$ and $2B_2<B_1$,
it becomes apparent that \eqref{ii.52} holds for wide open ranges of values of the several parameters
involved in the setting of \eqref{ii.1}.
\par
Now, assume \eqref{ii.42}, instead of \eqref{ii.41}. Then, the $T$-periodic coexistence states are given by the zeros of the map $\Psi(\l,x)$ defined in \eqref{ii.48}. In this case, the linearization of
$\Psi(\l,x)$ at $(\l,1)$ is given by
\[
   \mathfrak{M}(\l):=\frac{\partial \Psi}{\partial x}(\l,1)=B_1+B_2-\lambda^2\frac{A_1A_2B_1B_2}{A_1+A_2}.
\]
Thus, setting
\[
\mathfrak{M}_0:=\mathfrak{M}(\l_0)=0,\qquad \mathfrak{M}_1:=\frac{d \mathfrak{M}}{d\l}(\l_0),
\]
and adapting the argument given above, it is apparent that the transversality condition
\[
\mathfrak{M}_1(N[\mathfrak{M}_0])\oplus R[\mathfrak{M}_0]=\R
\]
holds true. Moreover, also
$$
        \frac{\partial\Psi}{\partial x}(\lambda_0,1)=0=\frac{\partial\Psi}{\partial \lambda}(\lambda_0,1).
$$
Hence, to find out the bifurcation direction, we must proceed as in the previous case.
A rather straightforward, but tedious, calculation shows that
\[
\frac{\partial^2\Psi}{\partial x^2}(\lambda_0,1)=\l_0^2 \frac{A_1A_2B_1B_2}{(A_1+A_2)^2}\left[(A_1+A_2)(-\l_0B_1-2)+\l_0A_1B_1(\l_0A_2-2)\right],
\]
\[
\frac{\partial^2\Psi}{\partial x\partial\lambda}(\lambda_0,1)=-2\lambda_0\frac{A_1A_2B_1B_2}{A_1+A_2},\qquad \frac{\partial^2\Psi}{\partial\lambda^2}(\lambda_0,1)=0.
\]
Therefore,
\[
\l_1=\frac{\l_0}{4(A_1+A_2)}\left[-(B_1+B_2)(\l_0A_1+2)+\l_0A_1B_1(\l_0B_2-2)\right],
\]
which can reach negative values also in this case. Indeed, if
\[
  B_2 < A_1 <B_1,\quad A_1<A_2,
\]
then
\[
\l_0^2\,B_2^2=\frac{(A_1+A_2)(B_1+B_2)}{A_1A_2B_1}B_2 <4
\]
and consequently $\l_1<0$. This concludes the proof.
\end{proof}

Note that Theorem \ref{th2.7} generalizes Theorems \ref{th2.1} and
\ref{th2.2}. Indeed, if $A_1=A_2$ and $B_1=B_2$, then
$\varphi(x)=\Phi(x)$ and $\psi(x)=\Psi(x)$. Thus,
\[
  \l_0= \frac{2}{\sqrt{A_1B_1}},
\]
which provides us with Theorems \ref{th2.1} and \ref{th2.2}. Moreover, by Lemma \ref{le2.2}, Theorem \ref{th2.7} holds also for the cases
\begin{equation}
    \label{ii.53}
    \supp\a_i\subseteq[t_0^i,t_1^i]\;\;\hbox{and}\;\;\supp\b_j\subseteq[t_2^j,t_3^j],
\end{equation}
and
\begin{equation}
    \label{ii.54}
    \supp\b_i\subseteq[t_0^i,t_1^i]\;\;\hbox{and}\;\;\supp\a_j\subseteq[t_2^j,t_3^j],
\end{equation}
with $i\in\{1,2,3\}$,
$j\in\{1,2\}$ and for some
\[
0\leq t_0^1<t_1^1\leq t_2^1<t_3^1\leq t_0^2<t_1^2\leq t_2^2<t_3^2\leq t_0^3<t_1^3\leq T.
\]
If \eqref{ii.53} holds, then
\[
\l_0=\sqrt{\frac{(B_1+B_2+B_3)(A_1+A_2)}{B_2(B_1+B_3)A_1A_2}},
\]
while
\[
\l_0=\sqrt{\frac{(A_1+A_2+A_3)(B_1+B_2)}{A_2(A_1+A_3)B_1B_2}}
\]
if \eqref{ii.54} holds.

\section{Chaotic dynamics}\label{sec3}

\noindent In this section we consider again the non-autonomous
Volterra predator-prey model
\begin{equation}
\label{iii.1}
\left \{
\begin{array}{ll}
u'=\alpha(t)u(1-v),\\ v'=\beta(t)v(-1+u),
\end{array}
\right.
\end{equation}
where $\alpha\gneq 0$, $\beta\gneq 0$ are $T$-periodic continuous
functions for some $T>0$, with the aim to prove the presence of
chaotic-like dynamics.
\par
As in Section \ref{sec2},  attention is focused on the
\emph{coexistence states}, namely the component-wise positive
solutions of the system \eqref{iii.1}. As already explained in
Section \ref{sec1} (cf. \eqref{i.2}), in this kind of systems it
is natural to perform the change of variables
\[
 x=\log u, \qquad y=\log v,
\]
which moves the equilibrium point $(1,1)$ to the origin $(0,0)$ of the phase-plane.
In the new variables $x$ and $y$,
the model \eqref{iii.1} turns into the next equivalent planar Hamiltonian system
\begin{equation}
\label{iii.2}
\left \{
\begin{array}{ll}
x'=-\alpha(t)(e^y-1),\\ y'=\beta(t)(e^x-1).
\end{array}
\right.
\end{equation}
Thus, in this section, instead of looking for coexistence states of \eqref{iii.1},
we will simply look for solutions of \eqref{iii.2}.
\par
The existence of complex dynamics for prey-predator equations has been studied
since the Eighties, mostly from a numerical point of view
(see Takeuchi and Adachi \cite{TaAd-1984}).
Evidence of chaos has been detected in numerical simulations for
three-dimensional (or higher-dimensional) autonomous systems,
for instance for the interaction of one predator with two preys or the case of a prey,
a predator and a top predator,  as well as introducing as a third variable a nutrient.
\par
For the two-dimensional case results have been obtained for discrete models of Holling type,
as, e.g.,  those of  Agiza et al. \cite{Ag-2009}, which are in line with the classical
works of May \cite{May} and Li and Yorke \cite{LiYo-1975},
when proving chaos for discrete single species logistic-type equations.
Other examples of chaos for two-dimensional systems have been numerically produced by adding some
delay effects in the equations, as in  Nakaoka, Saito and Takeuchi \cite{NST-2006}.
\par
But less results are available in the literature concerning chaotic-like solutions
to planar predator-prey systems with periodic coefficients
(see, for instance, Baek and Do \cite{BD-2009}, Broer et al. \cite{Br-2006},
Kuznetsov, Muratori and Rinaldi \cite{KMR-1992} and Volterra \cite{VSB-2001}).
Typical features of these models is to add logistic and/or Holling
growth effects on the prey population and assume that the intraspecific growth rate takes the form
$r(1+\varepsilon\sin(\omega t))$. The special choice of the periodic coefficient allows to study numerically
the bifurcation diagrams
to  give evidence of complex dynamics for some choices of the parameters.
\par
Up to the best of our knowledge, only very few results provide a complete analytic proof
of the presence of chaotic dynamics, without the need of numeric support, for the classical
Volterra predator-prey system with periodic coefficients.
In Pireddu and Zanolin \cite{PiZa-2008} the Volterra original
system with harvesting effects was considered and it was proved that
chaos may arise under special forms of a periodic harvesting;
these authors also considered  a case of intermittency in the predation
in \cite[pp.221-225]{PiZa-2013}.
Also in Ruiz-Herrera \cite{RH-2012}, a rigorous analysis of chaos for periodically perturbed planar systems,
was performed for a case in which the Volterra system switches periodically
to new system with logistic effects and no predation.
\par
It is the aim of this section to show a simple mechanism producing chaotic dynamics for
system \eqref{iii.2}, just assuming that the coefficient $\alpha(t)$
vanishes on some interval, regardless the length of the vanishing interval,
which might have dramatic consequences from the point
of view of the applications.
Actually, we will prove the following main result which holds for the
more general system
(cf. \eqref{i.2})
\begin{equation}
\label{iii.3}
\begin{cases}
x' = -\lambda\alpha(t)f(y),\\
y' = \lambda\beta(t)g(x),
\end{cases}
\end{equation}
where $f,g:{\mathbb R}\to {\mathbb R}$ are $C^1$-continuous functions with
$f(0)=g(0)=0,$ $f'(0), g'(0) >0$ and $f(s)s>0, g(s)s>0$ for $s\not=0.$
We also assume that at least one of the two functions
is bounded in a neighborhood of $+\infty$ or $-\infty.$ To be more specific
and just to fix a possible case, we will suppose that $f$ is bounded on $(-\infty,0].$
In the application to the predator-prey model we have
$$f(s) = g(s) = e^s-1.$$

The introduction of the parameter $\lambda >0$ in \eqref{iii.3}
is not relevant from the mathematical point of view. For us it is convenient in order to
make a comparison to the result about subharmonic solutions obtained
by the authors in \cite{LMZ-2020}.

\begin{theorem}\label{th3.1}
Assume that there exists  $T_0\in (0,T)$ such that:
\begin{itemize}
\item[$(c_1)\;$]
$\alpha \gneq 0$, $\beta \gneq 0$ on $[0,T_0]$ and there exists
$\hat{t}\in [0,T_0]$ with $\alpha(\hat{t})\beta(\hat{t}) > 0;$
\item[$(c_2)\;$]
$\alpha \equiv 0$ and $\beta \gneq 0$ on $[T_0,T].$
\end{itemize}
Then, for every $\ell\geq 2,$ there exists $\lambda^*=\lambda^*_{\ell}$ such that,
for each $\lambda >\lambda^*$, there exists a constant $K_{\lambda}$ such that,
if
\begin{equation}\label{iii.4}
\int_{T_0}^{T} \beta(t)\,dt > K_{\lambda},
\end{equation}
then, the Poincar\'{e} map associated with \eqref{iii.3} induces chaotic dynamics on
$\ell$-symbols on some compact subset, ${\mathcal Q}$, of  the first quadrant.
\end{theorem}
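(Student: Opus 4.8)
The plan is to read the $T$-periodic Poincar\'e map of \eqref{iii.3} as a composition of a \emph{twist} map and a \emph{shear} map, and then to produce the topological horseshoe of Definition \ref{def1.1} by a linked-twist argument, verified through the stretching-along-paths method of \cite{PaZa-2004b,PaZa-2004a,PPZ-2008}; that method is tailored to deliver the semiconjugacy to the Bernoulli shift together with the correspondence of periodic orbits required by Definition \ref{def1.1}. Since it suffices to work with \eqref{iii.3} and exhibit the chaotic set $\mathcal{Q}$ in the $(x,y)$-plane (translating back by $u=e^{x}$, $v=e^{y}$, it then lies in the open first quadrant), I would write $\mathcal{P}=\Psi\circ\Phi$, where $\Phi=\mathcal{P}_{[0,T_{0}]}$ and $\Psi=\mathcal{P}_{[T_{0},T]}$. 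By $(c_{2})$, on $[T_{0},T]$ the first equation of \eqref{iii.3} is $x'=0$, whence
\[
\Psi(x,y)=\bigl(x,\ y+\lambda\,\beta^{*}g(x)\bigr),\qquad \beta^{*}:=\int_{T_{0}}^{T}\beta(t)\,dt>0,
\]
a shear fixing the vertical direction, displacing a point upward or downward according to the sign of $x$, strictly monotone in $x$, and with displacement as large as desired, uniformly on compact subsets of $\{x\neq 0\}$, once \eqref{iii.4} is imposed with $K_{\lambda}$ large.

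For $\Phi$ I would establish a rotation lemma on $[0,T_{0}]$. Along any nontrivial solution the angle $\theta=\arg(x+iy)$ obeys
\[
\theta'=\frac{\lambda\bigl[\beta(t)\,x\,g(x)+\alpha(t)\,y\,f(y)\bigr]}{x^{2}+y^{2}}\ \ge\ 0,
\]
so $\Phi$ rotates counterclockwise around the origin, with a rotation number ${\rm rot}([0,T_{0}],z_{0})$ that is strictly positive off the origin by $(c_{1})$. Using $f'(0),g'(0)>0$ one shows that on a small circle $|z_{0}|=r_{1}$ the rotation number is at least $c\lambda$ for some $c>0$, while the assumption that $f$ is bounded on $(-\infty,0]$ forces large-amplitude orbits to spend arbitrarily long time in a half-plane (the analytic counterpart of the fact that the classical Volterra period function $\tau(k)$ increases to $+\infty$), so that on a large circle $|z_{0}|=r_{2}$ the rotation number is as small as we please. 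Hence, given $\ell\ge 2$, there is $\lambda^{*}=\lambda^{*}_{\ell}$ so that for every $\lambda>\lambda^{*}$ one can fix $0<r_{1}<r_{2}$ with
\[
{\rm rot}([0,T_{0}],z_{0})-{\rm rot}([0,T_{0}],z_{0}')\ \ge\ \ell+1 \qquad\text{for all }|z_{0}|=r_{1},\ |z_{0}'|=r_{2}.
\]
For stepwise-constant $\alpha,\beta$ on $[0,T_{0}]$ the map $\Phi$ is a finite composition of time-maps of autonomous Volterra-type systems, whose orbits are explicit level curves and whose period functions are monotone, so the estimates become completely explicit; this is why I would first prove Theorem \ref{th3.1} in that case and then pass to the general one by a continuity/approximation argument.

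Now the geometric core. I would fix a generalized rectangle $\mathcal{Q}$ — a homeomorphic copy of $[0,1]^{2}$ with a distinguished pair of opposite sides $\mathcal{Q}^{-}$ — cut out of the annulus $\{r_{1}\le|z|\le r_{2}\}$ by two arcs transverse to the flow, taking as $\mathcal{Q}^{-}$ the two ``radial-type'' sides. By the twist estimate, $\Phi$ carries any path in $\mathcal{Q}$ joining the two components of $\mathcal{Q}^{-}$ onto a path winding around the origin at least $\ell+1$ times, hence containing $\ell+1$ consecutive sub-arcs each running from near $|z|=r_{1}$ to near $|z|=r_{2}$ while sweeping once around the origin, so that each such sub-arc has a piece in $\{x>0\}$ and a piece in $\{x<0\}$, at bounded distance from $\{x=0\}$. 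Applying $\Psi$ stretches each of these sub-arcs enormously in the vertical direction — upward where $x>0$, downward where $x<0$ — and, provided $\beta^{*}>K_{\lambda}$ with $K_{\lambda}$ chosen in terms of $r_{1},r_{2}$ and the geometry of $\Phi(\mathcal{Q})$ (hence in terms of $\lambda$), the image $\mathcal{P}(\mathcal{Q})=\Psi(\Phi(\mathcal{Q}))$ crosses $\mathcal{Q}$, transversally to $\mathcal{Q}^{-}$, at least $\ell$ times. Selecting the $\ell$ preimage strips $\mathcal{H}_{0},\dots,\mathcal{H}_{\ell-1}\subset\mathcal{Q}$ of these crossings and invoking the stretching-along-paths lemma of \cite{PPZ-2008} (i.e. checking, for each $i$, that every path in $\mathcal{H}_{i}$ joining the two components of $\mathcal{Q}^{-}$ is carried by $\mathcal{P}$ onto a path containing a sub-path joining the two components of $\mathcal{Q}^{-}$) yields that $\mathcal{P}$ induces chaotic dynamics on $\ell$ symbols in $\mathcal{Q}$, with the periodicity clause of Definition \ref{def1.1}.

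The main obstacle is precisely this last step: turning the heuristic ``the vertical shear pulls the right-hand arms of the twisted spiral far up and the left-hand arms far down, so the image crosses the original rectangle $\ell$ times'' into rigorous estimates — locating $\mathcal{Q}$ and its boundary arcs so that $\Phi(\mathcal{Q})$ has the required spiral geometry while $\mathcal{Q}$ stays away both from the origin (where $\Phi$ degenerates) and from the region where the superlinearity of $g$ at $+\infty$ destroys uniform bounds, and then quantifying how large $\beta^{*}$, hence $K_{\lambda}$, must be so that exactly $\ell$ disjoint crossing strips appear, with the choices made consistently in the order: first $\lambda^{*}$, then $(r_{1},r_{2})$ and $\mathcal{Q}$, then $K_{\lambda}$. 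The rotation lemma of the second step is the other genuinely technical ingredient, and is exactly where the hypothesis that $f$ is bounded on a half-line is used.
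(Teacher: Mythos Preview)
Your overall architecture matches the paper: split the Poincar\'e map as shear-after-twist, obtain a large/small rotation dichotomy on $[0,T_0]$ from $f'(0),g'(0)>0$ near the origin and from the boundedness of $f$ on $(-\infty,0]$ far out, and feed this into the stretching-along-paths machinery of \cite{PaZa-2004b,PaZa-2004a,PPZ-2008}. The rotation estimates you outline are precisely those the paper imports from \cite{LMZ-2020}.

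Where your sketch goes wrong is the geometry of the crossing step. Your assertion that the spiral $\Phi(\gamma)$ decomposes into sub-arcs ``each running from near $|z|=r_1$ to near $|z|=r_2$ while sweeping once around the origin'' is incorrect: if $\gamma$ joins the inner and outer circles, then along $\Phi(\gamma)$ the angular coordinate $\theta(T_0,\gamma(s))$ decreases continuously from something above $2\pi\ell$ to something below $3\pi/2$, but each full angular turn of the spiral occupies only a \emph{small} range of radii, not all of $[r_1,r_2]$. As a consequence your mechanism for producing $\ell$ crossings of $\mathcal{Q}$ under $\Psi\circ\Phi$ --- pulling the $x>0$ parts up and the $x<0$ parts down --- does not yield sub-paths that join the two components of $\mathcal{Q}^-$; indeed, with $\mathcal{Q}^-$ the two ``radial-type'' sides (your choice), a path in $\mathcal{Q}$ joining them can sit at constant radius and then does not spiral at all under $\Phi$.

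The paper fixes both issues with one device: it introduces a \emph{second} oriented rectangle $\mathcal{R}$ in the fourth quadrant whose distinguished sides $\mathcal{R}^-$ are the two vertical segments $\{x=0\}$ and $\{x=r_\lambda\}$, i.e.\ an orientation \emph{orthogonal} to that of $\mathcal{Q}$ (whose $\mathcal{Q}^-$ are the circular arcs $C_{r_0}$ and $C_{R_0}$). One then checks separately that $\Phi_1:\widehat{\mathcal{Q}}\sap^\ell\widehat{\mathcal{R}}$ --- each time $\theta(T_0,\gamma(s))$ sweeps through $[3\pi/2+2j\pi,\,2\pi+2j\pi]$ the image sub-arc runs across $\mathcal{R}$ from $x=r_\lambda$ to $x=0$ --- and that $\Phi_2:\widehat{\mathcal{R}}\sap\widehat{\mathcal{Q}}$, since the shear fixes $\{x=0\}$ and, once $\beta^*>K_\lambda$, pushes $\{x=r_\lambda\}$ above $C_{R_0}$. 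A composition lemma (Lemma~\ref{le3.1}, from \cite{PPZ-2008}) then gives chaotic dynamics on $\ell$ symbols in $\mathcal{Q}$. The switch of orientation between $\mathcal{Q}^-$ and $\mathcal{R}^-$ is what makes both verifications one-line; bypassing $\mathcal{R}$ and attempting $\mathcal{P}:\widehat{\mathcal{Q}}\sap^\ell\widehat{\mathcal{Q}}$ directly is exactly the step your proposal leaves unsubstantiated.
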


The proof of Theorem \ref{th3.1} is postponed to a next subsection. From the proof,
it will become apparent  how to determine the constants $\lambda^*$ and $K_{\lambda}.$
The assumption that $f$ and $g$ are $\mc{C}^1$ can be weakened,
by assuming, as in \cite{LMZ-2020}, that $f, g$  are locally Lipschitz with
\begin{equation}\label{iii.5}
0 < \liminf_{s\to 0}\frac{f(s)}{s} \leq \limsup_{s\to 0}\frac{f(s)}{s}  < +\infty,\quad
0 < \liminf_{s\to 0}\frac{g(s)}{s} \leq \limsup_{s\to 0}\frac{g(s)}{s} < +\infty.
\end{equation}
Up to the best of our  knowledge, this is
the most general result available for complex dynamics to the predator-prey equations with
periodic coefficients.  In fact, the previous theorems of Pireddu and Zanolin
\cite{PiZa-2008,PiZa-2013} and Ruiz-Herrera \cite{RH-2012} required very specific
structural assumptions on the coefficients,
which were assumed to be stepwise, in order to transform Volterra equation to a switched system.
Nevertheless,  we will also present a less general version of Theorem \ref{th3.1} where
$\beta=\text{constant} >0,$  $\alpha$ is a piecewise  constant function vanishing
on $[T_0,T]$, according to $(c_2)$,  and $f(s)=g(s)=e^s -1$.
This more elementary and special case is introduced in the following subsection in order to better
explain the geometry and the dynamics associated to our system.
Actually, for expository reasons, this section has been split into three parts.
In the first subsection, we perform a detailed analysis of the equation with a stepwise coefficient.
Then, we show how the the same geometrical ideas
can be adapted to prove Theorem \ref{th3.1}.
Finally, in the last subsection, we will recall some of  the main features
of the Smale's horseshoe, adapted to our situation here,
in order to discuss a possible further improvement of our results from
a numerical point of view.

\subsection{The simplest model of type \eqref{iii.2} with chaotic dynamics}
\label{sec3.1}

\noindent In this section  we consider a special case of Theorem
\ref{th3.1} that already exhibits all the significant geometrical
features of the main result. Precisely, for a given  $T_0\in
(0,T)$, we will consider the $T$-periodic functions $\a(t)$ and
$\b(t)$ defined by
\begin{equation}
\label{iii.6}
\b(t):= \left\{
\begin{array}{lll}
\b_0>0&\;\;\hbox{if}\; t\in[0,T_0),\\[1ex]
\b_1>0&\;\;\hbox{if}\; t\in [T_0,T),
\end{array}
\right. \quad \hbox{and}\quad \a(t):= \left\{
\begin{array}{lll}
\a>0&\;\;\hbox{if}\; t\in[0,T_0),\\[1ex]
0&\;\;\hbox{if}\; t\in [T_0,T),
\end{array}
\right.
\end{equation}
where the positive constants $\a$ and $\b_0,\b_1$ and the exact
values of $T_0$ and $T$ are going to be made precise later.
Also, we will set $T_1:=T-T_0$. Thus, the dynamics of \eqref{iii.2}
on each of the intervals $[0,T_0]$ and $[T_0,T]$ are those of the
associated autonomous Volterra predator-prey systems in the
intervals $[0,T_0]$ and $[0,T_1],$ respectively.
\par
Although the function  $\a(t)$, and
possibly $\b(t)$, has a jump at $t=T_0$, for any given $z:=(x_0,y_0)\in\R^2$,
the Poincar\'{e} map associated to the system \eqref{iii.2} in the
interval $[0,T]$ is well-defined as
\begin{align*}
\Phi:\R^2&\rightarrow\R^2\\ z\;&\mapsto \Phi(z):=(x(T;z),y(T;z)),
\end{align*}
where $(x(t;z),y(t;z))$ stands for the unique solution of \eqref{iii.2} such that
$(u(0;z),y(0;z))=z$, and it is a diffeomorphism. Under the assumption \eqref{iii.6},
the action of system \eqref{iii.2} can be regarded as a composition of the actions of the systems
\begin{equation}
\label{iii.7}
\hbox{(I)}\quad \left \{
\begin{array}{ll}
x'=\a(1-e^y)\\
y'=-\b_0(1-e^x)
\end{array}
\right.\hspace{2cm}
\hbox{(II)} \quad \left \{
\begin{array}{ll}
x'=0\\
y'=-\b_1(1-e^x)
\end{array}
\right.
\end{equation}
on each of the intervals $[0,T_0]$ and $[T_0,T]$, respectively.
In this manner, system \eqref{iii.2} turns out to be a \textit{switched system} with
a $T$-periodic \textit{switching signal} and with \eqref{iii.7}-(I) and
\eqref{iii.7}-(II) as \textit{active subsystems}, according to the terminology adopted
by Liberzon \cite{Li-2003}.  In other words,
the Poincar\'{e} map $\Phi$ is the composition of the two Poincar\'{e} maps associated to each of these systems,
\[
\Phi:=\Phi_{T_1}\circ\Phi_{T_0},
\]
where $\Phi_{T_0}$ and $\Phi_{T_1}$  stand for the Poincar\'{e}
maps associated to the first and second systems of \eqref{iii.7}
on the intervals $[0,T_0]$ and $[0,T_1]$, respectively.
\par
Subsequently, we denote by $\t(t,z)$ the angular polar
coordinate at time $t\geq0$ of the solution $(x(t;z),y(t;z))$ for system \eqref{iii.7}-(I).
Then, for any given $\varrho>0$, the rotation number of the solution
in the interval $[0,\varrho]$ is defined through
\[
{\rm rot}([0,\varrho],z):=\frac{\t(\varrho,z)-\t(0,z)}{2\pi}.
\]
It is an algebraic counter, modulo $2\pi$, of the winding number
of the solution around the origin. It can be equivalently
expressed using \eqref{i.13} on the first subsystem.

\bigskip

\begin{center}
\textbf{Dynamics of \eqref{iii.2} in the interval $[0,T_0]$}
\end{center}

\noindent
We begin by analyzing the dynamics of \eqref{iii.2} on $[0,T_0]$
under the action of $\Phi_{T_0}$, i.e.,  the dynamics of \eqref{iii.7}-(I).
It is folklore that the phase-portrait of the (autonomous) system is a global nonlinear center
around the origin, i.e., every solution different from the equilibrium $(0,0)$
is periodic and determines a closed curve around the origin;
the \emph{first integral}, or \emph{energy function}, of the system being
\begin{equation}
\label{iii.8}
\mc{E}(x,y)=\a(e^y-y)+\b_0(e^x-x).
\end{equation}
Thus, setting
\[
\o:=\mc{E}(0,0)=\a+\b_0=\min_{\R ^2}\mc{E},
\]
for every $\ell>\o$, the corresponding level line of the first integral,
$\Gamma(\ell)=\mc{E}^{-1}(\ell)$, is a closed orbit of a periodic solution.
Moreover, by simply having a glance at the system,
it is easily realized that the solutions run counterclockwise
around the origin.
\par
Subsequently, for every $\ell>\o$, we will denote by $\tau(\ell)$
the  (minimal) period of the orbit $\Gamma(\ell)$. Thanks to a
result of Waldvogel \cite{Wa-1986}, the fundamental period map
$\tau\,:(\o,+\infty)\rightarrow\R$ is increasing and it satisfies
\[
\lim_{\ell\downarrow \o}\tau(\ell)=\frac{2\pi}{\sqrt{\a
\b_0}},\qquad \lim\limits_{\ell\uparrow
+\infty}\tau(\ell)=+\infty.
\]
For the rest of this section, we fix $\ell_1>\o$ and choose
\begin{equation}
\label{iii.9}
T_0:=2\tau(\ell_1).
\end{equation}
If a solution of the system \eqref{iii.7}-(I)  crosses entirely the third
quadrant, then, there exists an interval $[t_0,t_1]\subseteq[0,T_0]$, with  $x(t_0)<0$, $y(t_0)=0$,
$x(t_1)=0$ and $y(t_1)<0$, such that $x(t)<0$ and  $y(t)<0$ for all $t\in (t_0,t_1)$.
Thus, for every  $t\in[t_0,t_1]$, we have that
\begin{align*}
|y(t)| & =|\int_{t_0}^{t}\b_0(e^{x(s)}-1)\,ds|\leq
\int_{t_0}^{t_1}\b_0\,ds\leq\b_0 T_0 =: M,\\
|x(t)| & =|\int_{t}^{t_1}\a(1-e^{y(s)})\,ds|\leq \int_{t_0}^{t_1}\a\, ds\leq\a T_0=:N.
\end{align*}
Hence, if there exists a $\tilde{t}\in[0,T_0]$ such that
\[
x(\tilde t)<0,\quad y(\tilde t)< 0, \quad
x^2(\tilde{t})+y^2(\tilde{t})>M^2+N^2,
\]
then the solution cannot cross entirely the third quadrant in $[0,T_0]$, though, as all the solutions
are periodic around the origin, all must cross the third quadrant in a sufficiently large time.
Therefore, given $(x_2,y_2)$ such that
$$
   x_2<0,\quad y_2<0,\quad x_2^2+y_2^2>M^2+N^2
$$
and setting
$\ell_2:=\mc{E}(x_2,y_2)$ (and, without loss of generality, with $\ell_2 >\ell_1$),
it becomes apparent that, if $z=(0,y_0)\in\Gamma(\ell_2)$ with $y_0>0$, then
\[
\t(t,z)\in[\pi/2,3\pi/2) \quad\hbox{for all} \;\; 0\leq t\leq T_0,
\]
because the orbit through $(x_2,y_2)$, $\G(\ell_2)$, cannot cross the entire third quadrant
in the time interval $[0,T_0]$. Consequently, by \eqref{iii.9}, we find that
\begin{equation}
\label{iii.10}
\left\{
\begin{array}{ll}
{\rm rot}([0,T_0],z)=2&\qquad\hbox{if}\; z\in\Gamma(\ell_1),\\[1ex]
{\rm rot}([0,t],z)<1/2 \;\; \forall\, t\in [0,T_0]&\qquad\hbox{if}\; z=(0,y_0)\in\Gamma(\ell_2),\,y_0>0.
\end{array}
\right.
\end{equation}
\par
Subsequently, in order to analyze the Poincar\'{e} map $\Phi_{T_0}$,
we will focus attention into the annular region,  $\mc{A}$, of the phase-plane
enclosed by the orbits  $\Gamma(\ell_1)$ and $\Gamma(\ell_2)$, where \eqref{iii.10} holds,
which has been represented in Figure \ref{fig-viii},  i.e.,
\[
\mc{A}:=\left\{(x,y)\in\R^2\,:\,\ell_1\leq \mc{E}(x,y)\leq \ell_2\right\}=
\bigcup_{\ell_1\leq\ell\leq\ell_2}\Gamma(\ell).
\]

\begin{figure}[h!]
\centering
\includegraphics[scale=0.45]{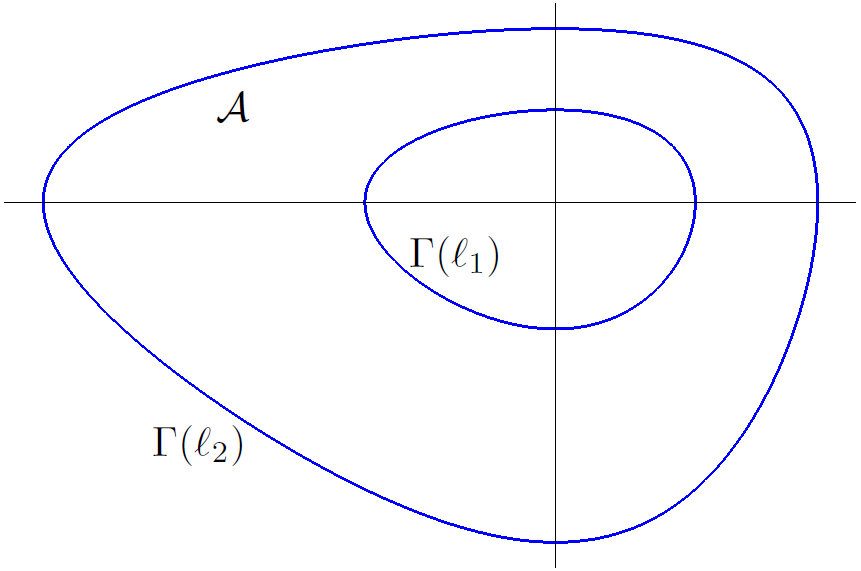}
\caption{The region $\mathcal{A}$ enclosed by curves
$\Gamma(\ell_1)$ and $\Gamma(\ell_2)$ with $\omega<\ell_1<\ell_2$.} \label{fig-viii}
\end{figure}

For graphical purposes, the aspect ratios in Figure \ref{fig-viii} have been slightly modified.
\par
By the dynamical properties  of  \eqref{iii.7}-(I), the Poincar\'{e} map $\Phi_{T_0}$ transforms the portion of $\mc{A}$ on the positive $y$-axis,  i.e., the segment
\begin{equation}
\label{iii.11} \s_0:=\mc{A}\cap
\{(0,y)\in\R^2\,:\;y>0\}=\left\{(0,y) \in\R \times (0,\infty)\;
:\;\ell_1\leq\mc{E}(0,y)\leq \ell_2\right\},
\end{equation}
into the spiraling line  plotted in Figure \ref{fig-ix}(a).
The unique $(0,y_1)$ such that $\mc{E}(0,y_1)=\ell_1$ remains invariant by
$\Phi_{T_0}$ because $T_0=2\tau(\ell_1)$. Thus, $(0,y_1)$ gives
two rounds around the orbit $\Gamma(\ell_1)$ as $t\in [0,T_0]$.
Since the period map $\tau(\ell)$ is increasing with respect to
$\ell$, the points $(0,y)\in\s_0$ with $y>y_1$  close to $y_1$
cannot complete two rounds around the origin, though close to get
it. The bigger is taken $y>y_1$, the bigger is the gap
$\tfrac{\pi}{2}+4\pi-\theta(T_0,(0,y))$, until $y$ approximates
$y_2$, the unique value of $y$ such that $\mc{E}(0,y_2)=\ell_2$,
where, according to the choice of $\Gamma(\ell_2)$, we already
know that $\theta(T_0,(0,y))<3\pi/2$. Similarly, defining
$(x_{+}(\ell_1),0)$ as the intersection of $\Gamma(\ell_1)$ with
the positive $x$-axis, Figure \ref{fig-ix}(b) shows a plot of the
parallel (vertical) segment
\begin{equation}
\label{iii.12} \s_{1}:=\mc{A}\cap (\{x_{+}(\ell_1)\}\times
[0,\infty)) =\{(x_{+}(\ell_1),y)\in\R\times
[0,\infty)\,:\;\ell_1\leq\mc{E}(x_{+}(\ell_1),y)\leq \ell_2\}.
\end{equation}
\par

\begin{figure}[h!]
\centering
\includegraphics[scale=0.44]{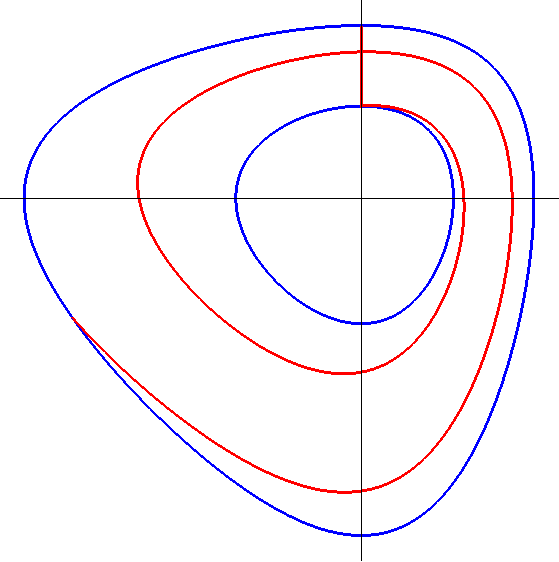}
\qquad
\includegraphics[scale=0.44]{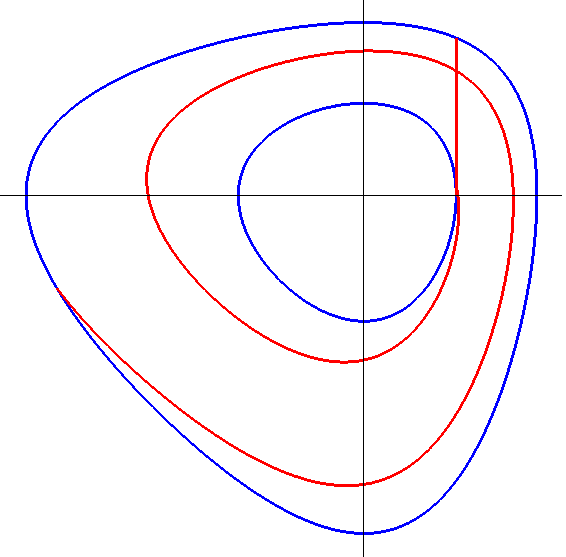}
\caption{The segments $\s_{i}$ and the curves $\Phi_{T_0}(\s_{i})$
for $i=0$  (left) and $i=1$ (right).} \label{fig-ix}
\end{figure}

The curves $\Phi_{T_0}(\s_i)$ look like sort of logarithmic
spirals with the angular polar coordinate increasing along their
trajectories. As illustrated by Figure \ref{fig-ix}(b),
$\Phi_{T_0}(\s_1)$ is a curve looking like $\Phi_{T_0}(\s_0)$.

Throughout the rest of this section, we consider the topological
square, $\mc{Q}$,  enclosed by the segments $\s_0$ and $\s_{1}$ in
$\mc{A}$, i.e.,
$$
\mc{Q}:= \{(x,y)\in\mc{A}\,:\,0\leq x\leq x_+(\ell_1),\, y>0\}.
$$
The plot of Figure \ref{fig-x} shows $\Phi_{T_0}(\mc{Q})$, which
is the spiral-like region enclosed in the annuls ${\mathcal A}$
and bounded by the curves $\Phi_{T_0}(\sigma_0)$ and
$\Phi_{T_0}(\sigma_1).$
Subsequently, we will also consider the topological square
$$
\mc{R}:=\{(x,y)\in\mc{A}\,:\,0\leq x\leq x_1,\, y<0\},
$$
which has been also represented in  Figure \ref{fig-x},  where $\Phi_{T_0}(\mc{Q})\cap\mc{R}$
consists of two smaller rectangular regions which have been named
as $\mc{R}_0$ and $\mc{R}_1$.

\begin{figure}[h!]
\centering
\includegraphics[scale=0.4]{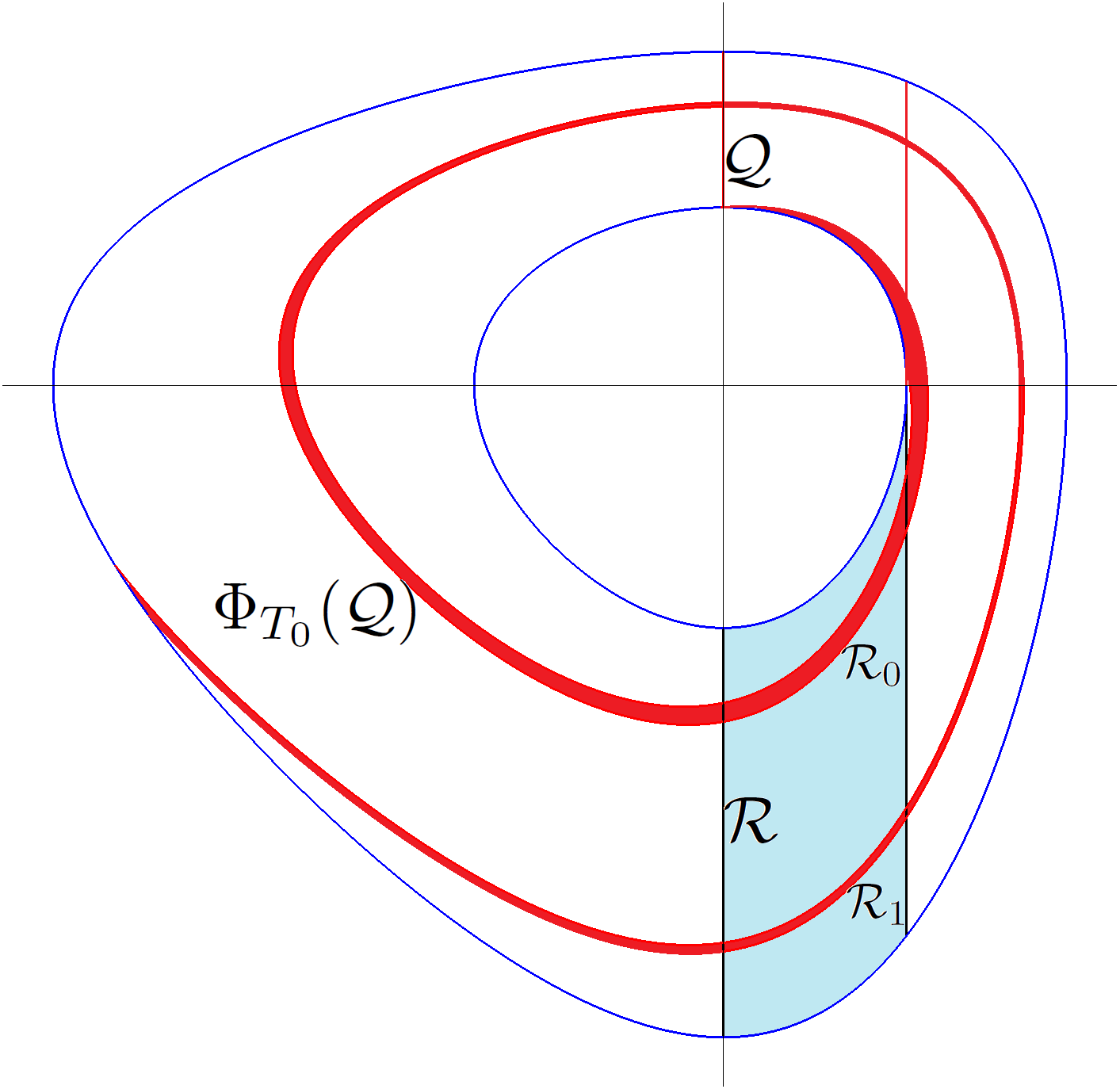}
\caption{The regions $\mc{Q}$, $\mc{R}$, $\Phi_{T_0}(\mc{Q})$, $\mc{R}_0$
and $\mc{R}_1$.} \label{fig-x}
\end{figure}

\begin{remark}\label{re3.1}
{\rm For convenience in the exposition, we have chosen $\ell_1$
and $T_0$ to satisfy \eqref{iii.10}. But one can
adjust the parameters to have $T_0\geq j\tau{\ell_1}$ for some integer $j\geq 2$, of course.
In this case, we should  modify the choice of $\ell_2$ in order to get
the second condition of \eqref{iii.10}. Now, we will have
${\rm rot}([0,T_0],z)\geq j$ if $z\in\Gamma(\ell_1)$ as first condition.
Indeed, the intersection of $\Phi_{T_0}(\mc{Q})$ with $\mc{R}$
consists of $j\geq 2$ rectangular regions.}
\end{remark}

\bigskip

\begin{center}
\textbf{Dynamics of \eqref{iii.2} in the interval $[T_0,T]\equiv
[0,T_1].$}
\end{center}

\noindent Throughout this paragraph we recall that $T_1:=T-T_0$ and
consider the Poincar\'{e} map $\Phi_{T_1}$. Choosing  $\a=0$ in
the interval $[T_0,T]$ our main goal in this section is to show
that, for sufficiently large $\b_1
T_1=\int_{T_0}^{T}\beta(t)\,dt>0$, the region $\mc{R}$ is mapped
across $\mc{Q}$ by the Poincar\'{e} map $\Phi_{T_1}.$ Actually we
have that $\Phi_{T_1}(\mc{R})$ intersects transversally $\mc{Q}$.
Such transversality entails a complex behavior reminiscent of
Smale's horseshoe.
\par
Since $\alpha(t)=0$ and $\beta(t)=\b_1$ for all $t\in[0,T_1]$ in
\eqref{iii.7}-(II), we have that
\begin{equation}
\label{iii.13}
x'(t)=0\;\;\hbox{and}\;\; x(t)=x_0 \;\;\hbox{for all}\;\; t\in[0,T_1].
\end{equation}
 Thus,
\begin{equation}
\label{iii.14} y(T_1)-y(0)=\int_{0}^{T_1}
\b_1(e^{x_0}-1)dt=\b_1(e^{x_0}-1)T_1.
\end{equation}
By \eqref{iii.13} and \eqref{iii.14}, we find that $\Phi_{T_1}(z)=z$ if $z=(0,y_0)\in\mc{R}$, and, hence,
\[
\Phi_{T_1}(\{(x,y)\in\mc{R}\,:\,x=0\})=\{(x,y)\in\mc{R}\,:\,x=0\}.
\]
In other words, the left side of $\mc{R}$ consists of fixed points of $\Phi_{T_1}$.

\begin{figure}[h!]
\centering
\includegraphics[scale=0.2]{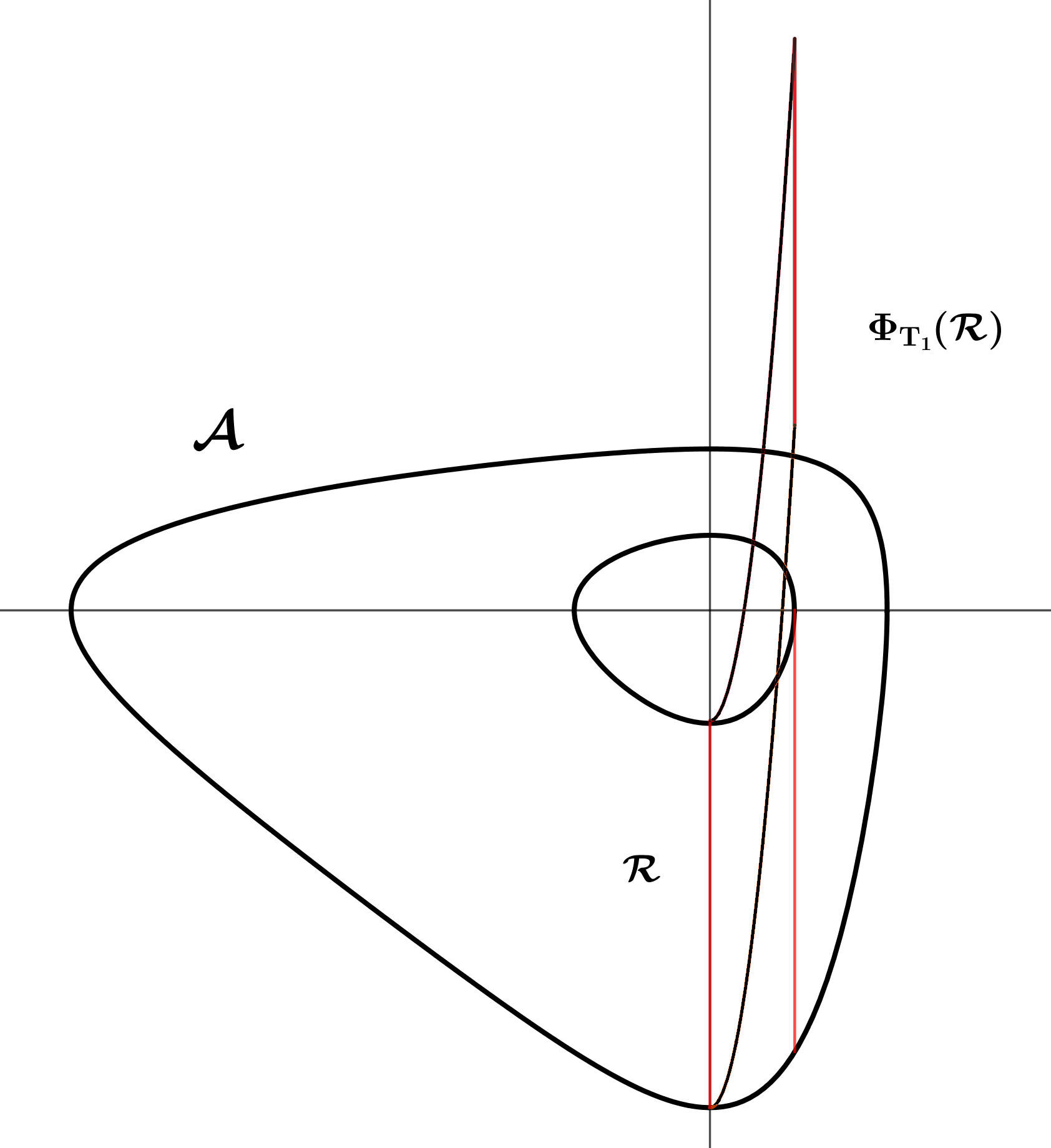}\qquad
\includegraphics[scale=0.35]{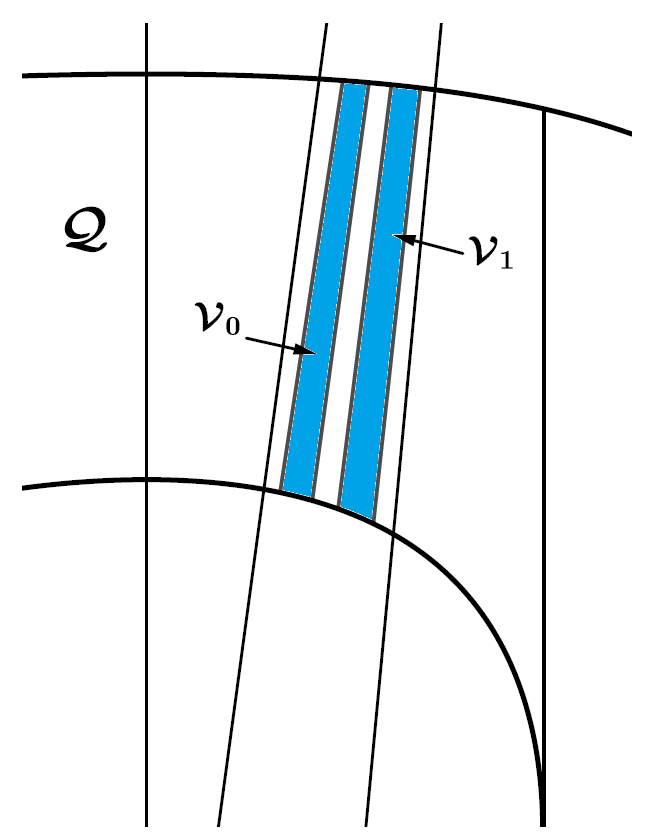}
\caption{The topological squares $\mc{R}$ and $\Phi_{T_1}(\mc{R})$ (left panel), as well as the squares
$\mc{V}_0:=\Phi_{T_1}(\mc{R}_0)\cap\mc{Q}$ and $\mc{V}_1:=\Phi_{T_1}(\mc{R}_1)\cap\mc{Q}$ (right panel). }
\label{fig-xi}
\end{figure}

On the other hand, by \eqref{iii.13}, for every
$z=(x_1,y)\in\mc{R}$, with $0 <x_1 \leq x_+(\ell_1),$ we have that
$x(t)=x_1$ for all $t\in [0,T_1]$, and hence
\[
y(T_1)-y(0)=\int_{0}^{T_1} \b_1(e^{x_1}-1)dt=\b_1(e^{x_1}-1)T_1.
\]
Consequently, if we denote by $y_2^-<0$ and $y_2^+>0$ the unique
values of $y$ such that $\mc{E}(0,y_2^\pm)=\ell_2$, then, setting
$M^*=M^*(\ell_2):=\max\{|y_2^-|,|y_2^+|\}$ and choosing $T_1$
satisfying
\begin{equation} \label{iii.15}
    T_1>\frac{2M^*}{\b_1(e^{x_+}-1)}, \quad \text{for }
    x_+:=x_+(\ell_1),
\end{equation}
it becomes apparent that $y(T_1)-y(0)>2M^*$. So,
\[
 \Phi_{T_1}(\{(x,y)\in\mc{R}\,:\,x=x_+\})\subsetneq\{x_+\}\times (y_2^+,\infty).
\]
Therefore,  for $i=1,2$, the topological rectangle
$\mc{V}_i:=\Phi_{T_1}(\mc{R}_i)\cap\mc{Q}$ crosses  $\mc{Q}$,
transversally, as represented in the second picture of Figure
\ref{fig-xi}. As \eqref{iii.15} holds for sufficiently large
$\b_1>0$, regardless the size of $T_1$, as a rather direct
consequence of the abstract theory of Papini and Zanolin
\cite{PaZa-2004b,PaZa-2004a}, it becomes apparent that, for every
$T_0\in (0,T)$, the problem \eqref{iii.1}, with the special choice
\eqref{iii.6}, exhibits complex dynamics for sufficiently large
$\b_1>0$. Actually, the geometry of the problem is very similar to
the one considered by  Pascoletti, Pireddu and Zanolin \cite[Figs.
6, 7, 8]{PPZ-2008} and Labouriau and Sovrano \cite[Def.
3.3]{LaSo-2020}, consisting of a twist map acting in an annular
region, composed with a shift map on a strip. The type of chaotic
dynamics which occurs is that stated in Definition \ref{def1.1}
with the semi-conjugation to the Bernoulli shift on two symbols.
Actually, we can produce a semi-conjugation with respect to a
larger set of $\ell$ symbols by suitably adapting the parameters
$(\alpha,\beta_0,\beta_1)$ as well as $T_0$ and $T_1$, as already
sketched in Remark \ref{re3.1}. This concludes the proof of
Theorem \ref{th3.1} for  stepwise constant coefficients.
\par
We omit the  technical details regarding the application of the results
from Papini and Zanolin \cite{PaZa-2004b,PaZa-2004a} for the special choice \eqref{iii.6}, since
we will  present this approach in a more thoroughly manner along the proof
of Theorem \ref{th3.1}. On the other hand, the case of stepwise constant coefficients
in \eqref{iii.6} and \eqref{iii.7}-(I)\&(II) suggests that, at least from a numerical
point of view, we are in a situation  where  a stronger result about
chaotic dynamics can be obtained, namely the \textit{conjugation} to the Bernoulli
automorphism, due to the presence of a Smale horseshoe.
Indeed, the geometry that we have described above  suggests a rather elementary mechanism
to generate complex dynamics, by
adopting the original methodology of Smale \cite{Sm-1965}, as
illustrated in the  Conley--Moser approach in  \cite[Ch. III]{Mo-1973}.
In the next subsection, we will prove Theorem \ref{th3.1} in its more general form, using the theory of
topological horseshoes. Then, we will end this paper  by discussing some
possible sharper results in the frame of the original horseshoe geometry,
by assuming \eqref{iii.1} with the special
choice of coefficients in \eqref{iii.6}.

\subsection{Proof of Theorem \ref{th3.1}}\label{sec3.2}

\noindent As already discussed recalled in Section \ref{sec1}, in
order to prove the presence of chaotic dynamics according to
Definition \ref{def1.1} there are various different approaches of
topological nature, though these results provide a weaker form of
chaos with respect to classical Smale's horseshoe, because they
guarantee the semi-conjugation to the Bernoulli shift
automorphism, instead of the conjugation.  However, the approaches
based on the the so-called theory of topological horseshoes,
guarantee a broader range of applications. Here we briefly recall
some basic facts from the ``stretching along the path method'', by
specializing our presentation to planar homeomorphisms.
\par
Let $\Phi: {\mathbb R}^2\to {\mathbb R}^2$ be a  planar
homeomorphism  and let ${\mathcal M}$ be a compact set of the
plane which is homeomorphic to the unit square. We select two
disjoint compact arcs on the boundary of ${\mathcal M}$ that we
conventionally denote ${\mathcal M}^-_{\rm left}$ and ${\mathcal M}^-_{\rm right}$
and call the left and right sides of ${\mathcal M}$.
Then, setting ${\mathcal M}^-:= {\mathcal M}^-_{\rm left}\cup
{\mathcal M}^-_{\rm right}$, the pair $\widehat{\mathcal M}:=({\mathcal M},{\mathcal M}^-)$
is called an oriented
rectangle. Given two oriented rectangles $\widehat{\mathcal M}$
and $\widehat{\mathcal N}$ and a compact set ${\mathcal H}\subset
{\mathcal M},$ we write
$$({\mathcal H},\Phi):
\widehat{\mathcal M}
\sap
\widehat{\mathcal N}$$
if the following property holds:
\begin{itemize}
\item[] for every path $\gamma_0: [0,1]\to {\mathcal M}$ with
$\gamma(0)$ and $\gamma(1)$ belonging to different components of ${\mathcal M}^-$,
there exists a sub-path $\gamma_1:=\gamma_0|_{[s_0,s_1]}$ such that
$\gamma_1(t)\in {\mathcal H}$ for all $t\in [s_0,s_1]$ and $\Phi(\gamma_1(t))\in {\mathcal N}$
with $\gamma(s_0)$ and $\gamma(s_1)$ belonging to different components of ${\mathcal N}^-$.
\end{itemize}
If ${\mathcal H}={\mathcal M}$, we just write $\Phi:\widehat{\mathcal M}
\sap \widehat{\mathcal N}$. Moreover, for any integer $\ell\geq 2$, we will use the notation
$$
\Phi:\widehat{\mathcal M} \sap^{\ell} \widehat{\mathcal N},
$$
if there are $\ell$  pairwise disjoint (nonempty) compact sets
${\mathcal H}_0,\dots, {\mathcal H}_{\ell-1}$ in ${\mathcal M}$, such that
$$
({\mathcal H_i},\Phi): \widehat{\mathcal M} \sap \widehat{\mathcal N} \quad\hbox{for all}\;\;
i=0,\dots, \ell-1.
$$

In the proof of Theorem \ref{th3.1} the following result,
adapted from Pascoletti, Pireddu and Zanolin \cite{PPZ-2008}) will be used.

\begin{lemma}\label{le3.1}
Assume that $\Phi=\Phi_2\circ\Phi_1$ and let $\widehat{Q}$ and $\widehat{R}$ be two oriented rectangles
such that
\begin{itemize}
\item[{\rm i)}]
$\Phi_1:\widehat{\mathcal Q} \sap^{\ell} \widehat{\mathcal R}$ for some $\ell\geq 2$,
\item[{\rm ii)}]
$\Phi_2:\widehat{\mathcal R} \sap \widehat{\mathcal Q}$.
\end{itemize}
Then, $\Phi$ induces chaotic dynamics on $\ell$ symbols in the set ${\mathcal Q}$.
\end{lemma}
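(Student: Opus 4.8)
The plan is to reduce the statement to the basic topological horseshoe theorem of Papini and Zanolin \cite{PaZa-2004b,PaZa-2004a} (see also Pascoletti, Pireddu and Zanolin \cite{PPZ-2008} and Medio, Pireddu and Zanolin \cite{MPZ-2009}) by first \emph{concatenating} the two stretching relations (i) and (ii) into a single stretching relation from the oriented rectangle $\widehat{\mathcal Q}$ to itself. First I would record the elementary composition property of the relation $\sap$: if $({\mathcal H},\Psi_1):\widehat{\mathcal X}\sap\widehat{\mathcal Y}$ and $({\mathcal H}',\Psi_2):\widehat{\mathcal Y}\sap\widehat{\mathcal Z}$ for continuous (here, homeomorphic) planar maps $\Psi_1,\Psi_2$, then $({\mathcal K},\Psi_2\circ\Psi_1):\widehat{\mathcal X}\sap\widehat{\mathcal Z}$ with ${\mathcal K}:={\mathcal H}\cap\Psi_1^{-1}({\mathcal H}')$. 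This is the standard path-chasing argument: given a path $\gamma_0$ in ${\mathcal X}$ joining the two components of ${\mathcal X}^-$, the relation $({\mathcal H},\Psi_1):\widehat{\mathcal X}\sap\widehat{\mathcal Y}$ furnishes a sub-path $\gamma_1$ of $\gamma_0$ with $\gamma_1\subset{\mathcal H}$ and $\Psi_1\circ\gamma_1$ a path in ${\mathcal Y}$ connecting the two sides of ${\mathcal Y}^-$; applying $({\mathcal H}',\Psi_2):\widehat{\mathcal Y}\sap\widehat{\mathcal Z}$ to $\Psi_1\circ\gamma_1$ produces a sub-path $\delta$ of it, so that $\gamma_2:=(\Psi_1|_{\gamma_1})^{-1}\circ\delta$ is a sub-path of $\gamma_1$ with $\Psi_1\circ\gamma_2=\delta\subset{\mathcal H}'$, hence $\gamma_2\subset{\mathcal H}\cap\Psi_1^{-1}({\mathcal H}')={\mathcal K}$, and $\Psi_2\circ\Psi_1\circ\gamma_2=\Psi_2\circ\delta\subset{\mathcal Z}$ crossing ${\mathcal Z}^-$; moreover ${\mathcal K}$ is compact, being the intersection of the compact set ${\mathcal H}$ with the closed set $\Psi_1^{-1}({\mathcal H}')$.

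Next I would apply this composition property with $\Psi_1=\Phi_1$, $\Psi_2=\Phi_2$, $\widehat{\mathcal X}=\widehat{\mathcal Z}=\widehat{\mathcal Q}$, $\widehat{\mathcal Y}=\widehat{\mathcal R}$, taking ${\mathcal H}={\mathcal H}_i$ to be one of the $\ell$ pairwise disjoint compact sets supplied by (i), and ${\mathcal H}'={\mathcal R}$ — for which $({\mathcal R},\Phi_2):\widehat{\mathcal R}\sap\widehat{\mathcal Q}$ is exactly hypothesis (ii). This yields $\ell$ pairwise disjoint compact sets ${\mathcal K}_i:={\mathcal H}_i\cap\Phi_1^{-1}({\mathcal R})\subseteq{\mathcal Q}$, $i=0,\dots,\ell-1$, with $({\mathcal K}_i,\Phi):\widehat{\mathcal Q}\sap\widehat{\mathcal Q}$ for each $i$; the ${\mathcal K}_i$ are pairwise disjoint because the ${\mathcal H}_i$ are, and clearly ${\mathcal K}_i\subseteq{\mathcal H}_i$. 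In the notation introduced just before the statement, this is precisely $\Phi:\widehat{\mathcal Q}\sap^{\ell}\widehat{\mathcal Q}$.

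At this point the conclusion is the fundamental theorem on topological horseshoes in the ``stretching along the paths'' setting: whenever a planar homeomorphism $\Phi$ satisfies $\Phi:\widehat{\mathcal Q}\sap^{\ell}\widehat{\mathcal Q}$ through pairwise disjoint compact sets ${\mathcal K}_0,\dots,{\mathcal K}_{\ell-1}\subseteq{\mathcal Q}$, then $\Phi$ induces chaotic dynamics on $\ell$ symbols in ${\mathcal Q}$ in the sense of Definition \ref{def1.1}; concretely, for every $(s_i)_{i\in\Z}\in\Sigma_\ell$ there exists $z\in{\mathcal Q}$ with $\Phi^i(z)\in{\mathcal K}_{s_i}\subseteq{\mathcal H}_{s_i}$ for all $i\in\Z$, and whenever $(s_i)_i$ is $n$-periodic one may choose $z$ so that $(\Phi^i(z))_i$ is $n$-periodic as well. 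I would simply invoke this result, citing Papini--Zanolin \cite{PaZa-2004b,PaZa-2004a} and Pascoletti--Pireddu--Zanolin \cite{PPZ-2008} (the periodic-point/semi-conjugation part being the one recalled after Definition \ref{def1.1}, cf.\ also \cite{MPZ-2009}), so that no new dynamical argument is needed. The only point requiring care is the bookkeeping in the path-chasing step — ensuring that the successive sub-paths are genuinely nested and that one recovers a sub-path of the \emph{original} path $\gamma_0$ which lands in ${\mathcal K}_i$ and whose $\Phi$-image crosses $\widehat{\mathcal Q}$ — together with the routine observation that $\Phi_1^{-1}({\mathcal R})$ is closed; both are immediate once $\Phi_1,\Phi_2$ are assumed to be homeomorphisms, as is the case for the Poincar\'e maps here.
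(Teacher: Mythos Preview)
Your proof is correct, and it is essentially the standard argument behind the result: the paper does not actually prove Lemma \ref{le3.1} but simply records it as ``adapted from Pascoletti, Pireddu and Zanolin \cite{PPZ-2008}'', so there is no ``paper's own proof'' to compare with beyond that citation. What you have written is precisely the argument one finds in that literature --- first the composition/transitivity property of the relation $\sap$ (path-chasing to produce the nested sub-path and the compact sets ${\mathcal K}_i={\mathcal H}_i\cap\Phi_1^{-1}({\mathcal R})$), then reduction to the self-stretching situation $\Phi:\widehat{\mathcal Q}\sap^{\ell}\widehat{\mathcal Q}$, and finally an appeal to the basic topological-horseshoe theorem of Papini--Zanolin. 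The bookkeeping you flag (nesting of sub-paths, compactness of ${\mathcal K}_i$, nonemptiness from the existence of at least one crossing path) is handled correctly.
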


We are  in position now to prove Theorem \ref{th3.1}, with
$\Phi_1$ and $\Phi_2$ the Poincar\'{e} maps associated to the system \eqref{iii.3}
in the intervals $[0,T_0]$ and $[T_0,T]$, respectively. Clearly,
$\Phi=\Phi_2\circ\Phi_1$ is the Poincar\'{e} map on the interval $[0,T]$
and its $n$-periodic points corresponds to the $nT$-periodic solutions to
the differential system.
\par
For  the sake of simplicity in the exposition, we restrict ourselves to the case of $\ell=2$
symbols, conventionally $\{0,1\}.$ The case of an arbitrary $\ell\geq 2$ can be easily
proved via a simple modification of our argument.
\par
As a first step, we focus our attention in the time-interval $[0,T_0]$
where the supports of $\alpha$ and $\beta$ intersect nontrivially on a set containing
a non-degenerate interval, $J$, where, without loss of generality, we can suppose that
$$\min_{t\in J}\{\a(t),\b(t)\}\geq \varsigma_0 >0.$$
In $[0,T_0]$ we are precisely in the same situation as the authors in
\cite[\S 2]{LMZ-2020}. Accordingly, we just recall some main facts from \cite{LMZ-2020} which are
needed for our proof and send the reader for the technical details to
the original article, if necessary.
For a fixed $\eta >0$ with $\min\{f'(0),g'(0)\} > \eta,$ we can find a (small) radius
$r_0>0$ such that
$$\theta(T_0,z_0) - \theta(0,z_0)\geq \lambda \eta \varsigma_0|J|
\quad \hbox{for all}\;\; z_0\;\; \hbox{with}\;\; \|z_0\|=r_0$$
(see \cite[Formula (13)]{LMZ-2020}). Here, as in  Section \ref{sec3.1}, we are denoting by  $\theta(t,z_0)$ the
angular coordinate associated with the solution of \eqref{iii.3} with $z_0\not=0$ as initial point.
Then, for
\begin{equation}\label{iii.16}
\lambda > \lambda^*:= \frac{7\pi}{2\eta \varsigma_0|J|},
\end{equation}
we have that
\begin{equation}\label{iii.17}
\theta(T_0,z_0)>4\pi \quad \text{if } \;\; 0\leq \theta(0,z_0)\leq \pi/2\;\; \text{with}\;\; \|z_0\|=r_0.
\end{equation}
On the other hand, following the same argument as in the previous steps \eqref{iii.9}-\eqref{iii.10}
(see also \cite[p. 2401]{LMZ-2020}), we can find a (large) radius $R_0>r_0$ such that the solutions
departing from the first quadrant outside the disc of radius $R_0$ cannot cross the third quadrant,  that
is
\begin{equation}\label{iii.18}
\theta(T_0,z_0)<\frac{3\pi}{2} \quad \text{if } \;\; 0\leq \theta(0,z_0)\leq \pi/2\;\; \text{with}\;\;  \|z_0\|=R_0.
\end{equation}
Suppose now that $\lambda > \lambda^*$ is fixed.
By the continuous dependence of the solutions from initial data (see also \cite[Pr. 1]{LMZ-2020})
there are two radii $r_{\lambda}$ and $R_{\lambda}$ with
$$0 < r_{\lambda} \leq r_{0} < R_{0} \leq R_{\lambda}$$
such that any solution $\zeta(t;z_0)$ of \eqref{iii.3}  with $z_0$ in the
first quadrant and $r_0\leq \|z_0\|\leq R_0$ satisfies $r_{\lambda}\leq \|\zeta(t;z_0)\|\leq R_{\lambda}$ for all $t\in[0,T_0]$.
\par
Subsequently, we introduce  the  sets
$${\mathcal Q}:= \{z=(x,y)\in{\mathbb R}^2: 0\leq x\leq r_0, \, y\geq 0, \, r_0\leq \|z\|\leq R_0\},$$
and
$${\mathcal Q}^-_{\rm left}:= {\mathcal Q}\cap C_{r_0}, \qquad
{\mathcal Q}^-_{\rm right}:= {\mathcal Q}\cap C_{R_0},$$
where $C_{\rho}$ denotes the circumference of center at the origin and radius $\rho>0$, and
$${\mathcal R}:= \{z=(x,y)\in{\mathbb R}^2: 0\leq x\leq r_{\lambda}, \, y\leq 0, \,
r_{\lambda}\leq \|z\|\leq R_{\lambda}\},$$
$${\mathcal R}^-_{\rm left}:= {\mathcal R}\cap \{(x,y):x=0\}, \qquad
{\mathcal R}^-_{\rm right}:= {\mathcal R}\cap \{(x,y):x=r_{\lambda}\}.$$
So that the oriented rectangles $\widehat{Q}$ and $\widehat{R}$ are defined, too.
It is obvious that both ${\mathcal Q}$ and ${\mathcal R}$ are homeomorphic to the unit square.
For instance, the map
$$(u,v)\mapsto \bigl(u r_0, (\varrho(v)^2 - u^2 r_0^2)^{1/2} \bigr),$$
with $\varrho(v)=r_0+v(R_0-r_0),$ provides a homeomorphism from the unit square $[0,1]^2$
onto ${\mathcal Q},$
mapping $v=0$ to ${\mathcal Q}^-_{\rm left}$ and $v=1$ to ${\mathcal Q}^-_{\rm right}$ and,
similarly,
the map $$(u,v)\mapsto \bigl(u r_{\lambda}, -(\varrho_{\lambda}(v)^2 - u^2 r_{\lambda}^2)^{1/2} \bigr),$$
with $\varrho_{\lambda}(v)=r_{\lambda}+v(R_{\lambda}-r_{\lambda}),$
provides a homeomorphism from the unit square onto ${\mathcal R},$
mapping $u=0$ to ${\mathcal R}^-_{\rm left}$ and $u=1$ to ${\mathcal R}^-_{\rm right}$.
Thus, the definition of oriented rectangles is well posed.
\par
We also introduce the pairwise disjoint compact (nonempty) subsets of ${\mathcal Q}$
$${\mathcal H}_{1-j}:=\{z\in {\mathcal Q}: \Phi_1(z)\in {\mathcal R},
\, 3\pi/2 + 2j\pi\leq \theta(T_0,z)\leq 2\pi+  2j\pi\} \quad \text{for }\, j=0,1.$$
By definition,
$${\mathcal H}_{0}\sqcup {\mathcal H}_{1}\subset {\mathcal Q}\cap \Phi_1^{-1}({\mathcal R}).$$
Figure \ref{fig-xii} shows a possible hierarchy of the sets $\mc{H}_0$ and $\mc{H}_1$ within
$\mc{Q}.$

\begin{figure}[h!]
\centering
\includegraphics[scale=0.25]{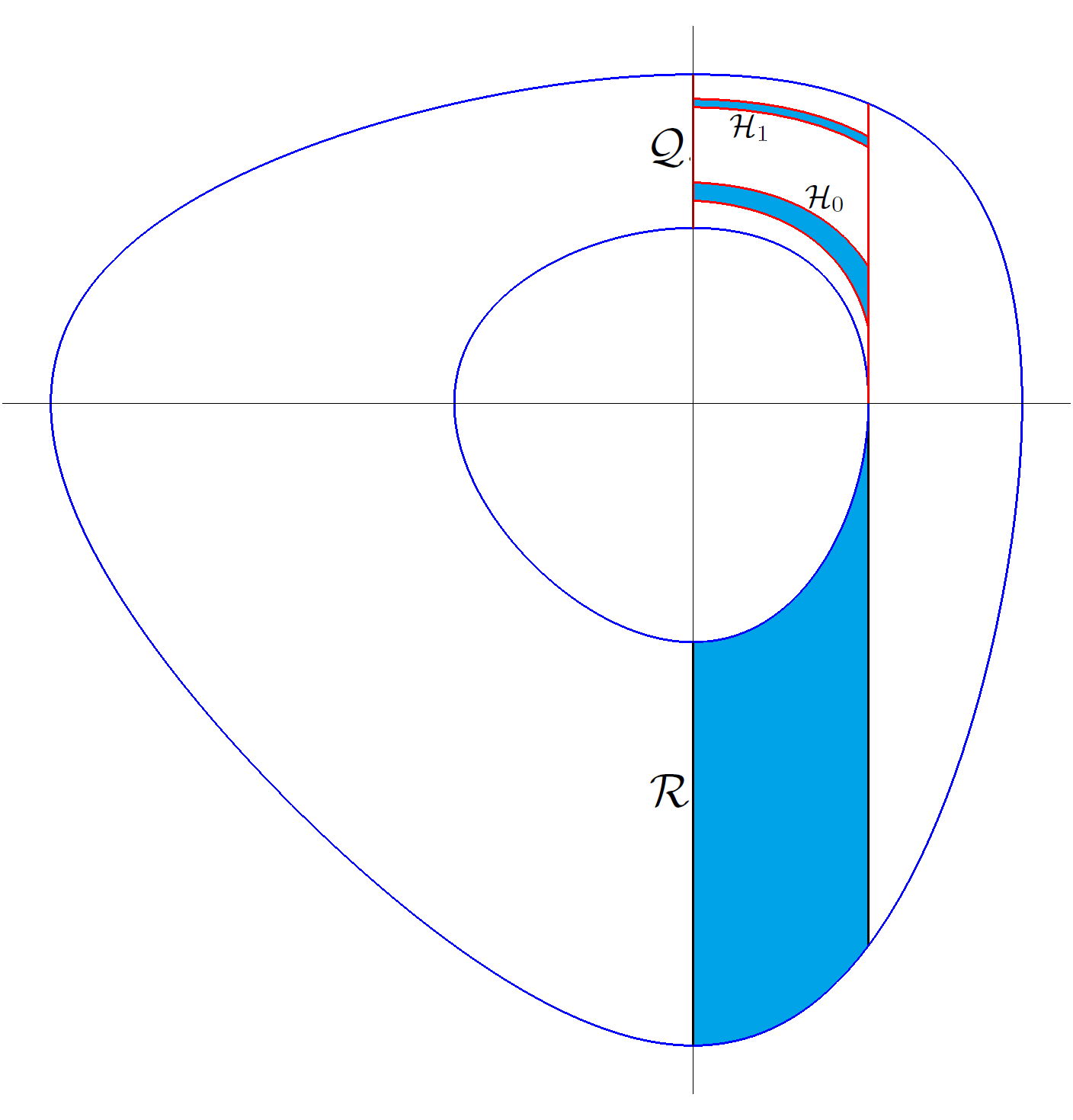}
\caption{The  sets $\mc{H}_0$ and $\mc{H}_1$, as well as
$\mc{Q}$ and $\mc{R}$, for the stepwise constant coefficients
studied in Section 3.1. Although in this special situation
${\mathcal H}_{0}\sqcup {\mathcal H}_{1}={\mathcal Q}\cap \Phi_1^{-1}({\mathcal R})$,
by the choice of $T_0$,
in general, ${\mathcal Q}\cap \Phi_1^{-1}({\mathcal R})$ might contain more components.}
\label{fig-xii}
\end{figure}

Let $\gamma:[0,1]\to {\mathcal Q}$ be a continuous map
such that $\gamma(0)\in {\mathcal Q}^-_{\rm left}$
and $\gamma(1)\in {\mathcal Q}^-_{\rm right}$ and let us consider the evolution of
$\gamma(s)$ through the first Poincar\'{e} map $\Phi_1.$
The angular coordinate
$\theta(T_0,\gamma(s))$ as a function of the parameter $s\in [0,1]$ is a continuous map
which, according to \eqref{iii.17} and \eqref{iii.18} satisfies
$$\theta(T_0,\gamma(0))> 4\pi, \quad \theta(T_0,\gamma(1))\leq 3\pi/2.$$
Hence, the path $s\mapsto \theta(T_0,\gamma(s))$ crosses at least twice the
portion of the fourth quadrant between $C_{r_{\lambda}}$ and $C_{R_{\lambda}}$ and
therefore it also crosses (at least twice) the region ${\mathcal R}$ from  $x=0$
to $x=r{\lambda}.$
\par
By an elementary continuity argument, there are two subintervals, $[s_0,s_1]$ and
$[s_3,s_4]$, with
$0 < s_0 <s_1 < s_3 < s_4 <1$,  such that
$$\theta(T_0,\gamma(s_0))=4\pi, \quad \theta(T_0,\gamma(s_1)) = \frac{3\pi}{2} + 2\pi,
\quad \frac{3\pi}{2} + 2\pi\leq \theta(T_0,\gamma(s))\leq 4\pi\; \forall\, s\in[s_0,s_1],$$
and
$$\theta(T_0,\gamma(s_3))=2\pi, \quad \theta(T_0,\gamma(s_4)) = \frac{3\pi}{2},
\quad \frac{3\pi}{2}\leq \theta(T_0,\gamma(s))\leq 2\pi\; \forall\, s\in[s_3,s_4].$$
Thus, the path $[s_0,s_1]\ni s\mapsto \theta(T_0,\gamma(s))$ crosses the fourth quadrant and
hence  there is a subinterval $[s'_0,s'_1]\subset [s_0,s_1]$ such that
$\Phi_1(\gamma(s))\in {\mathcal R}$ for all $s\in [s'_0,s'_1]$ with
$\Phi_1(\gamma(s'_0))\in {\mathcal R}^-_{\rm right}$ and
$\Phi_1(\gamma(s'_1))\in {\mathcal R}^-_{\rm left}.$ By the definition of ${\mathcal H}_0$,
we have that $\gamma(s)\in {\mathcal H}_0$ for all $s\in [s'_0,s'_1]$.
Therefore, we have proved that
$({\mathcal H}_0,\Phi_1): \widehat{Q} \sap \widehat{R}.$
In the same manner, we can find a subinterval $[s'_3,s'_4]\subset [s_3,s_4]$ such that
$\Phi_1(\gamma(s))\in {\mathcal R}$ for all $s\in [s'_3,s'_4],$ with
$\Phi_1(\gamma(s'_3))\in {\mathcal R}^-_{\rm right}$ and
$\Phi_1(\gamma(s'_4))\in {\mathcal R}^-_{\rm left}.$ Therefore, by the
definition of ${\mathcal H}_1$
we have that $\gamma(s)\in {\mathcal H}_1$ for all $s\in [s'_3,s'_4],$ thus proving that
$({\mathcal H}_1,\Phi_1): \widehat{Q} \sap \widehat{R}.$
This ends the proof of
Lemma \ref{le3.1}(i) for $\ell=2.$
\par
To have the result for an arbitrary $\ell\geq 2,$ we have just to modify the choice of $\lambda^*$ in
\eqref{iii.16} to $\lambda > \lambda^*:= {(4\ell -1)\pi}/{(2\eta \varsigma_0|J|)}$
and introduce corresponding subsets $\mc{H}_0,\dots, \mc{H}_{\ell-1}$ of
${\mathcal Q}\cap \Phi_1^{-1}({\mathcal R}).$
\par
Now, we consider the map $\Phi_2$ by studying the equation \eqref{iii.3}
in the interval $[T_0,T]$, where  $\alpha\equiv 0$ and hence
$$\Phi_2(x_0,y_0)= \bigl(x_0,y_0+ \lambda g(x_0)\int_{T_0}^{T}\beta(t)\,dt\bigr).$$
Thus, the dynamics is the same as that of \eqref{iii.2} in the interval $[T_0,T]$
considered in Section 3.1, modulo a minor change in the parameters involved.
It is clear that the points on ${\mathcal R}^-_{\rm left}$ remain stationary, while those
of ${\mathcal R}^-_{\rm right}$ move upward at the new position
$$y_0 + \lambda g(x_{\lambda})\int_{T_0}^{T}\beta(t)\,dt\geq -R_{\lambda}
+ \lambda g(x_{\lambda})\int_{T_0}^{T}\beta(t)\,dt.$$
Therefore, if
\begin{equation}\label{iii.19}
\int_{T_0}^{T}\beta(t)\,dt > K_{\lambda}:= \frac{2 R_{\lambda}}{\lambda g(x_{\lambda})},
\end{equation}
then Lemma \ref{le3.1}(ii) holds. Indeed, any path in ${\mathcal R}$
linking the two sides of ${\mathcal R}^-$ is stretched to a path crossing entirely the
set ${\mathcal Q}$ (from ${\mathcal Q}^-_{\rm left}$ to ${\mathcal Q}^-_{\rm right}$)
and remaining inside the strip $[0,r_{\lambda}]\times {\mathbb R}.$  This concludes the proof of Theorem \ref{th3.1}.
\qed

\begin{remark}\label{re3.2}{\em
The chaotic dynamics associated with the Poincar\'{e} map $\Phi=\Phi_2\circ\Phi_1$
comes from the composition of a
twist rotation (due to $\Phi_1$) with a shearing parallel to the $y$-axis (due to $\Phi_2$).
Clearly, if we consider the Poincar\'{e} map of initial time $T_0$, we will obtain
$\Phi_1\circ\Phi_2$ and, by a similar argument, through a symmetric counterpart
of Lemma \ref{le3.1} where the order of the two maps is commuted, we may prove the existence of a
horseshoe type structure inside the set $\mc{R}.$
Moreover, using the fact that the shear map moves downward the points with $x<0$
(in fact, $g(x)<0$ for $x<0$),  we can also start from a set
$${\mathcal Q}:= \{z=(x,y)\in{\mathbb R}^2: 0\leq -r_0\leq x\leq 0, \, y\leq 0, \, r_0\leq \|z\|\leq R_0\}$$
and a target set
$${\mathcal R}:= \{z=(x,y)\in{\mathbb R}^2:  r_{\lambda}\leq x \leq 0, \, y\geq 0, \,
r_{\lambda}\leq \|z\|\leq R_{\lambda}\},$$
again reversing  the roles of $\mc{Q}$ and $\mc{R}$ by commuting $\Phi_1$ with $\Phi_2.$
\par
Furthermore, we can obtain a variant of Theorem \ref{th3.1} by keeping condition $(c_1)$
and modifying condition $(c_2)$ to
\begin{itemize}
\item[$(c'_2)\;$]
$\alpha \gneq 0$ and $\beta \equiv 0$ on $[T_0,T].$
\end{itemize}
In this situation, the assumption \eqref{iii.4} should be replaced by
a similar hypothesis involving $\int_{T_0}^T \alpha.$  Under conditions $(c_1)-(c'_2)$,
the  Poincar\'{e} maps produce a dynamics where a twist rotation is composed
with a shearing parallel to the $x$-axis.}
\end{remark}

\begin{remark}\label{re3.3}{\em
H\'{e}non proposed in \cite{He-1969}, as a model problem, the mapping given by the quadratic equations
$$
  x_1= x\cos \alpha - (y-x^2)\sin \alpha,
    \qquad y_1= x\sin \alpha + (y-x^2)\cos \alpha,
$$
as a simple example of an area-preserving mapping which exhibits chaotic dynamics.
The mapping in H\'{e}non's model splits into a
product of a shearing parallel to the $y$-axis and a rotation.
It is interesting that the typical numerical features observed in the experiments
in \cite{He-1969} appear also in Volterra's equations for the setting of Theorem \ref{th3.1},
as shown in Figure \ref{fig-xiii}.

\begin{figure}[h!]
\centering
\includegraphics[scale=0.4]{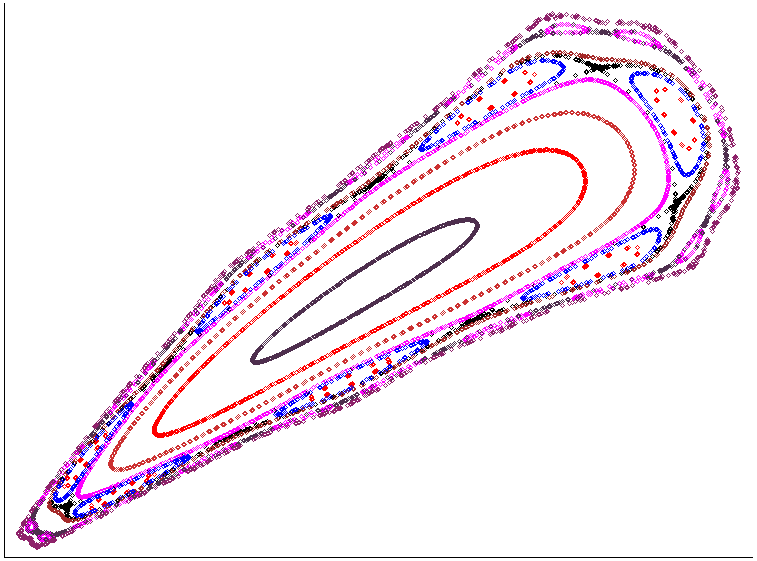}\qquad
\includegraphics[scale=0.4]{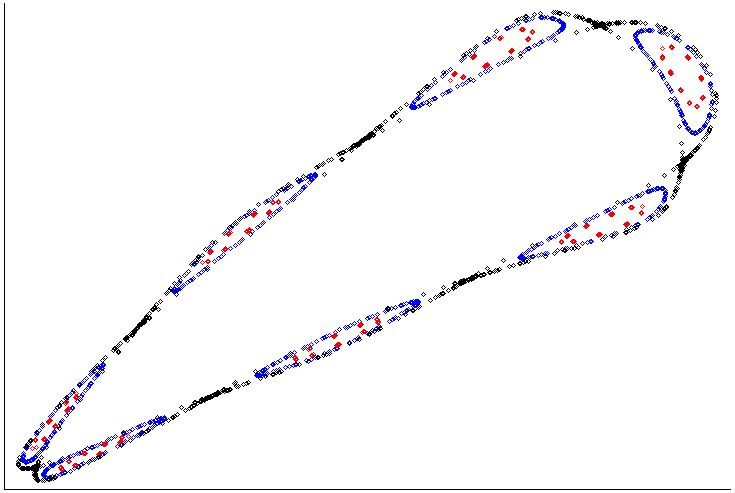}
\caption{Some numerical experiments for a periodic Volterra system.}
\label{fig-xiii}
\end{figure}

The left picture of Figure \ref{fig-xiii} shows the first 800 iterations of
the Poincar\'{e} map for a Volterra system
$$
    \left\{ \begin{array}{l} u' = \lambda\alpha(t)u(1-v),\\ v' = -\lambda\beta(t)v(1-u),\end{array}
    \right.
$$
under the general assumptions of Theorem \ref{th3.1},
starting from different initial points. The numerical experiments reveal
the presence of stability regions (invariant curves
around the constant coexistence state $(1,1)$, according to Liu \cite{Liu-1995}),
as well as some more complicated discrete orbits of ``chaotic type''.
The right figure highlights some  special orbits,
in particular, seven-fold island chains (according to  Arrowsmith and Place \cite[p.262]{AP-1992})
separated away by heteroclinic connections
and including higher order subharmonics. See also H\'enon \cite[Figs. 4, 5]{He-1969}. }
\end{remark}

\bigskip

\begin{center}
\textbf{Chaotic dynamics in the degenerate case for system \eqref{iii.3}}
\end{center}

\noindent To conclude our analysis, we show  now how to adapt the
proof of Theorem \ref{th3.1} to deal with degenerate weights in
system \eqref{iii.3}. This will be achieved by applying the
estimates previously obtained by the authors in \cite{LMZ-2021},
where some results about the existence of periodic solutions for
\eqref{iii.3} were found when the functions $\a$ and $\b$ have a
common support of zero measure. This case provides also a
connection with the theorems in Section \ref{sec2}.
\par
To fix ideas, we suppose that in the interval $[0,T_0]$ there are
$\ell$ positive humps of  $\a$ separated away by $k=\ell$ positive
humps of $\b$, as in \eqref{ii.23}, with the corresponding support
intervals intersecting on sets of zero measure. The symmetric case
when, instead of \eqref{ii.23}, the condition \eqref{ii.34} holds,
can be treated similarly by interchanging the roles of $\alpha$
and $\beta$ and modifying the choice of the initial set $\mc{Q}$
and the target set $\mc{R}$, as it will explained in Remark
\ref{re3.4} below.
\par
Then the following result holds, where, as before, we suppose that
$f,g:{\mathbb R}\to {\mathbb R}$ are $C^1$-functions such that
$f(0)=g(0)=0$, $f'(0)>0$, $g'(0) >0$, and $f(s)s>0$, $g(s)s>0$ for
$s\not=0$. Moreover, at least one of the two functions, e.g., $f$,
is bounded on $(-\infty,0].$

\begin{theorem}\label{th3.2}
Assume that there exists $T_0\in (0,T)$ such that:
\begin{itemize}
\item[$(c'_1)\;$] $\ell=k \geq 4$, with $\ell,k$ even integers;
\item[$(c_2)\;$]
$\alpha \equiv 0$ and $\beta \gneq 0$ on $[T_0,T].$
\end{itemize}
Then, there exists $\lambda^*=\lambda^*_{\ell}$ such that, for
every $\lambda >\lambda^*,$ there exists a constant $K_{\lambda}$
for which,  whenever
\begin{equation}\label{iii.20}
\int_{T_0}^{T} \beta(t)\,dt > K_{\lambda},
\end{equation}
the Poincar\'{e} map associated with \eqref{iii.3} induces chaotic
dynamics on $\ell/2$ symbols on some compact set ${\mathcal Q}$
contained in the first quadrant.
\end{theorem}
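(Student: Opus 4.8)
The plan is to mirror the proof of Theorem \ref{th3.1}: we write the $T$-Poincar\'e map of \eqref{iii.3} as $\Phi=\Phi_2\circ\Phi_1$, where $\Phi_1$ is its Poincar\'e map on $[0,T_0]$ and $\Phi_2$ the one on $[T_0,T]$, and we verify the hypotheses of Lemma \ref{le3.1} with $\ell/2$ playing the role of the number of symbols (admissible since $\ell/2\geq 2$, as $\ell\geq 4$ is even). The map $\Phi_2$ requires no new idea and is treated exactly as in Section \ref{sec3.2}: since $\alpha\equiv 0$ on $[T_0,T]$, the first coordinate is frozen and
\[
\Phi_2(x_0,y_0)=\Bigl(x_0,\,y_0+\lambda\,g(x_0)\!\int_{T_0}^{T}\beta(t)\,dt\Bigr),
\]
a vertical shear keeping the side $\{x=0\}$ of $\mathcal{R}$ pointwise fixed and pushing the side $\{x=r_\lambda\}$ upward by $\lambda\,g(r_\lambda)\int_{T_0}^{T}\beta$; hence, setting $K_\lambda:=2R_\lambda/(\lambda\,g(r_\lambda))$, assumption \eqref{iii.20} forces every path in $\mathcal{R}$ joining the two components of $\mathcal{R}^-$ to be stretched entirely across $\mathcal{Q}$ inside the strip $[0,r_\lambda]\times\mathbb{R}$, that is $\Phi_2:\widehat{\mathcal{R}}\sap\widehat{\mathcal{Q}}$, just as in \eqref{iii.19}.

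All the genuinely new work lies in the study of $\Phi_1$ on $[0,T_0]$, where the overlap interval used in Theorem \ref{th3.1} is no longer available: here $\alpha$ has $\ell$ humps alternating with the $\ell=k$ humps of $\beta$, arranged as in \eqref{ii.23}. Since $\beta\equiv 0$ on the interior of each $\alpha$-hump and $\alpha\equiv 0$ on the interior of each $\beta$-hump, along every solution of \eqref{iii.3} on $[0,T_0]$ the coordinate $y$ stays constant while $x$ varies monotonically across an $\alpha$-hump, and symmetrically across a $\beta$-hump; thus every orbit is a ``staircase'' polygonal arc made of $2\ell$ alternating monotone horizontal and vertical segments winding counter-clockwise about the origin, the angular coordinate $\theta(\cdot,z_0)$ being non-decreasing. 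Two ingredients are then needed, both obtained by sharpening the rotation computations of \cite{LMZ-2021}. The first is a non-crossing estimate: since $f$ is bounded on $(-\infty,0]$, the total horizontal displacement of an orbit while it stays in the third quadrant cannot exceed $\lambda\bigl(\sup_{s\le 0}|f(s)|\bigr)\int_0^{T_0}\alpha$, whence there is a large radius $R_0$ such that any solution issuing from the first quadrant with $\|z_0\|=R_0$ satisfies $\theta(T_0,z_0)<3\pi/2$. The second is a twist estimate: each of the $2\ell$ monotone segments subtends strictly less than $\pi$ at the origin, so the rotation of small-norm orbits over $[0,T_0]$ is bounded uniformly in $\lambda$ (this is exactly why the number of symbols cannot be pushed beyond $\ell/2$, in sharp contrast with Theorem \ref{th3.1}, where the overlap interval produces an arbitrarily large twist as $\lambda\to+\infty$); on the other hand, as $\lambda$ grows each segment overshoots more and more, so there exist $\lambda^*=\lambda^*_\ell$ and a small radius $r_0$ such that, for $\lambda>\lambda^*$, every solution issuing from an arc of $\{\|z_0\|=r_0\}$ bounded away from the positive $x$-axis accumulates more than $\ell/2$ turns over $[0,T_0]$. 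Once such a $\lambda$ is fixed, continuous dependence on initial data furnishes radii $0<r_\lambda\le r_0<R_0\le R_\lambda$ with $r_\lambda\le\|\zeta(t;z_0)\|\le R_\lambda$ on $[0,T_0]$ for $z_0$ in the corresponding first-quadrant annular sector, and the oriented rectangles $\widehat{\mathcal{Q}}$ and $\widehat{\mathcal{R}}$ are defined essentially as in Section \ref{sec3.2}; the only modification is that $\mathcal{Q}$ is now chosen as an annular sector staying off the positive $x$-axis, so as to discard the orbits starting on that axis, whose first $\alpha$-hump is ``wasted'' because $f(0)=0$, and this remains compatible with $\Phi_2:\widehat{\mathcal{R}}\sap\widehat{\mathcal{Q}}$ as soon as $r_\lambda$ is small enough compared with $r_0$.

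Granted these two estimates, the property $\Phi_1:\widehat{\mathcal{Q}}\sap^{\ell/2}\widehat{\mathcal{R}}$ follows by the continuity argument of the proof of Theorem \ref{th3.1}: along any path $\gamma:[0,1]\to\mathcal{Q}$ joining $\mathcal{Q}^-_{\mathrm{left}}$ to $\mathcal{Q}^-_{\mathrm{right}}$, the continuous function $s\mapsto\theta(T_0,\gamma(s))$ runs from a value larger than $\ell\pi$ down to a value smaller than $3\pi/2$, hence meets each of the $\ell/2$ angular windows $[\tfrac{3\pi}{2}+2j\pi,\,2\pi+2j\pi]$, $j=0,\dots,\tfrac{\ell}{2}-1$, and each such passage selects a sub-path of $\gamma$ whose image under $\Phi_1$ crosses $\mathcal{R}$ from $\mathcal{R}^-_{\mathrm{right}}$ to $\mathcal{R}^-_{\mathrm{left}}$; consequently the sets
\[
\mathcal{H}_{\frac{\ell}{2}-1-j}:=\bigl\{\,z\in\mathcal{Q}\,:\,\Phi_1(z)\in\mathcal{R},\,\tfrac{3\pi}{2}+2j\pi\le\theta(T_0,z)\le 2\pi+2j\pi\,\bigr\},\qquad j=0,\dots,\tfrac{\ell}{2}-1,
\]
are $\ell/2$ pairwise disjoint compact subsets of $\mathcal{Q}$ with $(\mathcal{H}_i,\Phi_1):\widehat{\mathcal{Q}}\sap\widehat{\mathcal{R}}$. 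Lemma \ref{le3.1}, applied with $\ell/2$ symbols, then yields chaotic dynamics on $\ell/2$ symbols in $\mathcal{Q}$, which settles the case \eqref{ii.23}; the symmetric case \eqref{ii.24}, where the roles of $\alpha$ and $\beta$ are exchanged, is obtained by swapping the two coordinates and relocating $\mathcal{Q}$ and $\mathcal{R}$ as explained in Remark \ref{re3.4}. The genuinely delicate point, and the only place where the precise hump-by-hump angular bounds of \cite{LMZ-2021} are really needed, is the twist estimate: one has to show that, for $\lambda$ large enough, every small-norm orbit starting off the positive $x$-axis accumulates more than $\ell/2$ full turns over $[0,T_0]$, uniformly with respect to the admissible positions of the $2\ell$ humps.
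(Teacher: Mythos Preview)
Your proposal is correct and follows essentially the same route as the paper: decompose $\Phi=\Phi_2\circ\Phi_1$, treat the shear $\Phi_2$ exactly as in Theorem \ref{th3.1}, modify $\mathcal{Q}$ to stay bounded away from the positive $x$-axis (so that the first $\alpha$-hump is not wasted), invoke the hump-by-hump rotation estimates of \cite{LMZ-2021} to obtain $\theta(T_0,z_0)>\ell\pi$ on the inner boundary and the non-crossing bound $\theta(T_0,z_0)<3\pi/2$ on the outer one, and then apply Lemma \ref{le3.1}. One small caveat: your parenthetical claim that the segment count shows the number of symbols ``cannot be pushed beyond $\ell/2$'' is not quite justified by the bound you state (each of the $2\ell$ segments subtending less than $\pi$ only caps the rotation at $\ell$ turns, not $\ell/2$); the sharper bound comes from the quadrant-by-quadrant observation that crossing each quadrant costs one hump, hence a full turn costs two $\alpha$-humps and two $\beta$-humps---but this is a side remark and does not affect the proof of the theorem itself.
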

\begin{proof}
For simplicity in the exposition in the proof, we will focus
attention in the case $\ell=k=4$, where, starting on the region
$\mc{Q}$ we obtain two crossings of the region $\mc{R}$ and, as a
consequence, a complex dynamics on two symbols.
\par
We will follow the same argument as in the  proof of Theorem
\ref{th3.1}, just  emphasizing the necessary modifications. As a
first step, we will focus our attention in the time-interval
$[0,T_0]$.

As the supports of $\alpha$ and $\beta$ do not overlap, the
associated dynamics is a composition of shear maps of the
following form. If $\a\gneq 0$, $\b\equiv 0$ and $y>0$ (resp.
$y<0$), the points are moved parallel to the $x$-axis from right
to left (resp. from left to right). Similarly, when $\a\equiv 0$,
$\b\gneq 0$ and $x<0$ (resp. $x>0$), then the points are moved
parallel to the $y$-axis in a decreasing (resp. increasing) sense.
Once passed two positive humps  of $\a$ and an intermediate
positive hump of $\b$, the points in the first quadrant with $y_0
\geq \delta_0$ end in the fourth quadrant for sufficiently large
$\lambda >0$. Thus, after another interval where $\b\gneq 0,$ we
come back to the first quadrant and can repeat the process, as
described in detail by the authors in \cite{LMZ-2021}. Since the
points on the $x$-axis (resp. the $y$-axis) do not move when
$\b\equiv 0$  (resp. when $\a\equiv 0$), in the proof of this
theorem it is convenient to slightly modify the choice of $\mc{Q}$
and $\mc{R}$ as follows
$${\mathcal Q}:= \{z=(x,y)\in{\mathbb R}^2: 0\leq x\leq r_1, \, y\geq 0, \, r_0\leq \|z\|\leq R_0\},
$$
for $0 <r_1<r_0,$ and
$${\mathcal Q}^-_{\rm left}:= {\mathcal Q}\cap C_{r_0}, \qquad
{\mathcal Q}^-_{\rm right}:= {\mathcal Q}\cap C_{R_0}.$$ In this
manner, there is $\hat{y}_0:= (r_0^2-r_1^2)^{1/2}>0$ such that
$y\geq \hat{y}_0$ for all $(x,y)\in \mc{Q}.$ According to Lemmas
1, 2 and Theorem 3 of \cite{LMZ-2021}, for that choice, there
exists $\lambda^*$ such that the condition \eqref{iii.17} is
reestablished for every (fixed) $\lambda > \lambda^*$, i.e.,
\begin{equation}\label{iii.21}
\theta(T_0,z_0)>4\pi \quad \text{if }\;\;   z_0 \in \mc{Q}\cup C_{r_0}.
\end{equation}
On the other hand, exactly as explained above, for sufficiently
large $R_0>r_0$, the solutions departing from the first quadrant
outside the disc of radius $R_0$ cannot cross the third quadrant,
that is
\begin{equation}\label{iii.22}
\theta(T_0,z_0)<\frac{3\pi}{2} \quad \text{if } \;
\;  z_0 \in \mc{Q}\cup C_{R_0}.
\end{equation}
By continuous dependence, once fixed a $\lambda > \lambda^*$, one
can find two radii $r_{\lambda}$ and $R_{\lambda}$, with
$$0 < r_{\lambda} \leq r_{1} < R_{0} \leq R_{\lambda},$$
such that any solution $\zeta(t;z_0)$ of \eqref{iii.3} with
$z_0\in \mc{Q}$ lies in the set
$${\mathcal R}:= \{z=(x,y)\in{\mathbb R}^2: 0\leq x\leq r_{\lambda}, \, y\leq 0, \,
r_{\lambda}\leq \|z\|\leq R_{\lambda}\},$$ for all $t\in [0,T_0].$
\\
Thus, much like in the proof of Theorem \ref{th3.1}, we can also
define
$${\mathcal R}^-_{\rm left}:= {\mathcal R}\cap \{(x,y):x=0\}, \qquad
{\mathcal R}^-_{\rm right}:= {\mathcal R}\cap
\{(x,y):x=r_{\lambda}\}.$$

From now on, we have just to repeat the proof of Theorem
\ref{th3.1} without any significant change in order to show that
$\Phi_1: \widehat{\mc{Q}}\sap^2 \widehat{\mc{R}}$. The
verification that $\Phi_2: \widehat{\mc{R}}\sap \widehat{\mc{Q}}$
proceeds exactly as before. Therefore, according to Lemma
\ref{le3.1}, we get the chaotic dynamics for
$\Phi=\Phi_2\circ\Phi_1$ on two symbols.
\par
Note that, in order to produce a semi-conjugation on $m$-symbols,
we need to make at least $m$-turns around origin, starting at
$\mc{Q}$, in the time-interval $[0,T_0].$ This can be achieved,
for  sufficiently large $\lambda$, if both $\a$ and $\b$ are
assumed to have, at least, $2m$ positive humps. The proof is
complete.
\end{proof}

\begin{remark}\label{re3.4}{\em
If, instead of \eqref{ii.23}, the condition \eqref{ii.24} holds,
then we can assume $(c'_2)$, instead of $(c_2)$, and take $\mc{Q}$
and $\mc{R}$ to be} adjacent to the $x$-axis and opposite with
respect to the $y$-axis.
\end{remark}

\begin{center}
\textbf{Chaotic dynamics  when $\a(t)\b(t)>0$ for all $t\in
[0,T]$.}
\end{center}

\noindent So far, in this section we have studied the system
\eqref{iii.3}  by assuming that either $\a\equiv 0$ and $\b\gneq
0$,  or $\b\equiv 0$ and $\a\gneq 0$,  on some time-interval.  In
both these cases, the dynamics is spanned by the superposition of
a twist rotation with a shear map. In this section, we would like
to stress the fact that a rich dynamics can be also produced,
through a different mechanism, when $\a$ and $\b$ are throughout
positive and appropriately separated away from each other, in a
sense to be specified below. To analyze the simplest geometry, we
restrict ourselves to consider the system \eqref{iii.2} with
stepwise constant function coefficients, $\alpha(t)$ and
$\beta(t)$, as in Section \ref{sec3.1}. More precisely, we assume
that $T=T_0+T_1$ and
\begin{equation}\label{iii.23}
\b(t):= \left\{
\begin{array}{lll}
\b_0>0&\;\;\hbox{if}\; t\in[0,T_0),\\[1ex]
\b_1>0&\;\;\hbox{if}\; t\in [T_0,T),
\end{array}
\right. \qquad
\a(t):= \left\{
\begin{array}{lll}
\a_0>0&\;\;\hbox{if}\; t\in[0,T_0),\\[1ex]
\a_1>0&\;\;\hbox{if}\; t\in [T_0,T),
\end{array}
\right.
\end{equation}
where the positive constants $\a_0$, $\a_1$, $\b_0$, $\b_1$ and
the exact values of $T_0$ and $T_1$ will be made precise later. In
this case, the dynamical behaviors of \eqref{iii.2} on each of the
intervals $[0,T_0]$ and $[T_0,T]\equiv [0,T_1]$ are those of the
associated autonomous Volterra-type systems
\begin{equation}
\label{iii.24}
\hbox{(I)}\quad \left \{
\begin{array}{ll}
x'=\a_0(1-e^y)\\
y'=-\b_0(1-e^x)
\end{array}
\right.\hspace{3cm}
\hbox{(II)} \quad \left \{
\begin{array}{ll}
x'=\a_1(1-e^y)\\
y'=-\b_1(1-e^x)
\end{array}
\right.
\end{equation}
respectively. Note that the systems \eqref{iii.24}-(I) and
\eqref{iii.24}-(II) have the same equilibrium point (the origin),
and both describe a global center. However, for different choices
of the pairs $(\a_0,\b_0)$ and $(\a_1,\b_1)$ the shape of the
level lines of the Hamiltonian may change. Geometrically, this
situation is reminiscent to that already studied by Takeuchi et
al. in  \cite{TDHS-2006}, where a predator-prey model with
randomly varying coefficients was considered.
\par
Now, making a choice so that
\begin{equation}\label{iii.25}
\frac{\b_1}{\a_1} > \frac{\b_0}{\a_0},
\end{equation}
or the  converse inequality, it is possible to find level lines of
the two systems crossing to each other.
\par
From this, we can construct two annular domains ${\mathcal A}_0$
and ${\mathcal A}_1$, filled by periodic orbits of
\eqref{iii.24}-(I) and \eqref{iii.24}-(II), respectively, in such
a manner that the annuli intersect into four rectangular regions,
as illustrated in Figure \ref{fig-xiv},  where the annulus
${\mathcal A}_0$ is obtained for $\a_0=\b_0=1$ and the level lines
passing through the initial points $(0,0.8)$ and $(0,1.5)$,
whereas ${\mathcal A}_2$ is obtained for $\a_1=0.1, \b_1=5$ and
the level lines passing through the  initial points $(0,1.7)$ and
$(0,2.5)$.  From a numerical point of view, the larger is the gap
in condition \eqref{iii.25}, the wider are the linked annuli which
can be constructed.

\begin{figure}[h!]
\centering
\includegraphics[scale=0.5]{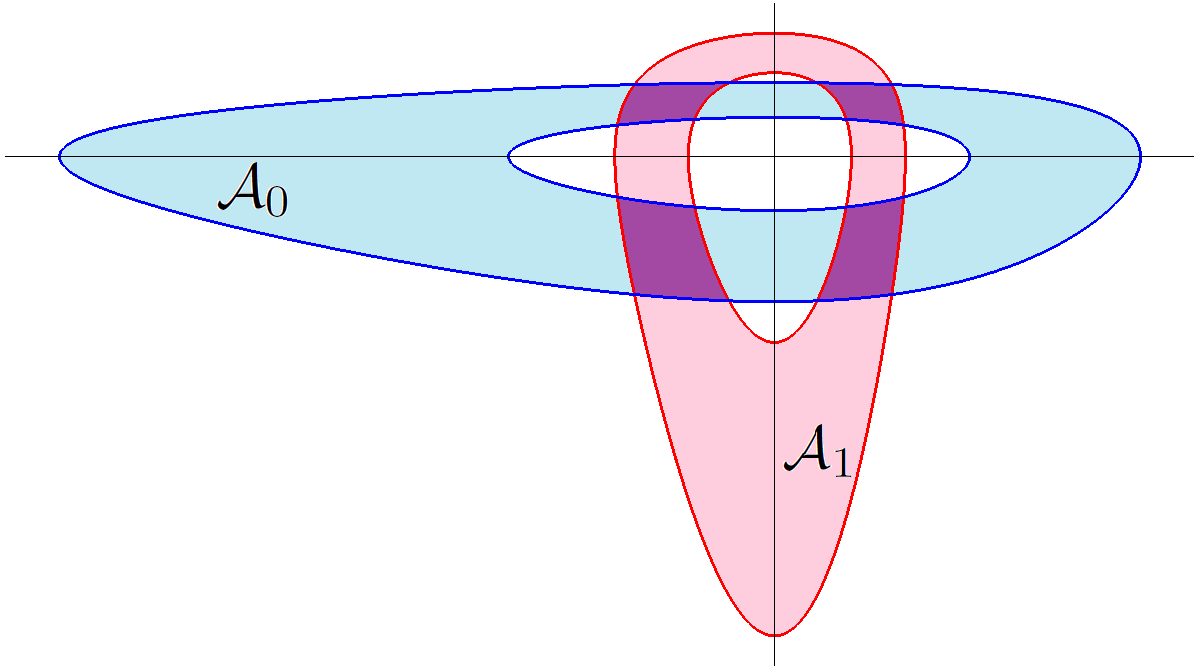}
\caption{ Two linked annuli ${\mathcal A}_0$ and ${\mathcal A}_1$
filled in by the periodic orbits of \eqref{iii.24}-(I) and
\eqref{iii.24}-(II), respectively, and crossing to each other into
four rectangular regions. } \label{fig-xiv}
\end{figure}

In general, to describe in a precise manner the construction of
the  two linked annuli, we denote by $\mc{E}_i$ the energy
functions for the pair $(\a_i,\b_i)$ ($i=0,1$), so that
$$\mc{A}_i:= \{(x,y)\in {\mathbb R}^2: c_i \leq \mc{E}_i(x,y) \leq d_i\},$$
for
$$
   d_i> c_i > \min \mc{E}_i = \mc{E}_i(0,0) =\a_i+\b_i.
$$
We also denote by $x^-_i(\ell) < 0 < x^+_i(\ell)$ the abscissas of the intersection points
of the level line $\mc{E}_i=\ell\in[c_i,d_i]$ with the $x$-axis and, symmetrically, by
$y^-_i(\ell) < 0 < y^+_i(\ell)$ the ordinates of the intersection points
of the level line $\mc{E}_i=\ell\in[c_i,d_i]$ with the $y$-axis. Then $\mc{A}_0$ and $\mc{A}_1$
are \textit{linked} provided that
$$x^-_0(c_0) < x^-_1(d_1); \quad   x^+_1(d_1)< x^+_0(c_0), \qquad
y^-_1(c_1) < y^-_0(d_0); \quad  y^+_0(d_0)< y^+_1(c_1).$$ This
definition corresponds to that considered by  Margheri, Rebelo and
Zanolin \cite{MRZ-2010} (see also Papini, Villari and Zanolin
\cite[Def. 3.2]{PVZ-2019}). The regions obtained as intersections
of the two annuli, are the four components of
$$\{(x,y)\in {\mathbb R}^2: c_0\leq \mc{E}_0(x,y) \leq d_0\, \land\, c_1\leq \mc{E}_1(x,y) \leq d_1\},$$
each one lying in a different quadrant. They are all homeomorphic to the unit square.
Now, if we denote by $\mc{Q}$ and $\mc{R}$
any pair chosen among these four intersections, we define the following orientations
$$
\mc{Q}^-:= \{(x,y)\in \mc{Q}: \mc{E}_0(x,y)=c_0\}\cup  \{(x,y)\in \mc{Q}: \mc{E}_0(x,y)=d_0\},
$$
$$\mc{R}^-:= \{(x,y)\in \mc{R}: \mc{E}_1(x,y)=c_1\}\cup  \{(x,y)\in \mc{R}: \mc{E}_1(x,y)=d_1\},$$
not being relevant the order in which  the names ``left'' and
``right'' in  these components of $[\cdot]^-$ are assigned.
Finally, if we denote by $\Phi_1$ and $\Phi_2$ the Poincar\'{e}
maps associated with \eqref{iii.24}-(I) and \eqref{iii.24}-(II),
respectively, by the monotonicity of the period map (already
exploited as, for instance, in Remark \ref{re3.1}), we can prove
that $\Phi_1(\mc{Q})$ crosses $\ell$-times $\mc{R}$ or, more
precisely, $\Phi_1:\widehat{\mc{Q}}\sap^{\ell} \widehat{\mc{R}},$
provided that $T_0$ is sufficiently large. Similarly, one  can
prove that $\Phi_2:\widehat{\mc{R}}\sap^m \widehat{\mc{Q}},$ for
sufficiently large  $T_1$. Thus, a chaotic dynamics on $\ell\times
m$ symbols is produced for the map $\Phi$ in the set $\mc{Q}.$
Equivalently, instead of taking sufficiently large $T_0$ and
$T_1$, one can put a parameter $\lambda$ in front of $\a$ and
$\b$, with $T_0$ and $T_1$ fixed, to obtain complex dynamics on
$\ell\times m$ symbols for all $\lambda > \lambda^*$, where
$\lambda^*$ depends on $(\ell,m)$, as well as on the several
coefficients of the equation.
\par
Here the geometrical configuration is the same as that considered
by Burra and Zanolin \cite{BuZa-2009}, and later generalized by
Margheri, Rebelo and Zanolin \cite{MRZ-2010} and Papini, Villari
and Zanolin \cite{PVZ-2019}.  The reader is sent to these
references for any further technical details.

\subsection{Ideal horseshoe dynamics for weights \eqref{iii.6}}\label{appendix}

\noindent In this section, we will perform a schematic
construction of the Smale's horseshoe inspired on the numerical
simulations of model \eqref{iii.2} with an stepwise configuration
of $\a$ and $\b$ as in \eqref{iii.6}.  These simulations show that
there is \emph{transversal intersection} between the sets
$\mc{H}_0, \mc{H}_1$ and $\Phi(\mc{H}_0), \Phi(\mc{H}_1)$ in the
sense that, for $i,j\in\{1,2\}$, the intersection
\[
\mc{H}_i\cap\Phi(\mc{H}_j)
\]
is a unique connected set. Assuming that this is the behavior for
all forward and backward iterates of the Poincar\'e map $\Phi$,
implies that $\Phi$ is not only semi-conjugated to the Bernouilli
shift, as proved in the previous subsections (also for the general
case of Theorem \ref{th3.1}), but also conjugated. Figures
\ref{fig-xix}-\ref{fig-xx}-\ref{fig-xxi} give evidence of these facts.
\par
We will assume that there are
two disjoint proper \emph{horizontal} topological squares,
$\mc{H}_0$, $\mc{H}_1\subsetneq\mc{Q}$, such that
\begin{equation}
\label{iii.26}
   \Phi(\mc{H}_i):=\mc{V}_i \quad \hbox{for}\;\;i\in\{0,1\}.
\end{equation}
By \emph{horizontal}, we mean that the lateral sides of $\mc{H}_0$
and $\mc{H}_1$ lie on the lateral sides of the square $\mc{Q}$, as
sketched in  Figure \ref{fig-xv}. In this figure, as in the
remaining figures of this section, we will represent any
topological square as an homeomorphic square quadrangle. In order
to have \eqref{iii.26}, it is assumed that
$$
\Phi_{T_0}(\mc{Q})\cap \mc{R}= \mc{R}_0\cup\mc{R}_1
$$
and, setting
$$
\mc{V}_i:= \Phi_{T_1}(\mc{R}_i)\cap \mc{Q},\qquad i\in\{0,1\},
$$
it follows that
$$
\Phi_{T_1}(\Phi_{T_0}(\mc{Q})\cap \mc{R}) \cap \mc{Q} =
\Phi_{T_1}(\mc{R}_0\cup\mc{R}_1)\cap \mc{Q} =\mc{V}_0\cup\mc{V}_1.
$$
\begin{figure}[h!]
\centering
\includegraphics[scale=0.57]{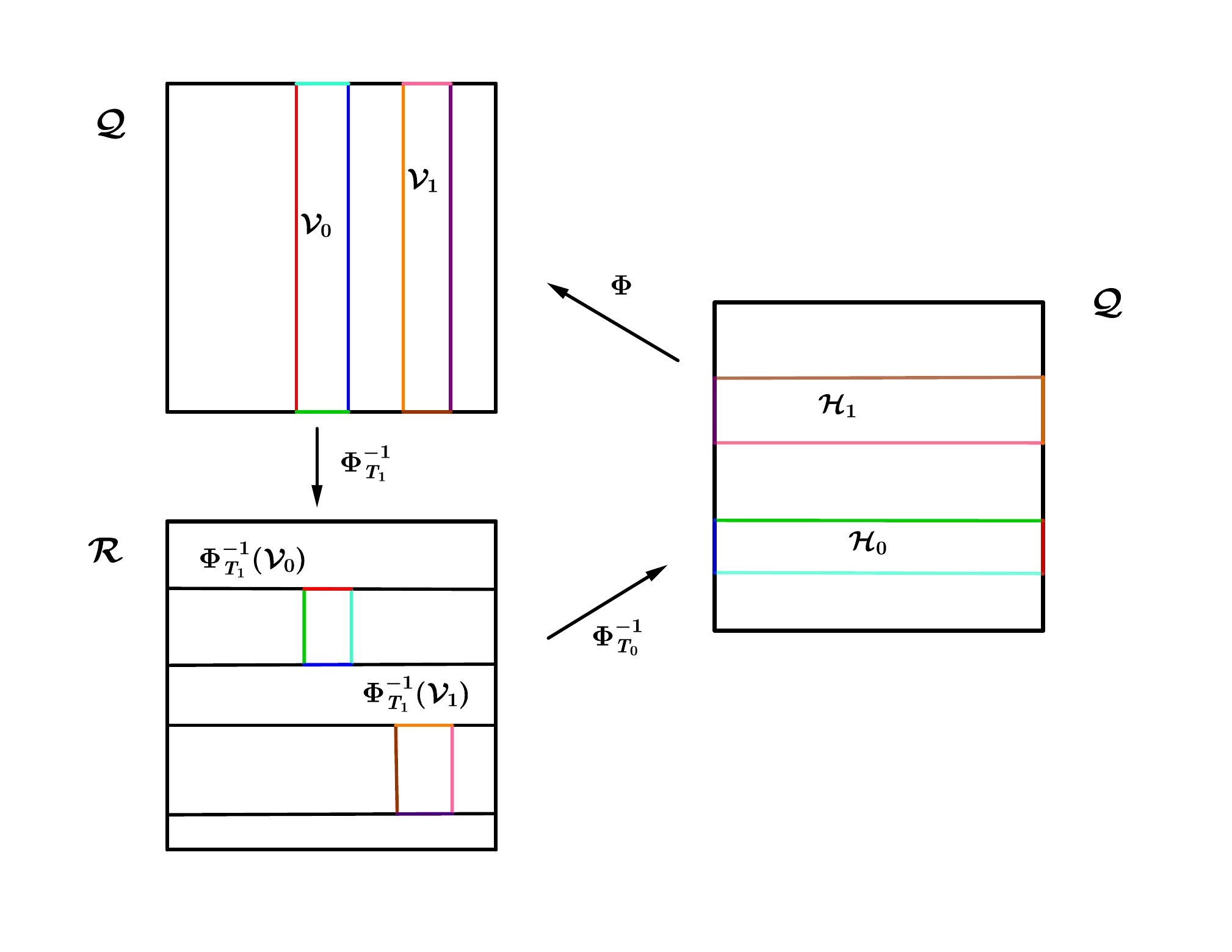}
\caption{$\Phi$ establishes an homeomorphism between $\mc{H}_i$ and $\mc{V}_i$, $i\in \{0,1\}$.}
\label{fig-xv}
\end{figure}
\par

By simply having a look at Figure \ref{fig-xv}, it is easily realized that, for every $i\in\{0,1\}$,
$\mc{S}_i:= \Phi_{T_1}^{-1}(\mc{V}_i)$ is a \emph{vertical} sub-square of $\mc{R}_i$; vertical in the sense that it is a portion of $\mc{R}_i$ linking the upper and lower sides of $\mc{R}_i$. It turns out that $\Phi_{T_0}$ sends $\mc{R}_i\setminus \mc{S}_i$ outside $\mc{Q}$. Since
$$
  \mc{H}_i:=\Phi_{T_0}^{-1}(\mc{S}_i)=\Phi_{T_0}^{-1}(\Phi_{T_1}^{-1}(\mc{V}_i))
  =\Phi^{-1}(\mc{V}_i),\qquad i\in \{0,1\},
$$
it is apparent that, for every $i\in \{0,1\}$,  $\Phi$ establishes an homeomorphism between
$\mc{H}_i$ and $\mc{V}_i$. In particular,
$$
   \Phi(\mc{H}_i)=\mc{V}_i\;\;\hbox{and}\;\;\Phi^{-1}(\mc{V}_i)=\mc{H}_i\;\; \hbox{for each}\;\;
   i\in\{0,1\}.
$$
This feature is pivotal in the next construction. The intersection (in $\mc{Q}$) of the \emph{vertical squares} $\mc{V}_i$ with the \emph{horizontal squares} $\mc{H}_j$, $i, j\in\{0,1\}$, generates $2^2=4$ topological squares in $\mc{Q}$, namely
\[
\mc{Q}_{(i,j)}:=\mc{V}_i\cap\mc{H}_j,\qquad i,j\in\{0,1\},
\]
which have been represented in Figure \ref{fig-xvi}.
\begin{figure}[h!]
\centering
\includegraphics[scale=0.9]{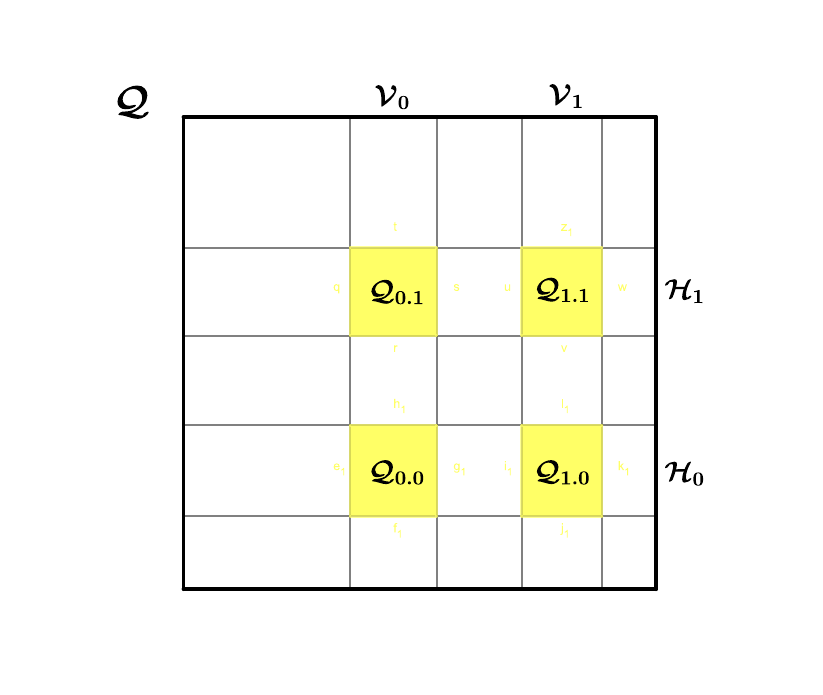}
\caption{The invariant squares $\mc{Q}_{(i,j)}$, $i, j\in\{0,1\}$.}
\label{fig-xvi}
\end{figure}
\par

Subsequently, we will denote $\Phi^0$ the identity map.
By construction, for every $s_{-1}, s_0\in\{0,1\}$ and
$$
  z\in\mc{Q}_{(s_{{-\!1}},s_0)}=\mc{V}_{s_{{-\!1}}}\cap\mc{H}_{s_0},
$$
we have that
\begin{equation}
\label{iii.27}
\Phi^0(z)=z\in\mc{H}_{s_0}\;\;\hbox{and}\;\;\Phi^{-1}(z)\in\Phi^{-1}(\mc{V}_{s_{{-\!1}}})=
\mc{H}_{s_{-\!1}}.
\end{equation}
In other words,
\begin{equation}
\label{iii.28}
  \Phi^0(\mc{Q}_{(s_{{-\!1}},s_0)})=\mc{Q}_{(s_{{-\!1}},s_0)}\subset \mc{H}_{s_0},\quad
  \Phi^{-1}(\mc{Q}_{(s_{{-\!1}},s_0)})\subset \mc{H}_{s_{-\!1}}.
\end{equation}
Naturally, the previous \emph{duplication process} can be repeated for each of the topological
squares $\mc{V}_i$, $i\in\{0,1\}$. Much like $\mc{Q}$, for each $i\in\{0,1\}$, the upper and lower sides of  $\mc{V}_i$ consist of two arcs of trajectory of $\G(\ell_2)$ and $\G(\ell_1)$, respectively. Thus,
replacing $\mc{Q}$ by $\mc{V}_i$ we can generate the four vertical topological squares
\begin{equation}
\label{iii.29}
\mc{V}_{(i,j)} := \Phi(\mc{V}_{i})\cap\mc{V}_j,\qquad i, j\in\{0,1\}.
\end{equation}
Then, as in $\mc{Q}$, $\mc{V}_0$ and $\mc{V}_1$, for every $i, j\in\{0,1\}$, $\mathcal{V}_{(i,j)}$ provides us with a vertical topological square linking  $\Gamma(\ell_1)$ to $\Gamma(\ell_2)$; vertical in the sense that their upper and lower sides consist of certain arcs of trajectory of $\G(\ell_2)$ and
$\G(\ell_1)$, respectively. According to \eqref{iii.29}, it becomes apparent that, for every $i, j\in\{0,1\}$,
\begin{equation}
\label{iii.30}
\begin{split}
\Phi^{-1}(\mc{V}_{(i,j)}) & =\Phi^{-1}(\Phi(\mc{V}_{i})\cap\mc{V}_j)\subsetneq \Phi^{-1}(\mc{V}_{j})=\mc{H}_j,\\
\Phi^{-2}(\mc{V}_{(i,j)}) & = \Phi^{-2}(\Phi(\mc{V}_{i})\cap\mc{V}_j)\subsetneq\Phi^{-2}(\Phi(\mc{V}_{i}))= \Phi^{-1}(\mc{V}_{i})=\mc{H}_i.
\end{split}
\end{equation}
Similarly, we can duplicate the horizontal squares by setting
\begin{equation}
\label{iii.31}
\mc{H}_{(i,j)} :=\Phi^{-1}(\mc{H}_{j})\cap\mc{H}_i,\qquad i, j\in\{0,1\}.
\end{equation}
By construction, for every $i, j\in\{0,1\}$, $\mc{H}_{(i,j)}$ is an horizontal topological square connecting the lateral sides of $\mc{Q}$, i.e.,
linking $\{(x,y)\in\mc{Q}:x=0\}$ to  $\{(x,y)\in\mc{Q}:x=x_1\}$. By \eqref{iii.31}, we have that, for every $i, j\in\{0,1\}$,
\begin{equation}
\label{iii.32}
\begin{split}
 \Phi^0(\mc{H}_{(i,j)}) & = \mc{H}_{(i,j)} =\Phi^{-1}(\mc{H}_{j})\cap\mc{H}_i \subsetneq\mc{H}_{i}, \\
 \Phi(\mc{H}_{(i,j)}) & = \Phi(\Phi^{-1}(\mc{H}_{j})\cap\mc{H}_i)  \subsetneq
 \Phi(\Phi^{-1}(\mc{H}_{j})) = \mc{H}_{j}.
\end{split}
\end{equation}
Therefore, we can consider the $2^4=16$  topological squares in $\mc{Q}$ defined as
\begin{equation}
\label{iii.33}
\mc{Q}_{(s_{-2},s_{-1},s_0,s_1)}:=\mc{V}_{(s_{-2},s_{-1})}\cap\mc{H}_{(s_0,s_1)},\qquad s_{-2},s_{-1},s_0,s_1\in\{0,1\}.
\end{equation}
We claim that, for every $s_{-2}, s_{-1}, s_0, s_1\in\{0,1\}$,
\begin{equation}
\label{iii.34}
\Phi^{\k}(\mc{Q}_{(s_{-2},s_{-1},s_0,s_1)})\subsetneq \mc{H}_{s_{\kappa}}, \qquad \kappa\in\{-2,-1,0,1\}.
\end{equation}
Indeed, by \eqref{iii.33} and \eqref{iii.32}, we have that
$$
  \Phi( \mc{Q}_{(s_{-2},s_{-1},s_0,s_1)})\subsetneq \Phi(\mc{H}_{(s_0,s_1)}) \subsetneq \mc{H}_{s_1}.
$$
Thus, \eqref{iii.34} holds for $\kappa=1$. Moreover, by \eqref{iii.33} and \eqref{iii.31},
$$
  \Phi^0( \mc{Q}_{(s_{-2},s_{-1},s_0,s_1)})=\mc{Q}_{(s_{-2},s_{-1},s_0,s_1)}\subset
  \mc{H}_{(s_0,s_{-1})} \subset \mc{H}_{s_0},
$$
which establishes \eqref{iii.34} for $\kappa=0$.  Similarly, according to \eqref{iii.33} and \eqref{iii.30}
\begin{align*}
    \Phi^{-1}( \mc{Q}_{(s_{-2},s_{-1},s_0,s_1)})  \subsetneq \Phi^{-1}(\mc{V}_{(s_{-2},s_{-1})}) \subsetneq\mc{H}_{s_{-1}},
\end{align*}
and
\begin{equation*}
\Phi^{-2}( \mc{Q}_{(s_{-2},s_{-1},s_0,s_1)}) \subsetneq \Phi^{-2}(\mc{V}_{(s_{-2},s_{-1})}) \subsetneq
\mc{H}_{s_{-2}},
\end{equation*}
which shows \eqref{iii.34} for $\kappa=-1,-2$, which ends the proof of \eqref{iii.34}.
\par
The first picture of Figure \ref{fig-xvii} represents the sixteen topological squares
$\mc{Q}_{(s_{-2},s_{-1},s_0,s_1)}$, $s_i\in \{0,1\}$, $i\in \{-2,-1,0,1\}$, as defined by \eqref{iii.33},
while the second one  provides a magnification of $\mc{Q}_{(0,0)}$  where  the dyadic fractal behavior of this process can be appreciated.
\begin{figure}[h!]
\centering
\includegraphics[scale=0.7]{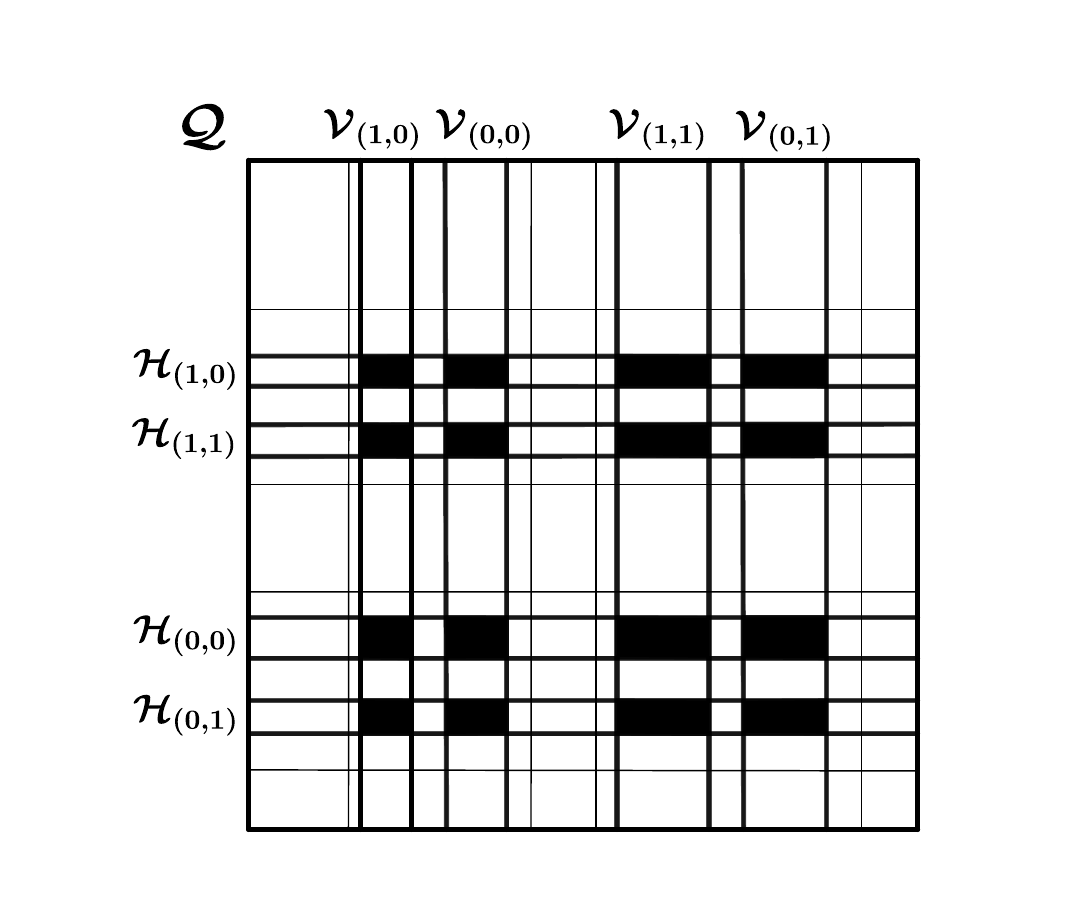}\qquad
\includegraphics[scale=0.2]{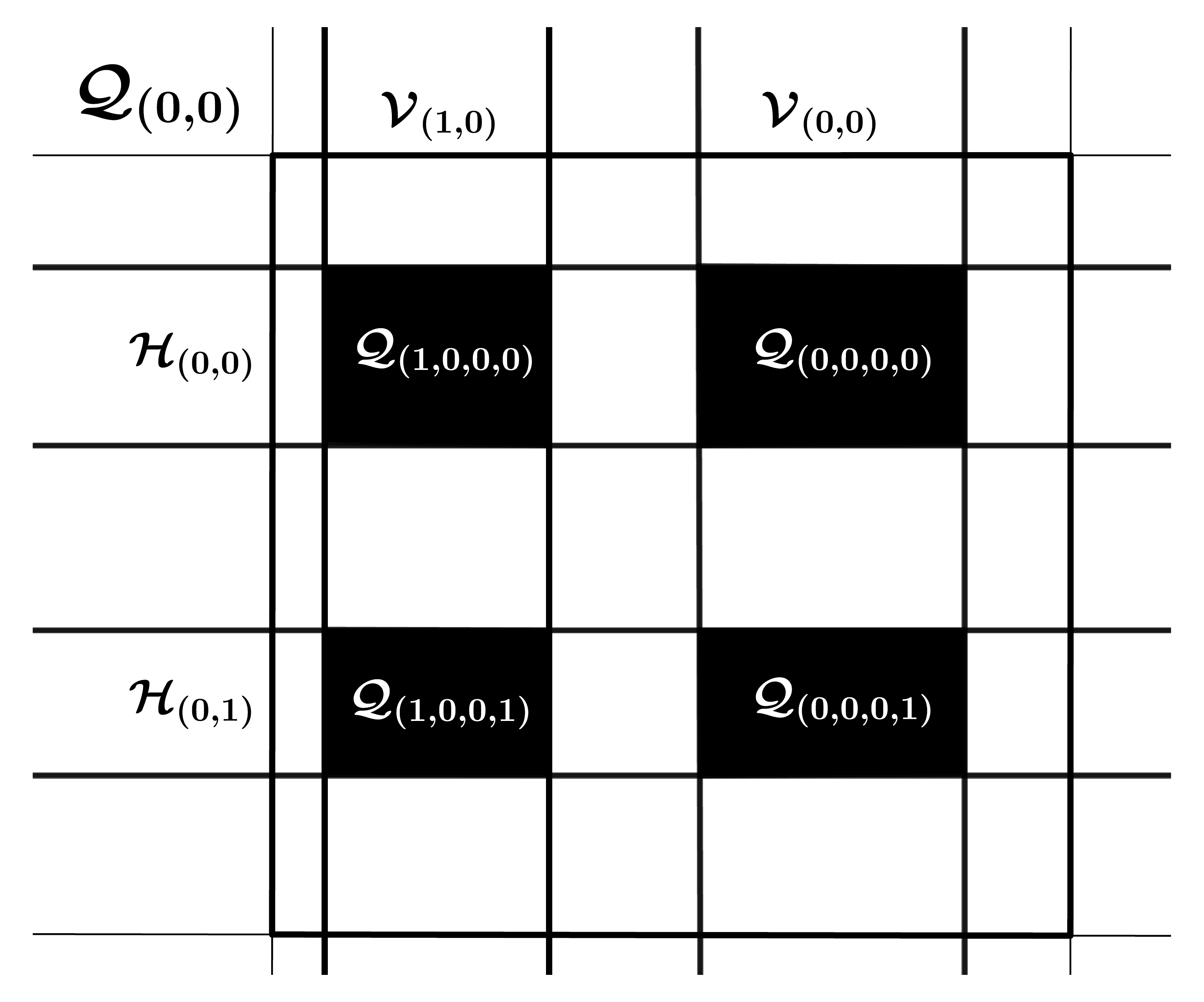}
\caption{The squares $\mc{Q}_{(s_{-2},s_{-1},s_0,s_1)}$ (left), and a zoom of $\mc{Q}_{(0,0)}$  (right).}
\label{fig-xvii}
\end{figure}
\par
Iterating $m$ times the previous process, it turns out that we can generate $2^m$ vertical rectangles. Namely, for $v_m:=(s_{-m},\ldots, s_{-2},s_{-1})\in\{0,1\}^m$,
\[
\mc{V}_{v_m}:= \Phi(\mc{V}_{(s_{-m},\ldots, s_{-2})})\cap\mc{V}_{s_{-1}}=\Phi(\stackrel{m}{\cdots}(\Phi(\mc{V}_{s_{-m}})\cap\mc{V}_{s_{-m+1}})\cdots)\cap \mc{V}_{-2})\cap\mc{V}_{-1}.
\]
Thus, by definition, for $\k\in\{-1,-2,\ldots,-m\}$,
\begin{equation}
\label{iii.35}
\Phi^{\k}(\mc{V}_{v_m})\subsetneq\Phi^{-1}(\mc{V}_{s_{\k}})=\mc{H}_{s_{\k}}.
\end{equation}
Similarly, there exists $2^m$ horizontal rectangles such that for $h_m=(s_{0},s_{1},\ldots, s_{m-1})\in\{0,1\}^m$,
\[
\mc{H}_{h_m}:= \Phi(\mc{H}_{(s_{1},\ldots, s_{m-1})})\cap\mc{H}_{s_{0}}=\Phi^{-1}(\stackrel{m-1}{\cdots}(\Phi^{-1}(\mc{H}_{s_{m-1}})\cap\mc{H}_{s_{m-2}})\cdots)\cap \mc{H}_{s_1})\cap\mc{H}_{s_0}.
\]
and, hence, for $\kappa\in\{0,1,\ldots,m-1\}$,
\begin{equation}
\label{iii.36}
\Phi^{\k}(\mc{H}_{h_m})\subsetneq\mc{H}_{s_{\k}}.
\end{equation}
Figure \ref{fig-xviii} shows the steps $m=2$ and $m=3$ of this
process for both the vertical and horizontal squares.

\begin{figure}[h!]
\centerline{\includegraphics[scale=0.7]{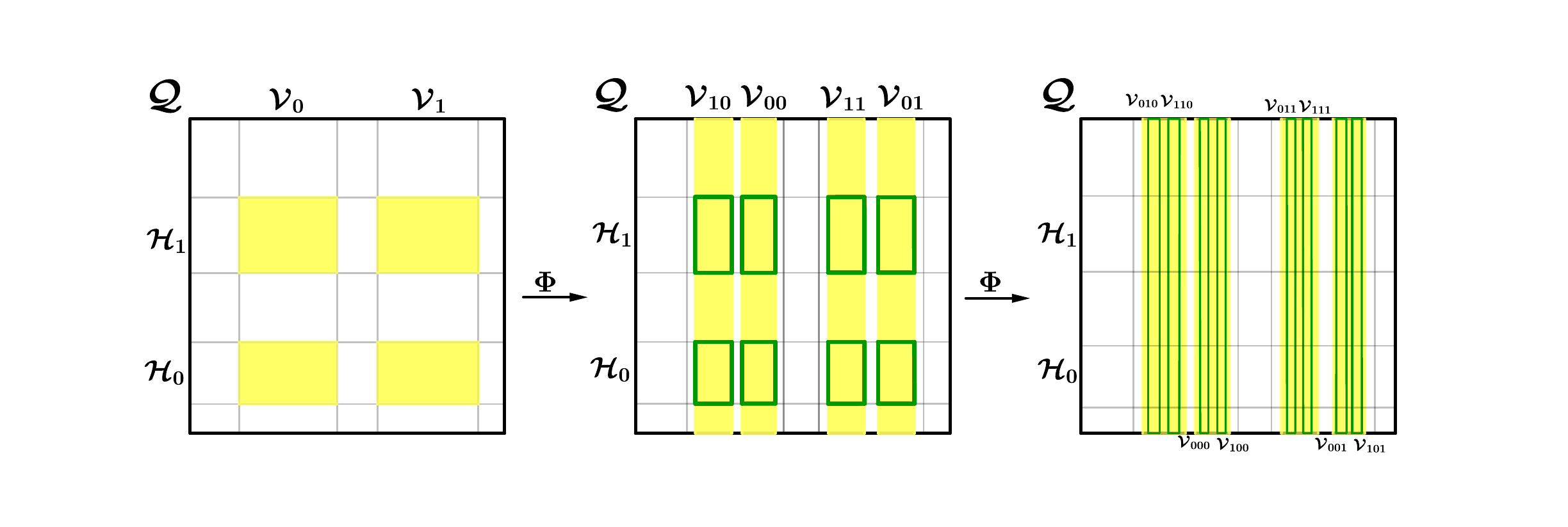}}
\centerline{\includegraphics[scale=0.7]{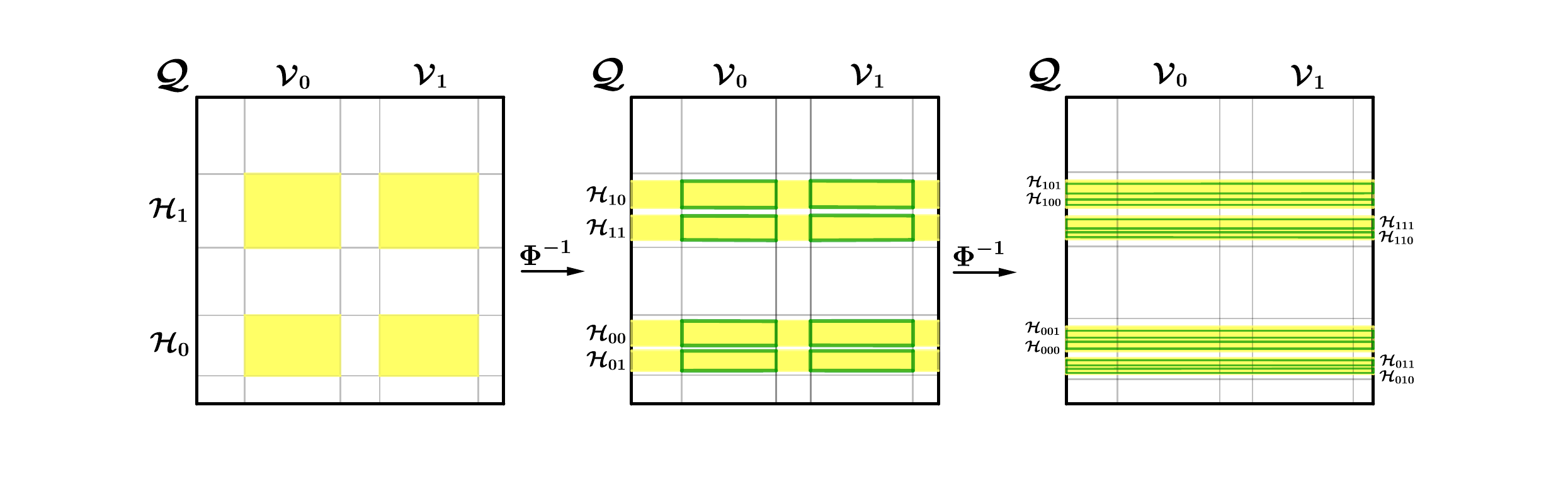}}
\caption{Dyadic fragmentation in horizontal and vertical lines.}
\label{fig-xviii}
\end{figure}
\par
Then, setting
$$
   \mc{Q}_{\Sigma_m}:=\mc{V}_{v_m}\cap\mc{H}_{h_m},
$$
where
$$
   \Sigma_m:= (v_m,h_m)=(s_{-m},\ldots,s_{-1},s_0,\ldots,s_{m-1})\in\{0,1\}^{2m},
$$
it becomes apparent that, by \eqref{iii.35} and \eqref{iii.36}, for every  integer $\kappa\in[-m,m-1]$,
\[
\Phi^{\k}(\mc{Q}_{\Sigma_m})\subset \mc{H}_{s_{\kappa}}.
\]
As we have constructed two sequences of nonempty compact, connected nested topological squares, $\mc{V}_{v_m}$, $\mc{H}_{h_m}$, by the Cantor principle,
\[
\mc{V}_{s_v}:=\lim_{m\to+\infty}\mc{V}_{v_m}\quad\hbox{and}\quad\mc{H}_{s_h}:=\lim_{m\to+\infty}\mc{H}_{h_m}
\]
are two nonempty continua with
$$
  s_v=(\ldots,s_{-2},s_{-1})\in\{0,1\}^\N,\quad   s_h=(s_0,s_1,\ldots)\in\{0,1\}^\N.
$$
Moreover, by the transversality assumptions,
the intersection $\mc{V}_{s_v}\cap\mc{H}_{s_h}$ is a point and $(s_v,s_h)\in\{0,1\}^\Z$. Furthermore,
\[
\Lambda:=\bigcup_{s_v,s_h\in\{0,1\}^\N}\mc{V}_{s_v}\cap\mc{H}_{s_h}
\]
is the invariant set of $\Phi$. Therefore,  thanks to the choice
of the labels $s_v$ and $s_h$ that we have done, it becomes
apparent that $\Phi$ is conjugated to the Bernoulli full shift in
two symbols.

We conclude this section with some geometrical and numerical
schemes illustrating the actual occurrence of the theoretical
horseshoe framework described above. To simplify the exposition,
we consider the case of an annular region under the effect of the
composition of a twist map with a vertical shear map. Even if the
geometry of the level lines associated with the predator-prey
system does not consist of circumferences, nonetheless this can be
assumed as a reasonable approximation if we consider small
orbits around the equilibrium point (as in \cite{Liu-1995}) or we
suppose to have performed an action-angle transformation leading to
an equivalent planar system where the radial component is
constant.

The following pictures describe the geometric effect of the composition
of a twist map $\Phi_{T_0}$ and a vertical shear map $\Phi_{T_1}$
on a domain in the first quadrant
which is defined like the set $\mc{Q}$ in Section \ref{sec3.1}, and denoted again by $\mc{Q}$. Actually,
we describe the effect of the maps and their composition $\Phi=\Phi_{T_1}\circ\Phi_{T_0}$,
with respect to the homeomorphic unit square
$[0,1]^2$, considered as a reference domain, where we can transfer the geometry.
Figure \ref{fig-xix} shows $\mc{Q}$, its image through a twist map
(with a sufficiently large twist, corresponding to a sufficiently large time $T_0$)
and the effect on the unit square. More precisely,
if we denote by $\eta=(\eta_1,\eta_2)$ the homeomorphism from the unit square to $\mc{Q}$,
the two bands in the third panel of Figure \ref{fig-xix} represent the
sets $[0,1]^2\cap (\hat{\eta}^{-1}\circ\Phi_{T_0}\circ \eta)^{-1}([0,1]^2)$ where $\hat{\eta}=(\eta_1,-\eta_2)$  is the homeomorphism from the unit square to
$\mc{R}$, which is the target set in the fourth quadrant symmetric to $\mc{Q}$. As a next step, Figure \ref{fig-xx} shows the effect on the vertical shear mapping on the
set $\mc{R}$ and we also show the set of points in the unit square which are mapped
into $\mc{Q}$, that is
$[0,1]^2\cap (\eta^{-1}\circ\Phi_{T_1}\circ \hat{\eta})^{-1}([0,1]^2)$.

\begin{figure}[h!]
\centering {\includegraphics[scale=0.3]{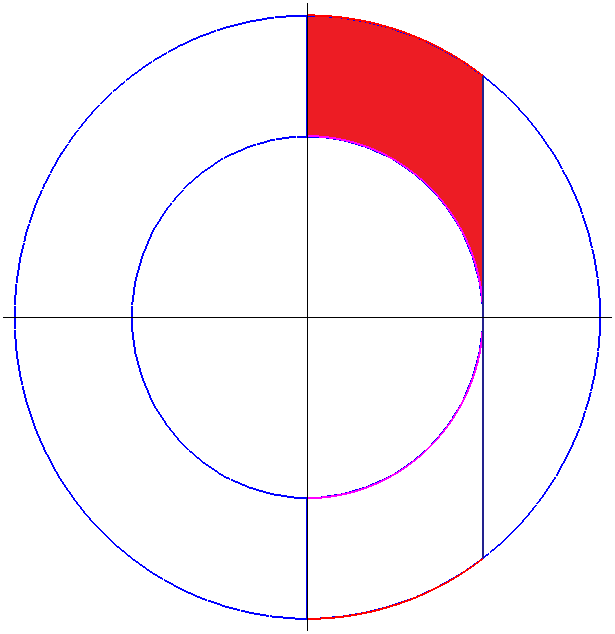}}\quad
{\includegraphics[scale=0.3]{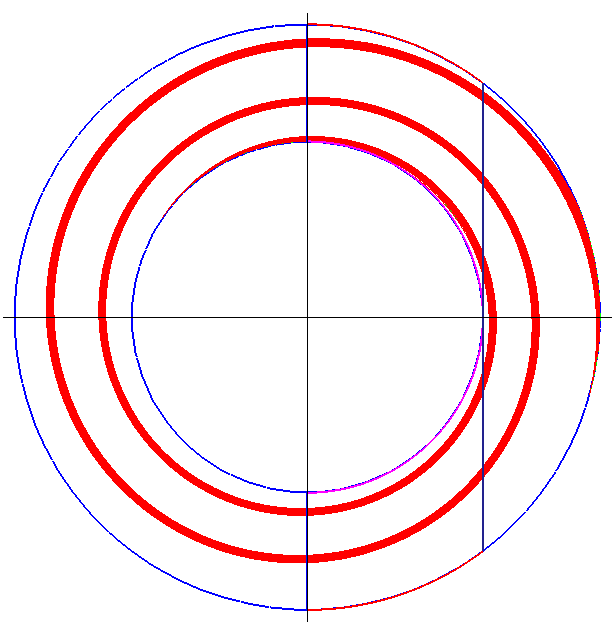}}\quad
{\includegraphics[scale=0.35]{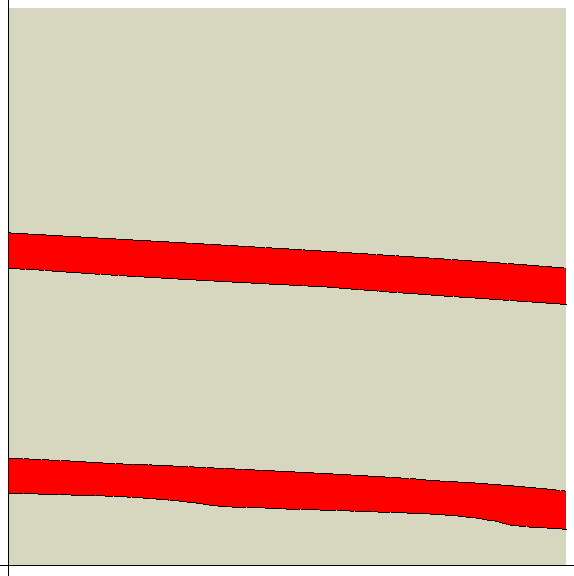}}
\caption{Example of an annular region under the action of a twist map. From left to right:
a rectangular domain $\mc{Q}$ in the first quadrant, its evolution by a twist map,
the part of the unit square (homeomorphic to $\mc{Q}$) which is mapped to the
region in the fourth quadrant which is symmetric to $\mc{Q}$.
} \label{fig-xix}
\end{figure}

\begin{figure}[h!]
\centering {\includegraphics[scale=0.5]{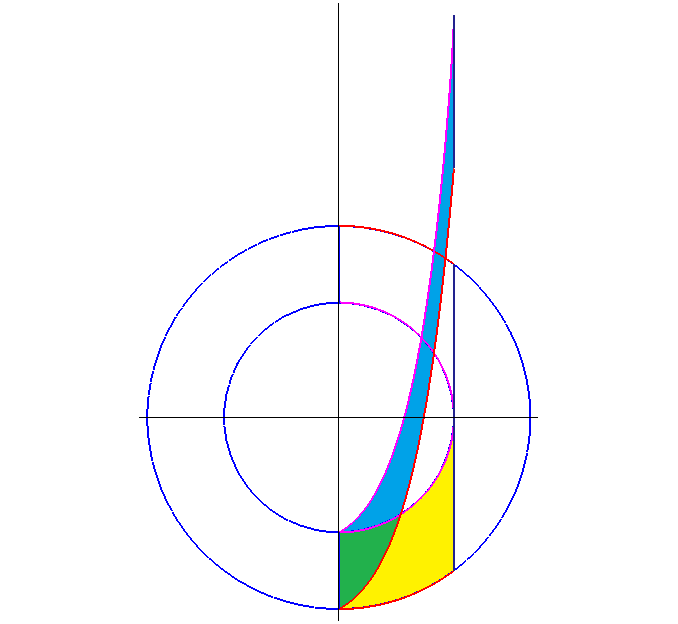}}\quad
{\includegraphics[scale=0.35]{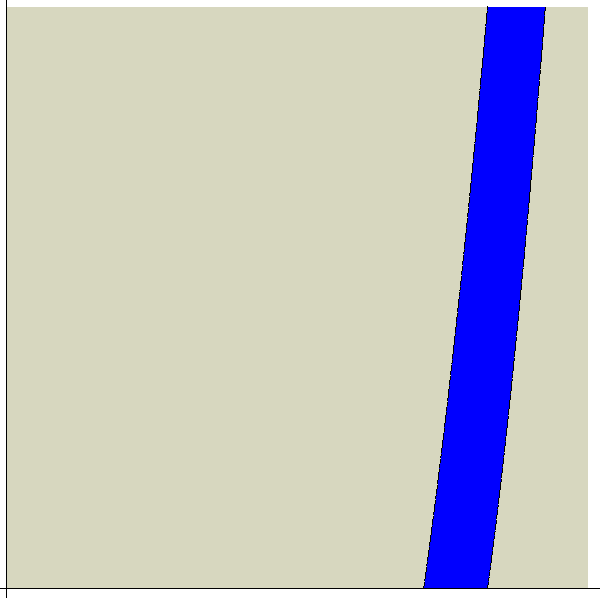}}
\caption{The region in the fourth quadrant (symmetric to $\mc{Q}$) and its transformation
by a vertical shear map as in Section \ref{sec3.1} (left panel). The right panel put
in evidence, with respect to the homeomorphic unit square, the part of the region
which arrives to $\mc{Q}$.} \label{fig-xx}
\end{figure}

Finally, Figure \ref{fig-xxi} puts in evidence the set of points of $\mc{Q}$,
represented in the unit square, which come back to $\mc{Q}$ after $\Phi=\Phi_{T_1}\circ\Phi_{T_0}$, that is $[0,1]^2\cap(\eta^{-1}\circ \Phi\circ \eta)^{-1}([0,1]^2).$
Obviously, this is a subset of the two bands domain appearing in the
third panel of Figure \ref{fig-xix}.

\begin{figure}[h!]
\centering {\includegraphics[scale=0.5]{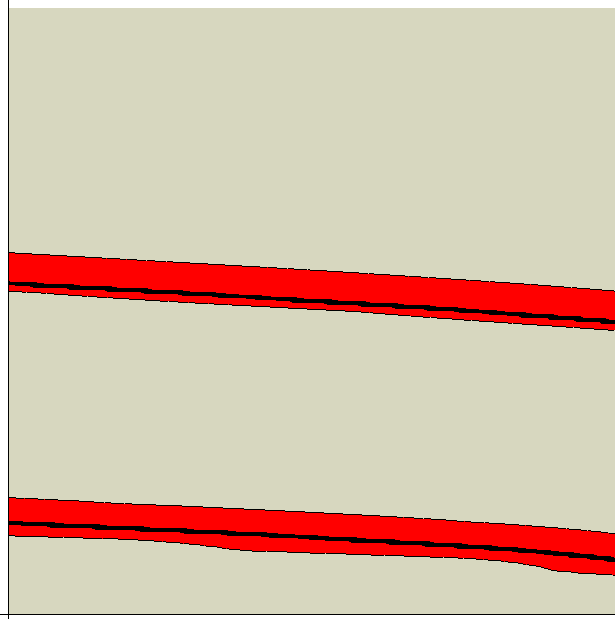}}
\caption{The final effect of the twist map and the vertical shear map, as viewed from the
unit square. The larger horizontal bands represent the set of points in $\mc{Q}$ which are moved to $\mc{R}$ under the action of the twist map.
The narrow darker bands represent the set of points which come back to $\mc{Q}$
after the application of the vertical shear map.} \label{fig-xxi}
\end{figure}

If we consider the inverse homeomorphism $\Phi^{-1}= \Phi_{T_0}^{-1}\circ \Phi_{T_1}^{-1}$
and look for the sets of points of $\mc{Q}$ which remain in the domain after
the first iteration, we can repeat, symmetrically the same argument as before.
As a first step, passing to the unit square, we will consider the set of the points
$w\in[0,1]^2$ such that $(\hat{\eta}^{-1}\circ\Phi_{T_1}^{-1}\circ \eta)(w)\in [0,1]^2$, which clearly equals $[0,1]^2\cap (\eta^{-1}\circ\Phi_{T_1}\circ \hat{\eta})^{-1}([0,1]^2)$, that is, the blue set appearing in the second panel of Figure \ref{fig-xx}. Next
we will consider the set of points of $\mc{R}$ which belong to $\mc{Q}$
after the action of the inverse twist $\Phi_{T_0}^{-1}$, which transferred to the unit square is $[0,1]^2\cap (\eta^{-1}\circ\Phi_{T_0}^{-1}\circ \hat{\eta})^{-1}([0,1]^2)$. This is exactly the set described in the third panel of Figure \ref{fig-xix}.
At the end, the set
$[0,1]^2\cap(\eta^{-1}\circ \Phi^{-1}\circ \eta)^{-1}([0,1]^2)$
will be made by two narrow bands inside the rectangular region in the second panel of
Figure \ref{fig-xx}.

From these numerical outcomes it is apparent that the ``real dynamics'' follows precisely the
abstract scheme of the Smale's horseshoe map.

%\nocite{*}
\bibliographystyle{plain}
{\small{
\bibliography{LG-MH-Z4_biblio}
}}
%\begin{thebibliography}{99}
%\end{thebibliography}

\end{document}